\documentclass[runningheads]{llncs}
\usepackage[T1]{fontenc}
\usepackage{graphicx}
\usepackage{a4wide}
\usepackage{amsmath}

\usepackage{amsthm}
\usepackage{mathtools}
\usepackage{amssymb,amsfonts,latexsym}

\usepackage{tabularx,environ}
\usepackage{enumerate}
\usepackage{bm}
\usepackage{algorithm}
\usepackage{algpseudocode}
\usepackage{paralist}

\usepackage{tikz}
\usepackage{tkz-euclide}
\usetikzlibrary{calc,positioning,shapes}
\usetikzlibrary{positioning}
\usetikzlibrary{patterns}
\usepackage[colorlinks,citecolor=blue,bookmarks=true,linktocpage,hypertexnames=false]{hyperref}
\usepackage[mathlines]{lineno}
\usepackage[capitalise]{cleveref}
\usepackage{complexity}
\usepackage{paralist}

\usepackage{multibib}
\newcites{app}{Additional references in the appendix}

\theoremstyle{definition}
\newtheorem{redrule}{Reduction rule}
\newtheorem{brrule}{Branching rule}
\newcommand{\auxDP}{\mathrm{DP}}
\newcommand{\prb}[1]{\textnormal{\scshape #1}}

\usepackage{thm-restate}
\usepackage{apptools}
\newcommand{\restateref}[1]{\IfAppendix{\hyperref[#1]{$\star$}}{\hyperref[#1*]{$\star$}}}

\newcommand{\prbGen}{\textsc{Parity Domination}}
\newcommand{\prbOD}{\textsc{Odd Domination}}

\newcommand{\prbOinT}{\textsc{$1$-In-$3$-SAT}}
\newcommand{\cw}{\mathrm{cw}}
\newcommand{\pw}{\mathrm{pw}}
\newcommand{\vc}{\mathrm{vc}}
\newcommand{\td}{\mathrm{td}}
\newcommand{\fvs}{\mathrm{fvs}}
\newcommand{\cvd}{\mathrm{cvd}}
\newcommand{\tw}{\mathrm{tw}}

\newcommand{\scno}{\textsc{No}}
\newcommand{\scyes}{\textsc{Yes}}
\newcommand{\redcol}{\textsf{red}}
\newcommand{\bluecol}{\textsf{blue}}

\newcommand{\dpvec}{\mathbf{a}}

\newcommand{\id}[1]{\textsf{\detokenize{#1}}}
\newcommand{\true}{\mathsf{true}}
\newcommand{\false}{\mathsf{false}}

\newcommand{\Vopen}{V_{\mathsf{o}}}
\newcommand{\Vclose}{V_{\mathsf{c}}}
\newcommand{\PSI}{\prb{Partitioned Subgraph Isomorphism}}

\crefname{redrule}{Reduction Rule}{Reduction Rules}
\crefname{brrule}{Branching Rule}{Branching Rules}
\crefname{appendix}{appendix}{appendices}
\Crefname{appendix}{Appendix}{Appendices}

\newcommand{\rev}[1]{#1}
\newcommand{\revn}[1]{#1}
\usepackage{orcidlink}

\begin{document}
\title{Parameterized Complexity of Odd Domination and its Generalization}

\titlerunning{Parameterized Complexity of Odd Domination and its Generalization}
%
\author{
Toranosuke Kokai\inst{1}\orcidlink{0009-0009-3464-1213} \and
Rin Saito\inst{1}\orcidlink{0000-0002-3953-4339} \and
Tatsuhiro Suga\inst{1}\orcidlink{0009-0002-1376-4678}\and
Takahiro Suzuki\inst{1}\orcidlink{0009-0005-8433-3789} \and
Yuma Tamura \inst{1}\orcidlink{0009-0001-5479-7006}
}
\authorrunning{
T. Kokai et al.
}
%
\institute{
Graduate School of Information Sciences, Tohoku University, Sendai, Japan\\
\email{\{kokai.toranosuke.s6, rin.saito, suga.tatsuhiro.p5, takahiro.suzuki.q4\}@dc.tohoku.ac.jp, tamura@tohoku.ac.jp}
}
\maketitle              
\begin{abstract}
In the \textsc{Odd Domination} problem, given a graph $G$ and a positive integer $k$, the task is to determine whether there exists a vertex subset $D$ of $G$ such that the closed neighborhood of each vertex in $G$ contains an odd number of vertices from $D$.
In this paper, we investigate the computational complexity of the problem.
When parameterized by the solution size $k$, we establish $\W[1]$-hardness on some restricted graphs and a sharp boundary between fixed-parameter tractability and $\W[1]$-hardness with respect to the girth of the input graph.
Then, we address the problem when parameterized by several structural graph parameters.
Furthermore, we investigate the parameterized complexity of \textsc{Parity Domination}, which is a generalization of \textsc{Odd Domination}.

\keywords{Odd dominating set \and Parameterized complexity \and Fixed parameter tractability.}
\end{abstract}

\section{Introduction} \label{sec:intro}

The \prb{Dominating Set} problem is a fundamental problem in algorithmic theory.
Given a graph $G = (V,E)$ and an integer $k \geq 0$, the task is to decide whether there exists a vertex set $D \subseteq V$ of size at most $k$ such that every vertex of $G$ is contained in the closed neighborhood of some vertex in $D$.
The problem is $\NP$-complete and has been central to the development of parameterized complexity theory~\cite{book:CyganFKLMPPS15}.

One of the generalizations of many domination-type problems is the framework of $(\sigma,\rho)$-dominating sets, introduced in \cite{GenDS:Telle94,GenDS:TelleP93}.
Let $\sigma, \rho \subseteq \mathbb{Z}_{\geq 0}$.
A set $D \subseteq V$ is a $(\sigma,\rho)$-dominating set if
\begin{itemize} 
    \item for every vertex $v \in D$, we have $|N_G(v) \cap D| \in \sigma$, and
    \item for every vertex $v \notin D$, we have $|N_G(v) \cap D| \in \rho$.
\end{itemize}
This framework captures a wide range of classical graph problems, including \prb{Dominating Set}, \prb{Independent Set}, \prb{Perfect Code}, and \prb{Induced Bounded-Degree Subgraph}. 

For many choices of $\sigma$ and $\rho$, the associated optimization problem is $\NP$-hard.
Its classical and parameterized complexity has therefore been studied extensively, typically with respect to the solution size $k$~\cite{GenDS:GolovachKS12,GenDS:MeybodiFMP20} and structural graph parameters~\cite{GenDS:Bui-XuanTV13,GenDS:FockeMINSSW25-1,GenDS:FockeMINSSW25-2,GenDS:JaffkeKST19,GenDS:RooijBR09}.
In particular, when both $\sigma$ and $\rho$ are finite or cofinite (i.e., their complements are finite), the complexity landscape is well understood in many settings~\cite{GenDS:Bui-XuanTV13,GenDS:JaffkeKST19,GenDS:RooijBR09}.

However, several natural problems arise even when $\sigma$ and $\rho$ are neither finite nor cofinite.
A canonical example is the \prb{Odd Domination} problem~\cite{GenDom:GassnerH08,GenDom:Sutner88a}, defined by
$\sigma = \{0,2,4,\ldots\}$ and $\rho = \{1,3,5,\ldots\}$, 
\revn{that is, the sets of even nonnegative integers and positive odd integers, respectively.}
Equivalently, given a graph $G=(V,E)$ and an integer $k \ge 0$, the task is to decide whether there exists a set $D \subseteq V$ of size at most $k$ such that every vertex of $G$ has an odd number of vertices from $D$ in its closed neighborhood.




\paragraph{Our contributions.}
We investigate the classical and parameterized complexity of \prbOD.
(See also \cref{fig:results}.)

We begin with classical complexity of \prbOD\ on restricted graph classes.
Broersma and Li showed that \prbOD\ is $\NP$-complete on bipartite graphs~\cite{OddDom:BroersmaL07}.
We show that $\NP$-completeness holds even under strong structural restrictions:
\prbOD\ is $\NP$-complete on planar bipartite graphs \revn{of maximum degree $3$} with girth at least $g$, for any fixed integer $g \ge 3$.
This result strengthens the known hardness results for \prbOD\ previously established on bipartite graphs~\cite{OddDom:BroersmaL07} and on planar graphs of maximum degree~$3$~\cite{Allcols:BlazejJS26}, and answers an open question posed by Broersma and Li~\cite{OddDom:BroersmaL07}.

We next consider parameterized complexity with respect to the solution size $k$.
On general graphs, \prbOD\ is $\W[1]$-hard when parameterized by $k$~\cite{Allcols:BlazejJS26}.
We strengthen this result by showing that \prbOD\ is $\W[1]$-hard even on bipartite graphs of girth~$4$ and degeneracy~$4$, and also on split graphs.
We then establish a sharp complexity boundary with respect to girth.
While \prbOD\ is $\W[1]$-hard on graphs of girth~$4$, we show that it becomes fixed-parameter tractable on graphs of girth at least~$5$, when parameterized by $k$.

\usetikzlibrary{arrows.meta, positioning, shapes.multipart, calc}

\begin{figure}[t]
\centering
\scalebox{0.85}{
\begin{tikzpicture}

    \node[text width=1.3cm](bounded treewidth) at (0.5, 2){bounded treewidth};
    \node[text width=1.3cm](planar) at (3, 2){planar [Thm. \ref{thm:NPcomp_planar_bipartite}]};
    \node[text width=1.5cm](bounded maximum degree) at (5, 2){bounded maximum degree [Thm. \ref{thm:NPcomp_planar_bipartite}]};
    \node[text width=1.8cm](bounded clique-width) at (0, 4){bounded clique-width [Thm. \ref{thm:cliquew}]};
    \node[text width=1.5cm](bounded expansion) at (3, 4){bounded expansion};
    \node[text width=1.6cm](bounded degeneracy) at (1, 6){bounded degeneracy [Thm. \ref{thm:hard_k}]};
    \node[text width=1.6cm](nowhere dense) at (4.5, 6){nowhere dense [Thm. \ref{thm:FPTnowheredense}]};
    \node[](biclique-free) at (3, 7.5){biclique-free};

    \draw[very thick, ->] (bounded treewidth) -> (bounded clique-width);
    \draw[very thick, ->] (bounded treewidth) -> (bounded expansion);
    \draw[very thick, ->] (planar) -> (bounded expansion);
    \draw[very thick, ->] (bounded maximum degree) -> (bounded expansion);
    \draw[very thick, ->] (bounded expansion) -> (bounded degeneracy);
    \draw[very thick, ->] (bounded expansion) -> (nowhere dense);
    \draw[very thick, ->] (bounded degeneracy) -> (biclique-free);
    \draw[very thick, ->](nowhere dense) -> (biclique-free);

    \draw[very thick, dotted, blue] (-1,4.8) -- (1.5,4.8) -- (1.5,0.5);
    \node[blue] at (1.2,1) {$\P$};
    \node[blue] at (2.3,1) {$\NP$-hard};

    \draw[very thick, dashed, red] (-1,5) -- (3,5) -- (3,7) -- (5.8, 7);
    \node[red] at (5.2,7.3) {W[1]-hard};
    \node[red] at (5.5,6.7) {FPT};

    \node[] at (3,0) {(a)};

    \draw[black, thick] (6.1, 0) -- (6.1,8); 

    \node[] (vc) at (8.5,1){vc};
    \node[] (td) at (10,3){td};
    \node[text width=1.25cm] (td_thm) at (10,2.6){[Thm.~\ref{thm:vc_no_poly_kernel}]};
    \node[] (pw) at (10,4){pw};
    \node[] (tw) at (9,5){tw};
    \node[] (fvs) at (8.5,3.5){fvs};
    \node[text width=1.25cm] (fvs_thm) at (8.5,3.1){[Thm.~\ref{thm:vc_no_poly_kernel}]};
    \node[] (cvd) at (7,4.5){cvd};
    \node[text width=1.25cm] (cvd_thm) at (7,4.1){[Thm.~\ref{thm:vc_no_poly_kernel}]};
    \node[] (cw) at (8,6.5){cw};
    \node[text width=1.25cm] (cw_thm) at (8, 6.1){[Thm.~\ref{thm:cliquew}]};
    \node[] (dgn) at (10.5,7){dgn};
    \node[text width=1.25cm] (dgn_thm) at (10.5, 6.6){[Thm.~\ref{thm:hard_k}]};

    \draw[very thick, ->] (vc) -> (td_thm);
    \draw[very thick, ->] (vc) -> (fvs_thm);
    \draw[very thick, ->] (vc) -> (cvd_thm);
    \draw[very thick, ->] (td) -> (pw);
    \draw[very thick, ->] (fvs) -> (tw);
    \draw[very thick, ->] (pw) -> (tw);
    \draw[very thick, ->] (cvd) -> (cw_thm););
    \draw[very thick, ->] (tw) -> (cw_thm);
    \draw[very thick, ->] (tw) -> (dgn_thm);

    \draw[very thick, dotted, violet] (6.5,1.5) -- (11.5, 1.5);
    \node[violet] at (10.75,1.8){no poly. kernel};
    \node[violet] at (10.1,1.2){open};

    \draw[very thick, dashed, orange] (6.5,7) -- (9,7) -- (9,6) -- (11.5,6);
    \node[orange] at (7.5,7.3){para-$\NP$-hard};
    \node[orange] at (7,6.7){FPT};

    \node[] at (9, 0){(b)};
\end{tikzpicture}
}
\caption{
Summary of our results for \prbOD. 
(a) Relationships among the graph classes considered in this paper.
An arrow $A \to B$ indicates that $A$ is a proper subclass of $B$.
The dotted (blue) line separates polynomial-time solvability from $\NP$-hardness, while the dashed (red) line separates fixed-parameter tractability from $\W[1]$-hardness with respect to the solution size.
(b) Relationships among the graph parameters considered in this paper.
The abbreviations $\mathrm{dgn}$, $\mathrm{cw}$, $\mathrm{tw}$, $\mathrm{cvd}$, $\mathrm{fvs}$, $\mathrm{pw}$, $\mathrm{td}$, and $\mathrm{vc}$ denote the degeneracy, clique-width, treewidth, cluster vertex deletion number, feedback vertex set number, pathwidth, tree-depth, and vertex cover number, respectively.
An arrow $A \to B$ indicates that $A$ is stronger than $B$, that is, if $A$ is bounded by a constant, then so is $B$.
The dashed (orange) line separates fixed-parameter tractability from para-$\NP$-hardness for the corresponding parameters.
The dotted (purple) line indicates that the problem admits no polynomial kernel for parameters above the line, whereas the existence of a polynomial kernel parameterized by $\mathrm{vc}$ remains open.
}
\label{fig:results}
\end{figure}

Our $\W[1]$-hardness result is noteworthy from the viewpoint of sparse \rev{graph classes} \rev{(see again \Cref{fig:results}(a))}.
In contrast, \prb{Dominating Set} is known to be fixed-parameter tractable when parameterized by $k+d$~\cite{DS:AlonG09}, where $d$ denotes the degeneracy.
Moreover, our result complements that \prbOD\ is fixed-parameter tractable on nowhere dense graph classes~\cite{GenDS:GolovachKS12}, which constitute a broad family of sparse graphs and are incomparable with classes of bounded degeneracy~\cite{NesetrilM11a}.

We also investigate structural parameterizations of \prbOD.
For clique-width $\cw$, the previously known approach, derived from results on the problem called \prb{All-Colors}, yields an $O^*(16^{\cw})$-time algorithm for \prbOD~\cite{Allcols:BlazejJS26}.
We substantially reduce the base of the exponent from $16$ to $4$, obtaining an $O^*(4^{\cw})$-time algorithm for \prbOD.

We complement our algorithmic results with kernelization lower bounds.
\rev{Specifically}, we prove that \prbOD\ admits no polynomial kernel when parameterized by the cluster deletion number, tree-depth, or feedback vertex set number (\rev{see \Cref{fig:results}(b)}).
These lower bounds partially resolve an open question raised in~\cite{Allcols:BlazejJS26}.

We also study the complexity of a generalization of \prbOD\ called the \prbGen\ problem, introduced by Gassner and Hatzl~\cite{GenDom:GassnerH08}.
An instance of \prbGen\ consists of a graph $G=(V,E)$, an \emph{offset function} $f \colon V \to \{0,1\}$, a partition $(\Vopen,\Vclose)$ of $V$, and an integer $k \ge 0$.
The question is whether there exists a set $D \subseteq V$ with $|D| \le k$ such that every vertex $v\in V$ satisfies the \emph{parity condition}, defined as follows:
\begin{itemize}
    \item $|N_G(v) \cap D| + f(v) \equiv 0 \pmod{2}$ for every $v \in \Vopen$, and
    \item $|N_G[v] \cap D| + f(v) \equiv 0 \pmod{2}$ for every $v \in \Vclose$.
\end{itemize}

We extend several of our results on \prbOD\ to \prbGen.
In particular, we show that \prbGen\ is fixed-parameter tractable on nowhere dense graph classes when parameterized by $k$, and admits an algorithm running in time $4^{\cw} \cdot n^{O(1)}$.
Furthermore, unless $\coNP \subseteq \NP/\text{poly}$, \prbGen\ admits no polynomial kernel parameterized by the vertex cover number.
This parameter is more restrictive than any parameter considered in our kernelization lower bounds for \prbOD.

\paragraph*{Related work.}
\label{subsec:relwork}
    
It is known that {\prbGen} can be solved in linear time on distance-hereditary graphs and on graphs of bounded treewidth~\cite{GenDom:HatzlW08}, both of which are graph classes of bounded clique-width.

The special case of \prbGen\ with $(\Vopen, \Vclose) = (\emptyset, V(G))$  and a related generalization of \prbOD\ have been studied under the name \textsc{Lights Out!} on graphs from graph-algorithmic and graph-theoretic perspectives~\cite{GenDom:HatzlW08}, as well as from an algebraic perspective~\cite{GenDom:AndersonF98}.
\textsc{Lights Out!} on graphs is derived from the well-known puzzle game of the same name, originally played on a $5 \times 5$ grid.
Each cell has a binary state (ON or OFF), and pressing a button toggles the state of the corresponding cell and all of its neighbors.
\textsc{Lights Out!} can be regarded as a $\sigma$-game~\cite{GenDom:Sutner88a,SigmaGame:Sutner90}, which appears in the context of cellular automata; see the survey by Fleischer et al.~\cite{LightsoutSurvey:FleischerY13} for details.

In the standard \textsc{Lights Out!} setting, each button is either pressed or not pressed, since pressing the same button twice has no effect.
A natural extension is the \prb{All-Colors} problem~\cite{AllCols/ZhangW19a}, where each cell has one of $m$ states and a button may be pressed multiple times. 
Blazej et al.~\cite{Allcols:BlazejJS26} proved that \prb{All-Colors} is $\NP$-complete on planar subcubic graphs established lower bounds based on the Exponential Time Hypothesis (ETH) with respect to the solution size.

Another generalization of \prbOD\ is \prb{Residue Domination}~\cite{ResDom:GreilhuberSW25}, which asks for a $(\sigma,\rho)$-dominating set of size at most $k$ where $\sigma$ and $\rho$ are residue classes modulo $\mathrm{m}\ge 2$.
More precisely, there exist integers $a,b \in \{0,1,\ldots,\mathrm{m}-1\}$ such that
\[
    \sigma = \{x \in \mathbb{Z}_{\geq 0} \mid x \equiv a \pmod{\mathrm{m}}\}
    \quad \text{and} \quad
    \rho = \{x \in \mathbb{Z}_{\geq 0} \mid x \equiv b \pmod{\mathrm{m}}\}.
\]
Greilhuber et al.~\cite{ResDom:GreilhuberSW25} showed that \prb{Residue Domination} can be solved in $\mathrm{m}^{\tw} \cdot n^{O(1)}$ time, where $\tw$ denotes the treewidth of the input graph.
They also proved that, under the Strong Exponential-Time Hypothesis (SETH), the problem admits no $(\mathrm{m}-\varepsilon)^{\pw} \cdot n^{O(1)}$-time algorithm for any $\varepsilon > 0$, where $\pw$ denotes the pathwidth of the input graph.
Consequently, these results imply that \prbOD\ admits an $O^*(2^{\tw})$-time algorithm, but no $(2-\varepsilon)^{\pw} \cdot n^{O(1)}$-time algorithm unless SETH fails.

\paragraph*{Organization.}
In \Cref{sec:preliminaries}, we provide preliminaries and define our problems.
In \Cref{sec:NP-completeness,sec:W1-hardness,sec:FPTgirth5}, we focus on $\prbOD$.
Specifically, we present $\NP$-completeness in \Cref{sec:NP-completeness}, the $\W[1]$-hardness parameterized by $k$ in \Cref{sec:W1-hardness}, and an FPT algorithm parameterized by $k$ for graphs of girth at least $5$ in \Cref{sec:FPTgirth5}.
In \Cref{sec:structual_parameter}, we present an FPT algorithm for $\prbGen$ parameterized by clique-width and kernelization lower bounds with respect to restricted structural parameters.
\revn{The proofs of statements marked with $\star$ are omitted and can be found in the appendix.}

\section{Preliminaries}
\label{sec:preliminaries}
We refer the reader to the book by Cygan et al.~\cite{book:CyganFKLMPPS15} for a detailed introduction to parameterized complexity, and provide only the definitions needed in this paper.

For a positive integer $n$, let $[n] \coloneq \{1,2,\ldots,n\}$, and for integers $l$ and $r$ with $l \le r$, let $[l,r] \coloneq \{l, l+1, \ldots, r\}$.
For an integer $m$, we write $\equiv_m$ for congruence modulo $m$.
For $a \in \{0,1\}$, let $\overline{a} \coloneq 1-a$.

Let $G$ be a graph.
We write $V(G)$ and $E(G)$ for the vertex set and edge set of $G$, respectively.
For a vertex $v \in V(G)$, let $N_G(v)$ and $N_G[v]$ denote the open and closed neighborhoods of $v$, respectively.
For a vertex set $X \subseteq V(G)$, define $N_G(X) \coloneq \{ v \in V(G) \setminus X \mid uv \in E(G), u \in X \}$ and $N_G[X] \coloneq N_G(X) \cup X$.
For $X \subseteq V(G)$, let $G[X]$ denote the subgraph of $G$ induced by $X$.
The \emph{girth} of a graph is the length of its shortest cycle.
A set $I \subseteq V(G)$ is an \emph{independent set} if no two vertices in $I$ are adjacent, and
a set $C \subseteq V(G)$ is a \emph{clique} if every two distinct vertices in $C$ are adjacent.
For a nonnegative integer $d$, a graph $G$ is \emph{$d$-degenerate} if every subgraph of $G$ contains a vertex of degree at most $d$.
The \emph{degeneracy} of $G$ is the minimum such integer $d$.
It is well known that $G$ is $d$-degenerate if and only if it admits an ordering $v_1, v_2, \ldots, v_n$ of $V(G)$ such that each $v_i$ has degree at most $d$ in $G[\{v_i, v_{i+1}, \ldots, v_n\}]$.

Given a graph $G=(V,E)$, an offset function $f \colon V \to \{0,1\}$, and a partition $(\Vopen,\Vclose)$ of $V$, a vertex $v$ is said to be \emph{parity-dominated} by a set $D \subseteq V$ if $v$ satisfies the parity condition.
A set $D$ satisfying the parity condition for every vertex $v \in V$ is called a \emph{parity dominating set} (PD-set) with respect to $f$.
For a given integer $k$, \prbGen\ asks whether there exists a PD-set $D$ of $G$ with size at most $k$.

The problem \prbOD\ is the special case of \prbGen\ obtained by setting $f(v)=1$ for all $v\in V$ and $(\Vopen,\Vclose)=(\emptyset,V)$.
In this case, the parity condition reduces to requiring that $|N_G[v]\cap D|$ is odd for every vertex $v\in V$.
Such a set $D$ is called an \emph{odd dominating set} (ODD-set), and a vertex $v$ is said to be \emph{odd-dominated} by $D$ if $|N_G[v] \cap D|$ is odd.

Following the approach of~\cite{GenDS:GolovachKS12}, one can construct a first-order formula whose length depends only on $k$ and which expresses the existence of a parity dominating set of size at most $k$.
Combined with the first-order model checking meta-theorem of Grohe et al.~\cite{NowheredenseFO:GroheKS17}, this yields the following result.

\begin{restatable}[\restateref{thm:FPTnowheredense}]{theorem}{FPTnowheredense}
\label{thm:FPTnowheredense}
    {\prbGen} admits an FPT algorithm when parameterized by $k$ on nowhere dense graphs. 
\end{restatable}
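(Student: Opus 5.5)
We reduce to first-order model checking. The obstacle is that parity conditions are not first-order expressible over arbitrary sets. With $|D|\le k$, however, each vertex has at most $k$ neighbors in $D$, so the parity of $|N_G(v)\cap D|$ is a Boolean combination of adjacency atoms between $v$ and the $k$ quantified solution vertices.

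The input must first be encoded as a structure. We expand $G$ with two unary predicates: $F$ for $f^{-1}(1)$ and $O$ for $\Vopen$. Adding unary predicates does not affect nowhere denseness, since the Gaifman graph is unchanged. So the meta-theorem of Grohe et al.~\cite{NowheredenseFO:GroheKS17} applies: on a nowhere dense class, an FO sentence $\varphi$ can be checked on a colored graph in time $f(|\varphi|,\varepsilon)\cdot n^{1+\varepsilon}$.

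For each $k'\le k$ we build a sentence, and the sentence for $k$ is the disjunction over $k'$:
\[
\exists x_1\cdots\exists x_{k'}\;\Bigl(\bigwedge_{i<j} x_i\neq x_j\;\wedge\;\forall y\;\mathrm{Par}(y,x_1,\dots,x_{k'})\Bigr).
\]
Here $\mathrm{Par}$ is a quantifier-free formula stating the parity condition. We write it as a disjunction over all subsets $S\subseteq[k']$. The disjunct for $S$ asserts
\begin{itemize}
\item $\bigwedge_{i\in S}\mathrm{Adj}^*(y,x_i)\wedge\bigwedge_{i\notin S}\neg\mathrm{Adj}^*(y,x_i)$, and
\item $|S|+[F(y)]\equiv_2 0$.
\end{itemize}
The congruence is realized by including $F(y)$ when $|S|$ is odd and $\neg F(y)$ when $|S|$ is even.

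The relation $\mathrm{Adj}^*$ depends on whether $y$ is open or closed:
\[
\mathrm{Adj}^*(y,x)\equiv \bigl(O(y)\wedge E(y,x)\bigr)\vee\bigl(\neg O(y)\wedge(E(y,x)\vee y=x)\bigr).
\]
Because the $x_i$ are pairwise distinct, $|S|$ equals $|N_G(y)\cap D|$ for open $y$ and $|N_G[y]\cap D|$ for closed $y$, where $D=\{x_1,\dots,x_{k'}\}$. Hence the sentence holds exactly when a PD-set of size at most $k$ exists.

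The sentence has length $2^{O(k)}\cdot\mathrm{poly}(k)$, which depends only on $k$. The meta-theorem therefore gives an FPT algorithm. The only step needing care is the correctness of $\mathrm{Par}$, in particular distinctness of the $x_i$ and the open/closed distinction. The case $k'=0$ is handled as the sentence $\forall y\,\neg F(y)$.
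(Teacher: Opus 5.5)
Your proposal is correct and follows essentially the same route as the paper. Both encode $f^{-1}(1)$ and the open/closed partition by unary predicates, take a disjunction over sizes $k'\le k$ of sentences that existentially quantify $k'$ pairwise distinct solution vertices and universally check the parity condition, and then invoke the model-checking theorem of Grohe, Kreutzer, and Siebertz. The only difference is how parity is expressed. You use a quantifier-free disjunction over adjacency patterns $S\subseteq[k']$, of length $2^{O(k)}$. The paper instead builds counting subformulas (``at least $r$ selected neighbours'') with auxiliary existential quantifiers. Either way the sentence length depends only on $k$, so this is immaterial.
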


\section{NP-completeness}
\label{sec:NP-completeness}
In this section, we prove that \prbOD\ remains $\NP$-complete when restricted to planar bipartite graphs of maximum degree $3$ and girth at least $g$, for any fixed integer $g \geq 3$.
To this end, we give a polynomial-time reduction from the \prbOinT\ problem.

Let $\phi$ be a $3$-CNF formula with a variable set $X=\{x_1,x_2,\dots,x_n\}$ and a clause set $C=\{C_1,C_2,\dots,C_m\}$.
\prbOinT\ asks whether there exists a truth assignment $\sigma \colon X \to \{\true,\false\}$ such that each clause of $\phi$ contains exactly one true literal.
The \emph{incidence graph} of $\phi$, denoted by $G_\phi$, is the bipartite graph with vertex set $X \cup C$, where $x_i$ is adjacent to $C_j$ if and only if $x_i$ occurs in $C_j$.
It is known that \prbOinT\ remains $\NP$-complete even if all literals are positive, each variable appears exactly three times, and $G_\phi$ is planar~\cite{1in3SAT:MooreR01}.

We now state the main result of this section.

\begin{theorem}\label{thm:NPcomp_planar_bipartite}
    For every fixed integer $g \ge 3$, \prbOD\ is $\NP$-complete on planar bipartite graphs of maximum degree $3$ and girth at least $g$.
\end{theorem}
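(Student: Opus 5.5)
Membership in $\NP$ is immediate: a candidate set $D$ can be checked in polynomial time. For hardness I would reduce from the restricted \prbOinT\ of Moore and Robson~\cite{1in3SAT:MooreR01}. In that variant all literals are positive, every variable occurs exactly three times, and $G_\phi$ is planar. Counting incidences gives $3n=3m$, so $n=m$.

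The basic idea is to keep a clause vertex $C_j$ of degree $3$ that is never in $D$. Its odd-domination constraint then says that an odd number of its three "literal ports" are selected, i.e.\ $1$ or $3$. The case $3$ is excluded by a global counting argument. Suppose each variable gadget costs $\beta$ vertices in the "false" state and $\beta+1$ in the "true" state. If $t$ variables are true, summing the clause constraints gives $3t = m + 2t_3$, where $t_3$ is the number of clauses with three true literals. Setting the budget to
$$k = \sum(\text{base costs}) + m/3$$
then forces $t_3=0$, i.e.\ a $1$-in-$3$ assignment. We may assume $3 \mid m$, since otherwise the instance is trivially a \textsc{No}-instance by the same counting.

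To build the gadgets I would exploit the rigidity of odd domination on paths. On an internal stretch $u_1u_2\cdots$ of degree-$2$ vertices, the constraint at $u_i$ reads $x_{i-1}+x_i+x_{i+1}\equiv_2 1$. Hence $x_{i+1}$ is determined by $(x_{i-1},x_i)$. The only solutions are the three shifts of the period-$3$ pattern $100$, each costing one vertex per period, and the pattern $111$, costing three per period.

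Wires are long paths whose length is a multiple of $6$. This length keeps every cycle of the construction longer than $g$, preserves bipartiteness (every edge of the bipartite graph $G_\phi$ becomes an even path, though I would double-check the parity bookkeeping), and carries a "phase" that encodes the truth value. Each variable $x_i$ becomes a degree-$3$ hub (or a small cycle of length $\ge g$ with three pendant wires) from which three wires lead to the three clause vertices. All gadgets are degree $\le 3$ and are substituted into a planar embedding of $G_\phi$, so planarity is preserved. The heavy $111$ phase is ruled out because any use of it costs about $\Theta(\text{wire length})$ extra, which exceeds the slack $m/3$ once wires are long enough.

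The step I expect to be the main obstacle is calibrating the endpoints. I need to choose the exact local structure at the variable hub and at the clause end (a short pendant attachment or a specific residue of wire length mod $3$) so that three things hold:
\begin{itemize}
\item the three wires of a variable are forced into a common phase;
\item exactly two consistent phases survive, with costs differing by exactly one, and $C_j$ itself cannot be profitably placed in $D$;
\item the phase arriving at $C_j$ contributes a selected neighbour of $C_j$ iff the variable is true.
\end{itemize}
I would first try the smallest possible hub, a single degree-$3$ vertex, and enumerate its finitely many local states by hand against the $x_{i+1}=1+x_{i-1}+x_i$ recurrence. If the cost gap is wrong, I would adjust it by appending a pendant path of suitable length mod $3$. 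Once the gadget table is verified, both directions of the equivalence follow from the counting argument above.
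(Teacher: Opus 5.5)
\textbf{Verdict: there is a genuine gap.} The gadget you leave open (the hub and the endpoint calibration) is the whole content of the reduction, and two of the properties you require of it cannot be obtained as stated.

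\textbf{A cost gap of exactly one is impossible.} Suppose no clause vertex is in $D$. The variable side attaches to the rest of the graph only through clause vertices. So for every variable gadget $X$ (hub, wires, any pendant paths), $D\cap X$ is an ODD-set of $G[X]$. Now take two ODD-sets $D_1,D_2$ of any graph $H$ and let $K=D_1\triangle D_2$. Then $|N_H[v]\cap K|$ is even for every $v$. Hence every vertex of $K$ has odd degree in $H[K]$, and $|K|$ is even by the handshake lemma. So $|D_1|\equiv_2|D_2|$. Consequently no hub, and no pendant path of any length mod $3$, can give a false state of cost $\beta$ and a true state of cost $\beta+1$. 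Your identity $3t=m+2t_3$ would still force a $1$-in-$3$ assignment with an even gap $\delta$ and budget $\sum\beta+\delta m/3$. That requires a concrete gadget with this gap in which all mixed states are also expensive enough, and you have not exhibited one.

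\textbf{Nothing keeps clause vertices out of $D$, and your counting needs this.} Your first candidate already fails here. Take a single degree-$3$ hub and wires of length $L\equiv 0\pmod 3$. The recurrence forces $x_L=x_0$ on every wire, so in each component of $G_\phi$ either all hubs and clause vertices lie outside $D$ or all lie inside.
\begin{itemize}
\item In the outside family, a hub only needs an odd number of its wires to start with a selected vertex, so mixed phases are allowed. Phases $1$ and $2$ both have weight $L/3$, so every such solution has size exactly $nL$: there is no cost gap at all.
\item In the inside family, putting all wires in phase $0$ gives an ODD-set of size $2n+3n(L/3-1)=nL-n$ for every formula.
\end{itemize}
So any budget that admits the intended solutions also admits this one, independently of $\phi$. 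Pendant paths cannot be attached to a degree-$3$ hub without exceeding maximum degree $3$.

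The choice $L\equiv 2\pmod 3$ does give a common phase and a gap of $2$. But then hubs and clause vertices play symmetric roles. Keep all hubs outside $D$ and put into $D$ a set of clause vertices meeting every variable exactly twice. This is an ODD-set of size exactly $\sum\beta+2m/3$, so an additional mechanism excluding clause vertices from $D$ is still unavoidable.

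\textbf{How the paper avoids this.} It enforces exactness locally in the clause gadget rather than by global counting. Each clause is a cycle of length $3b$, with $b$ a power of two, carrying pendant ports at $c_b,c_{2b},c_{3b}$. Since $3\nmid b$, the ODD-sets of size $b$ of the cycle are exactly the three shifts of the period-$3$ pattern, and each contains exactly one of $c_b,c_{2b},c_{3b}$. The variable gadget is an $8$-vertex path, both of whose states have size $3$, and the budget is the tight $3n+bm$. This forces every gadget into a minimum local solution with all ports outside $D$, which directly gives exactly one true literal per clause.
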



\begin{figure}[t]
    \centering
    \scalebox{0.8}{\begin{tikzpicture}

    \node[] (A) at (5, 7){};
    \node[] (B) at (6, 7-1.73){};
    \node[] (C) at (4, 7-1.73){};

    \tkzDefCircle[circum](A,B,C)
    \tkzGetPoint{O}

    \tkzDrawCircle[very thick, dashed](O,A)

    \node[] at (5, 4) {$G_{C_j}$};
    \node[] at (5, 6) {$Y_j$};
    \node[draw=black, fill=white, circle, minimum size=5mm, inner sep=0pt] (c1) at (5, 7){$c_b$};
    \node[draw=black, fill=white, circle, minimum size=5mm, inner sep=0pt] (c2) at (6, 7-1.73){$c_{2b}$};
    \node[draw=black, fill=white, circle, minimum size=5mm, inner sep=0pt] (c3) at (4, 7-1.73){$c_{3b}$};

    \node[draw=black, fill=white, circle, minimum size=5mm, inner sep=0pt] (l1) at (5, 8){$\ell_1^j$};
    \node[draw=black, fill=white, circle, minimum size=5mm, inner sep=0pt] (l2) at (7, 5){$\ell_2^j$};
    \node[draw=black, fill=white, circle, minimum size=5mm, inner sep=0pt] (l3) at (3, 5){$\ell_3^j$};

    \draw[very thick] (c1)--(l1);
    \draw[very thick] (c2)--(l2);
    \draw[very thick] (c3)--(l3);

    \node[] at (1.5,6.5) {$G_{x_{i_1}}$};
    \draw[thick, dotted, rounded corners=5pt] (-0.3,8.3) rectangle (3.3, 6.7); 
    \node[draw=black, fill=white, circle, minimum size=3mm, inner sep=0pt] (v11) at (7, 9){};
    \node[draw=black, fill=white, circle, minimum size=3mm, inner sep=0pt] (v12) at (8, 9){\scriptsize $f_{i_1}^1$};
    \node[draw=black, fill=white, circle, minimum size=3mm, inner sep=0pt] (v13) at (9, 9){};
    \node[draw=black, fill=white, circle, minimum size=3mm, inner sep=0pt] (v14) at (10, 9){};
    \node[draw=black, fill=white, circle, minimum size=3mm, inner sep=0pt] (v15) at (10, 8){};
    \node[draw=black, fill=white, circle, minimum size=3mm, inner sep=0pt] (v16) at (9, 8){};
    \node[draw=black, fill=white, circle, minimum size=3mm, inner sep=0pt] (v17) at (8, 8){};
    \node[draw=black, fill=white, circle, minimum size=3mm, inner sep=0pt] (v18) at (7, 8){\scriptsize $f_{i_1}^3$};
    \draw[very thick] (v11)--(v12)--(v13)--(v14)--(v15)--(v16)--(v17)--(v18);

    \node[] at (8.5,7.5) {$G_{x_{i_2}}$};
    \draw[thick, dotted, rounded corners=5pt] (6.7,9.3) rectangle (10.3, 7.7); 
    \node[draw=black, fill=white, circle, minimum size=3mm, inner sep=0pt] (v21) at (6+2, 9-3){};
    \node[draw=black, fill=white, circle, minimum size=3mm, inner sep=0pt] (v22) at (7+2, 9-3){\scriptsize $f_{i_2}^1$};
    \node[draw=black, fill=white, circle, minimum size=3mm, inner sep=0pt] (v23) at (8+2, 9-3){};
    \node[draw=black, fill=white, circle, minimum size=3mm, inner sep=0pt] (v24) at (9+2, 9-3){};
    \node[draw=black, fill=white, circle, minimum size=3mm, inner sep=0pt] (v25) at (9+2, 8-3){};
    \node[draw=black, fill=white, circle, minimum size=3mm, inner sep=0pt] (v26) at (8+2, 8-3){};
    \node[draw=black, fill=white, circle, minimum size=3mm, inner sep=0pt] (v27) at (7+2, 8-3){};
    \node[draw=black, fill=white, circle, minimum size=3mm, inner sep=0pt] (v28) at (6+2, 8-3){\scriptsize $f_{i_2}^3$};
    \draw[very thick] (v21)--(v22)--(v23)--(v24)--(v25)--(v26)--(v27)--(v28);

    \node[] at (9.5,4.5) {$G_{x_{i_3}}$};
    \draw[thick, dotted, rounded corners=5pt] (7.7,6.3) rectangle (11.3, 4.7); 
    \node[draw=black, fill=white, circle, minimum size=3mm, inner sep=0pt] (v31) at (6-6, 9-1){};
    \node[draw=black, fill=white, circle, minimum size=3mm, inner sep=0pt] (v32) at (7-6, 9-1){\scriptsize $f_{i_3}^1$};
    \node[draw=black, fill=white, circle, minimum size=3mm, inner sep=0pt] (v33) at (8-6, 9-1){};
    \node[draw=black, fill=white, circle, minimum size=3mm, inner sep=0pt] (v34) at (9-6, 9-1){};
    \node[draw=black, fill=white, circle, minimum size=3mm, inner sep=0pt] (v35) at (9-6, 8-1){};
    \node[draw=black, fill=white, circle, minimum size=3mm, inner sep=0pt] (v36) at (8-6, 8-1){};
    \node[draw=black, fill=white, circle, minimum size=3mm, inner sep=0pt] (v37) at (7-6, 8-1){};
    \node[draw=black, fill=white, circle, minimum size=3mm, inner sep=0pt] (v38) at (6-6, 8-1){\scriptsize $f_{i_3}^3$};
    \draw[very thick] (v31)--(v32)--(v33)--(v34)--(v35)--(v36)--(v37)--(v38);

    \draw[very thick] (v12)--(l1);
    \draw[very thick] (v28)--(l2);
    \draw[very thick] (v38)--(l3);
    
\end{tikzpicture}}
    \caption{An illustration of the gadgets \rev{$G_{x_{i_1}}, G_{x_{i_2}}, G_{x_{i_3}}$}, and $G_{C_j}$ corresponding to the clause $C_j = x_{i_1} \lor x_{i_2} \lor x_{i_3}$.
    Here, $x_{i_1}$ represents the first occurrence of the variable in the formula, while $x_{i_2}$ and $x_{i_3}$ represent its second \rev{and} third occurrences.
    The dashed lines between $c_b, c_{2b}$, and $c_{3b}$ denote \rev{paths of length $b+1$ connecting the corresponding vertices}.}
    \label{fig:NPcomp_gadgets}
\end{figure}

\paragraph{Reduction.}
Let $\phi$ be an instance of \prbOinT\ with \rev{a variable set $X=\{x_1,x_2,\dots,x_n\}$ and a clause set $C=\{C_1,C_2,\dots,C_m\}$.}
We assume that $\phi$ satisfies the above restrictions.
We construct an instance $(G,k)$ of \prbOD.

For each variable $x_i$ with $i\in[n]$, we introduce a \emph{variable gadget} $G_{x_i}$.
The graph $G_{x_i}$ is a path on eight vertices with 
\begin{align*}
    V(G_{x_i})
    &= \{t_i^1,f_i^1,z_i^1,t_i^2,f_i^2,z_i^2,t_i^3,f_i^3\}, \text{ and} \\
    E(G_{x_i})
    &= \{t_i^1f_i^1,f_i^1z_i^1,z_i^1t_i^2,t_i^2f_i^2,f_i^2z_i^2,z_i^2t_i^3,t_i^3f_i^3\}.
\end{align*}

For each clause $C_j = (x^j_1 \lor x^j_2 \lor x^j_3)$ with $j \in [m]$, we define a \emph{clause gadget} $G_{C_j}$.
Let $a$ be the smallest integer such that $g\leq 2^a$, and let $b\coloneqq 2^a$.
The graph $G_{C_j}$ is defined as follows. 
(See also \cref{fig:NPcomp_gadgets})
Construct a cycle $Y_j = (c_1,c_2,\dots,c_{3b})$ of length $3b$ in clockwise order, and add three vertices $\ell^j_1,\ell^j_2,\ell^j_3$ \rev{together with the} edges $\ell_1^jc_b$, $\ell_2^jc_{2b}$, and $\ell_3^jc_{3b}$.

Fix a planar embedding of $G_\phi$.
To construct the graph $G$, we connect the variable gadgets and clause gadgets as follows.
\rev{For each variable vertex $x_i$ and clause vertex $C_j$ of $G_\phi$, replace $x_i$ and $C_j$ with the corresponding gadgets $G_{x_i}$ and $G_{C_j}$, respectively.}
For each variable $x_i$, consider its three occurrences in $\phi$ in an arbitrary but fixed order.
\rev{Suppose that the $t$-th occurrence of $x_i$ ($t \in \{1,2,3\}$) appears as the $r$-th literal of a clause $C_j$.
If $t=1$, add the edge $f_i^1\ell_r^j$; otherwise, add the edge $f_i^3\ell_r^j$.}
Finally, set $k \coloneqq 3n + bm$. 
This completes the construction of the instance $(G,k)$.

We first verify the structural properties of the constructed graph.

\begin{claim}[\rev{$\star$}]
    \label{cl:planar_bipartite_prop}
    The graph $G$ is planar, bipartite, has maximum degree~$3$, and girth at least~$g$.
\end{claim}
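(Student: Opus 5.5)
Each property is checked directly from the construction, using the structure of $G_\phi$ together with the local structure of the gadgets.

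\emph{Maximum degree.} Every variable gadget $G_{x_i}$ is a path on eight vertices, so its internal degrees are at most $2$. Each variable occurs exactly three times, so each $G_{x_i}$ receives exactly three external edges: one at $f_i^1$ and two at $f_i^3$. The vertex $f_i^1$ is an internal path vertex, so it gets degree $3$. The vertex $f_i^3$ is an endpoint of the path, so it gets degree $1+2=3$. In a clause gadget, the cycle vertices $c_b,c_{2b},c_{3b}$ have degree $3$, the other cycle vertices have degree $2$, and each $\ell_r^j$ has degree $2$ (one edge to the cycle and one to a variable gadget).

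\emph{Planarity.} Start from the fixed planar embedding of $G_\phi$. Replacing a clause vertex $C_j$ by the cycle $Y_j$ with three pendant paths keeps planarity, provided the pendants $\ell_1^j,\ell_2^j,\ell_3^j$ are attached in the cyclic order in which the edges to the three variables leave $C_j$ in the embedding. Since $C_j$ has degree $3$, we may relabel the literals so that this holds. Replacing a variable vertex $x_i$ by a path is likewise planar. The path can be drawn inside a small disk around $x_i$, with its three external edges leaving through $f_i^1$ and $f_i^3$ in the rotation order given at $x_i$. Any cyclic order of three items can be realized with one item at $f_i^1$ and two at $f_i^3$; again the occurrence order can be chosen to match. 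I expect this step to need the most care. The fix is to choose the ``arbitrary but fixed'' orders (literal order in clauses, occurrence order of variables) consistently with the embedding, which is allowed since those orders were arbitrary.

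\emph{Bipartiteness and girth.} Give each vertex a 2-colouring as follows. Colour the path $G_{x_i}$ alternately, so that $f_i^1$ and $f_i^3$ are at even distance ($6$) and receive the same colour, say \redcol. Colour each $\ell_r^j$ \bluecol. Colour the cycle $Y_j$ alternately; it has even length $3b$ because $b=2^a$ with $a\ge 2$ (as $g\ge3$). The attachment points $c_b,c_{2b},c_{3b}$ are at pairwise distance a multiple of $b$, hence even, so they share a colour, which we make \redcol. Every edge then joins different colours.

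For the girth, any cycle is either some $Y_j$, of length $3b\ge g$, or it passes through gadgets along a cycle of $G_\phi$. In the latter case the cycle has length at least $4$ in $G_\phi$, since $G_\phi$ is bipartite. Each traversal of a clause gadget contributes at least $2$ pendant edges plus $b$ cycle edges. A cycle in $G_\phi$ visits at least two clause gadgets, so the length is at least $2b\ge g$.

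Together these four checks prove \cref{cl:planar_bipartite_prop}.
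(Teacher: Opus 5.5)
Your proof is correct and follows essentially the same route as the paper. You verify each property directly from the construction, use the same bipartition ($f_i^1,f_i^3$ and $c_b,c_{2b},c_{3b}$ on one side, all $\ell_r^j$ on the other), and use the same observation that every cycle other than some $Y_j$ must cross a clause gadget between two of its pendants.

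Two small remarks. First, your handling of rotation orders in the planarity step is more careful than the paper's, but no re-choosing of literal or occurrence orders is actually needed: each gadget has only three external edges, so both cyclic orders are obtained by mirroring its drawing. Second, for the girth a single clause traversal already gives length at least $b+2>g$, so passing to cycles of $G_\phi$ and needing two clause gadgets is unnecessary.
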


The correctness of the reduction is established by the following lemma.

\begin{restatable}[\restateref{thm:NPcplanar}]{lemma}{NPcplanar}
\label{thm:NPcplanar}
    The instance $\phi$ is a yes-instance of {\prbOinT} if and only if the instance $(G,k)$ is a yes-instance of {\prbOD}.
\end{restatable}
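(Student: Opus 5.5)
The plan is to prove both directions through a tight counting argument: the budget $k=3n+bm$ is exactly a lower bound on the size of any ODD-set of $G$. This equality forces $D$ to look the same locally in every gadget, and the resulting local structure then encodes a $1$-in-$3$ assignment.

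\emph{Lower bound and its consequences.} For every ODD-set $D$ and every vertex $v$, we have $N_G[v]\cap D\neq\emptyset$. I will exhibit $3n+bm$ vertices with pairwise disjoint closed neighborhoods.
\begin{itemize}
\item In $G_{x_i}$, take $t_i^1,t_i^2,t_i^3$. Their closed neighborhoods are $\{t_i^1,f_i^1\}$, $\{z_i^1,t_i^2,f_i^2\}$ and $\{z_i^2,t_i^3,f_i^3\}$. The $t$'s are not adjacent to any $\ell$, so these are their full closed neighborhoods in $G$, and together they partition $V(G_{x_i})$.
\item In $Y_j$, take $c_3,c_6,\dots,c_{3b}$. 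Since each $c$ has at most one extra neighbor (an $\ell$-vertex), use the sets $\{c_{3s-1},c_{3s},c_{3s+1}\}$ restricted to the cycle. These partition $V(Y_j)$.
\end{itemize}
A little care is needed for $c_{3b}$, whose closed neighborhood also contains $\ell_3^j$. This does not matter: each disjoint block, with the $\ell$'s added, still needs at least one vertex of $D$, and the blocks stay disjoint.

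Hence $|D|\ge 3n+bm$. If $|D|\le k$, then every block contains exactly one vertex of $D$ and no vertex outside the blocks is in $D$; in particular no $\ell^j_r$ is in $D$.

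\emph{Local analysis.} Take $D$ with $|D|=k$ and all $\ell\notin D$.

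In a variable gadget, $D$ contains exactly one of $t_i^1,f_i^1$. Odd domination of $f_i^1$ (whose closed neighborhood is $\{t_i^1,f_i^1,z_i^1,\ell\}$) then forces $z_i^1\notin D$. Propagating along the path, the first choice determines everything, and the only possibilities are $\{t_i^1,t_i^2,t_i^3\}$ and $\{f_i^1,f_i^2,f_i^3\}$. I read the first as ``$x_i$ true''. I will verify directly that both sets odd-dominate the whole path, including $f_i^1$ and $f_i^3$, given that no $\ell$ is in $D$.

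For a literal vertex $\ell$ with $N[\ell]=\{\ell,c,f\}$, oddness means exactly one of $c,f$ is in $D$. So the attachment vertex $c_{rb}$ lies in $D$ iff the corresponding variable is true.

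In a clause cycle, $D\cap Y_j$ must odd-dominate $Y_j$ on its own, since the $\ell$'s are not in $D$. The recurrence $x_{i+1}\equiv 1+x_{i-1}+x_i \pmod 2$ forces every solution to have period $3$. The only solutions are the three shifts of ``every third vertex'' (size $b$) and the all-ones vector (size $3b$). The budget excludes all-ones. Because $b=2^a\not\equiv 0\pmod 3$, the indices $b,2b,3b$ lie in three distinct residue classes mod $3$. So each shift contains exactly one of $c_b,c_{2b},c_{3b}$, which means exactly one literal of $C_j$ is true.

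\emph{Assembling both directions.}
\begin{itemize}
\item ($\Leftarrow$) From an ODD-set of size at most $k$, the analysis above yields a consistent assignment in which every clause has exactly one true literal.
\item ($\Rightarrow$) From a satisfying assignment $\sigma$, put $\{t_i^1,t_i^2,t_i^3\}$ or $\{f_i^1,f_i^2,f_i^3\}$ into $D$ according to $\sigma(x_i)$. In each $Y_j$, take the shift of every third vertex that contains the unique $c_{rb}$ of the true literal. Then check oddness at every vertex: gadget-internal vertices by the computations above, and $\ell$'s by the one-of-$\{c,f\}$ rule.
\end{itemize}

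The main obstacle is the clause-cycle step. It needs the periodicity argument on $C_{3b}$, the fact that $b\not\equiv 0\pmod 3$ so that the three attachment points see all three shifts, and careful bookkeeping that the attached $\ell$'s do not disturb the cycle vertices' parity. The last point is exactly why establishing $\ell\notin D$ from tightness first is essential.
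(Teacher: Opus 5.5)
\paragraph{A gap at $\ell_3^j$.} One inference fails as written. Your clause blocks are $N_G[c_{3s}]$ for $s\in[b]$. As you note yourself, $N_G[c_{3b}]=\{c_{3b-1},c_{3b},c_1,\ell_3^j\}$, so $\ell_3^j$ lies \emph{inside} a block. Tightness only tells you that this block contains exactly one vertex of $D$; it does not rule out that this vertex is $\ell_3^j$. Since $b,2b\not\equiv 0\pmod 3$, the only vertices outside all your blocks are $\ell_1^j$ and $\ell_2^j$. So ``in particular no $\ell_r^j$ is in $D$'' is justified only for $r\in\{1,2\}$. Every later step of your converse direction needs $\ell_3^j\notin D$:
\begin{itemize}
\item the propagation at $f_i^1$, when the first occurrence of $x_i$ is a third literal;
\item the claim that $D\cap V(Y_j)$ odd-dominates $Y_j$ on its own;
\item the exactly-one-of-$\{c,f\}$ rule at $\ell_3^j$.
\end{itemize}
As it stands, the case $\ell_3^j\in D$ (with $c_{3b-1},c_{3b},c_1\notin D$) is never excluded.

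\paragraph{How to repair it.} Either of the following works.
\begin{itemize}
\item Run your count a second time with cycle centers $c_1,c_4,\dots,c_{3b-2}$. Exactly one of $b,2b$ is $\equiv 1\pmod 3$ and $3b\equiv 0$. So these closed neighborhoods again partition $V(Y_j)$ together with one of $\ell_1^j,\ell_2^j$, and they give the same bound $3n+bm$. They leave $\ell_3^j$ outside every block, so tightness excludes it. Combined with your original count, this removes all three $\ell$'s from $D$.
\item Argue as the paper does. Tightness gives $|D\cap V(G_{C_j})|=b$. Every cycle vertex has its closed neighborhood inside $G_{C_j}$. A vertex of $G_{C_j}$ dominates at most three cycle vertices, and an $\ell$ dominates only one. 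Hence $b$ vertices can cover all $3b$ cycle vertices only if none of them is an $\ell$.
\end{itemize}

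With this fixed, the rest of your proof is correct and follows the paper's route: a tight count, forced structure inside the gadgets, and the three shifts on $Y_j$. Your disjoint-closed-neighborhood lower bound is cleaner than the paper's per-gadget bound. Your period-$3$ recurrence and your explicit use of $b\not\equiv 0\pmod 3$ also make precise points the paper leaves implicit.
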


\section{$\W[1]$-hardness}\label{sec:W1-hardness}
In this section, we prove that \prbOD\ is $\W[1]$-hard with respect to the solution size~$k$, even on bipartite graphs of degeneracy~$4$ and girth~$4$, and on split graphs.

\subsection{$\W[1]$-hardness on graphs with degeneracy $4$ and girth $4$}
In this subsection, we establish the $\W[1]$-hardness of \prbOD\ parameterized by $k$ even on bipartite graphs with degeneracy~$4$ and girth~$4$.
We also establish a lower bound under the Exponential Time Hypothesis (ETH).

\begin{theorem}
    \label{thm:hard_k}
    \prbOD\ is $\W[1]$-hard when parameterized by $k$ even on bipartite graphs of degeneracy $4$ and girth $4$.
    Moreover, unless ETH fails, there is no algorithm solving \prbOD\ in time $f(k)\cdot n^{o(k/\log k)}$ for any computable function $f$.
\end{theorem}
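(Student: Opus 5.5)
We would reduce from \PSI. Its input is a host graph $H'$ whose vertex set is partitioned into classes $V_1,\dots,V_h$, together with a pattern graph $H$ on $[h]$ with $\ell$ edges; the question is whether we can pick one vertex $v_i\in V_i$ per class so that $v_iv_j\in E(H')$ for every $ij\in E(H)$. By Marx's result, \PSI\ has no $f(\ell)\cdot n^{o(\ell/\log \ell)}$-time algorithm under ETH, and it is $\W[1]$-hard in $\ell$. So it suffices to build, in polynomial time, a bipartite instance $(G,k)$ of degeneracy $4$ and girth $4$ with $k=O(\ell)$ (using $h\le 2\ell$ after removing isolated pattern vertices). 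This transfers both the $\W[1]$-hardness and the ETH bound of \cref{thm:hard_k}.

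\emph{Construction (planned).} The solution side $A$ has one \emph{vertex-selector} $a_v$ for each $v\in V(H')$, and one \emph{edge-selector} $a_e$ for each $e=uv\in E(H')$ with $u\in V_i$, $v\in V_j$ and $ij\in E(H)$. The checker side $B$ contains the following vertices.
\begin{itemize}
\item A class checker $s_i$ adjacent to all $a_v$ with $v\in V_i$.
\item A class checker $s_{ij}$ adjacent to all selectors of edges in $E_{ij}$.
\item An incidence checker $w_{v,ij}$ for each $v\in V_i$ and $ij\in E(H)$, adjacent to $a_v$ and to all $a_e$ with $e\in E_{ij}$ incident to $v$.
\end{itemize}
Parities are fixed by \emph{forced} vertices. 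A vertex $r$ with $k+1$ pendant leaves must belong to every solution of size at most $k$, since otherwise every leaf must itself be chosen. We place such forced vertices so that each checker sees a prescribed number of them, chosen so that its condition becomes: $s_i$ sees an odd number of chosen selectors in $V_i$, $s_{ij}$ sees an odd number of chosen edges in $E_{ij}$, and $w_{v,ij}$ sees $[a_v\in D]+[\text{a chosen edge of }E_{ij}\text{ is incident to }v]$ even. The budget $k=(\#\text{forced})+h+\ell$ then forces exactly one selector per class. The incidence checkers then force the chosen edge of each $E_{ij}$ to join the chosen vertices $v_i$ and $v_j$, which gives correctness in both directions.

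\emph{Structural checks.} $G$ is bipartite with $A$ and $B$ as sides, together with the forced vertices and leaves placed on appropriate sides. Girth $4$ is witnessed by a $4$-cycle through two selectors sharing two checkers and cannot be lower in a bipartite graph. For degeneracy, each edge-selector has only a constant number of $B$-neighbours: $s_{ij}$, two incidence checkers, and $O(1)$ forced vertices. Peeling leaves, then edge-selectors, then vertex-selectors (each of which should end with at most $4$ remaining neighbours once edge-selectors are gone), then the remaining vertices yields a degeneracy-$4$ ordering. This bookkeeping will dictate exactly how many forced vertices each selector may touch.

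\emph{Main obstacle.} The hard part is the parity of the \emph{selector vertices themselves}, both chosen and unchosen. The closed neighbourhood of an unchosen selector contains only $B$-vertices of $D$, so it needs an odd number of forced neighbours on $B$. A chosen selector counts itself once, so it needs an even number. These requirements conflict if every selector sees the same forced vertices. I would first attach to each selector a forced $B$-vertex $q$, which settles the unchosen case. For the chosen case, I would let the chosen selector lie in $D$ together with exactly one extra ``partner'' vertex that is paid for by the budget and kept adjacent only to the selectors in its class. This would make the parity of chosen selectors even-plus-one while keeping the parity of unchosen ones odd. Verifying that this partner trick does not disturb the checker parities, while keeping degeneracy at most $4$, is the step I expect to need the most case analysis.
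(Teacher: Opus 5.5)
Your skeleton is the paper's: your $s_i,s_{ij}$ are its $q_i,q_{i,j}$, your $w_{v,ij}$ are its $\ell_{v,i,j}$, and your forced vertices are its $\tilde{x}_i,\tilde{y}_i$ (and $\tilde{x}_{i,j},\tilde{y}_{i,j}$), each with $k'+1$ pendant leaves. The genuine gap is the step you flag yourself, and the partner trick as you describe it fails.

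\begin{itemize}
\item A single partner adjacent to all selectors of its class raises $|N[a_v]\cap D|$ by the same amount for every selector $a_v$ of that class, chosen or not. It therefore cannot repair a discrepancy that comes precisely from the self-count. Together with $q$, every unchosen selector would then see two vertices of $D$.
\item A partner $p_v$ private to $a_v$ and adjacent only to $a_v$ also fails. Parity at $p_v$ forces exactly one of $p_v,a_v$ into $D$. So the chosen selector loses its partner, and the partner of every unchosen selector must lie in $D$, which exceeds the budget.
\end{itemize}

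What is missing is a private partner $p_v$ for every selector, adjacent to $a_v$ and to one forced vertex $r$ of its class (placed on the selectors' side of the bipartition). Then the condition at $p_v$ reads $[a_v\in D]+[p_v\in D]+1\equiv_2 1$, so $p_v\in D$ if and only if $a_v\in D$. A chosen selector then sees $\{a_v,p_v,q\}$ and an unchosen one sees only $\{q\}$. This is the paper's copy $v'\in V_i'$ attached to $\tilde{y}_i$, with $\tilde{x}_i$ playing your $q$. Note that $q$ must be shared by the whole class, or $k$ is not bounded in $\ell$.

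Several points you leave open then follow the paper.
\begin{itemize}
\item The forced vertices need their own parity fixed: $q$ alone sees itself and one chosen selector, an even number. The paper fixes this with the edge $\tilde{x}_i\tilde{y}_i$, and $\tilde{y}_i$ also serves as the odd forced neighbour of all incidence checkers of the class.
\item The budget becomes $\#\text{forced}+2h+2\ell$. Tightness then requires showing that every gadget contains at least two non-forced vertices of $D$, including the case where the class checker itself lies in $D$; the paper handles $q_i\in D'$ separately. Your claim that the budget ``forces exactly one selector per class'' skips this case analysis.
\item Your degeneracy ordering is wrong as stated. After deleting the edge-selectors, $a_v$ is still adjacent to all $\deg_H(i)$ checkers $w_{v,ij}$, which is unbounded. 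You must delete the partners before the edge-selectors, since otherwise an edge-selector has five later neighbours. You must also delete the incidence checkers, which by then have constant degree, before the vertex-selectors. The paper's ordering does exactly this: $V'_{i,j}$ before $V^*_{i,j}$, and $L_i$ and $V_i'$ before $V_i^*$.
\end{itemize}
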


To prove \Cref{thm:hard_k}, we reduce from \PSI.

An instance of \PSI\ consists of two graphs $H$ and $G$, where $V(H) = [|V(H)|]$ and $V(G)$ is partitioned into color classes $V_1, V_2, \ldots, V_{|V(H)|}$.
The task is to determine whether there exists an injective mapping $\phi \colon V(H) \to V(G)$ such that 
(i) $\phi(i) \in V_i$ for every $i \in V(H)$, and
(ii) $\phi(i)\phi(j) \in E(G)$ for all $ij \in E(H)$.
Such a mapping is called a \emph{colored subgraph isomorphism} from $H$ to $G$.
Without loss of generality, we may assume that every edge of $G$ has endpoints in color classes corresponding to an edge of $H$, since all other edges can be deleted.
In the remainder of this subsection, we assume $i < j$ whenever we write $ij \in E(H)$.

Marx~\cite[Corollary~6.3]{PSI:Marx10} proved that \PSI\ is $\W[1]$-hard when parameterized by $m_H \coloneqq |E(H)|$. 
Moreover, assuming ETH, no algorithm solves an $n$-vertex instance of \PSI\ in time $f(m_H)\cdot n^{o(m_H/\log m_H)}$ for any computable function $f$.


\subsubsection*{Construction.}

Let $(G,H)$ be an instance of \PSI.
Denote $n_H \coloneqq |V(H)|$ and $m_H \coloneqq |E(H)|$.
The vertex set of $G$ is partitioned into $V_1, V_2, \dots, V_{n_H}$.
For each edge $ij \in E(H)$, define $E_{i,j} \coloneq \{ uv \in E(G) \mid u \in V_i, v \in V_j\}$.
By adding isolated vertices and dummy edges if necessary, we may assume that $V_i \neq \emptyset$ for every $i \in V(H)$ and $E_{i,j} \neq \emptyset$ for every $ij \in E(H)$.

We construct an instance $(G',k')$ of \prbOD. 
The graph $G'$ consists of a \emph{vertex gadget} $G_i$ for each $i \in V(H)$ and an \emph{edge gadget} $G_{i,j}$ for each $ij \in E(H)$.
\rev{Finally, let} $k' \coloneqq 4n_H + 4m_H = O(m_H)$.

\paragraph{Vertex gadgets.}
For each $i \in [n_H]$, the vertex gadget $G_i$ consists of the following vertices: a vertex $q_i$; two disjoint copies $V_i^{*}$ and $V_i'$ of $V_i$; a set $L_i \coloneq \{ \ell_{v,i,j} \mid v \in V_i,\ j \in N_H(i) \}$; two vertices $\tilde{x}_i$ and $\tilde{y}_i$; and their respective private neighbors $\tilde{X}_i \coloneq \{\tilde{x}_i^{r} \mid r \in [k'+1]\}$ and $\tilde{Y}_i = \{\tilde{y}_i^{r} \mid r \in [k'+1]\}$.

The edges of $G_i$ are defined as follows.
The vertex $q_i$ is adjacent to every vertex in $V_i^{*}$.
For each $v \in V_i$, let $v^{*} \in V_i^{*}$ and $v' \in V_i'$ denote the corresponding copies; add the edge $v^{*}v'$.
For each $v \in V_i$ and each $j \in N_H(i)$, add the edge $v^{*}\ell_{v,i,j}$.
The vertex $\tilde{x}_i$ is adjacent to every vertex in $V_i^{*}$ and to each vertex \rev{in} $\tilde{X}_i$.
The vertex $\tilde{y}_i$ is adjacent to every vertex in $V_i' \cup L_i \cup \tilde{Y}_i$.
Finally, add the edge $\tilde{x}_i\tilde{y}_i$.

\paragraph{Edge gadgets.}
For each $ij \in E(H)$, the edge gadget $G_{i,j}$ consists of the following vertices: a vertex $q_{i,j}$; two disjoint copies $V_{i,j}^{*}$ and $V_{i,j}'$ of $E_{i,j}$; two vertices $\tilde{x}_{i,j}$ and $\tilde{y}_{i,j}$; and their respective private neighbors $\tilde{X}_{i,j} \coloneq \{\tilde{x}_{i,j}^{r} \mid r \in [k'+1]\}$ and $\tilde{Y}_{i,j} \coloneq \{\tilde{y}_{i,j}^{r} \mid r \in [k'+1]\}$.

For each $uv\in E_{i,j}$ with $u \in V_i$ and $v \in V_j$, let $z^{*}_{uv}\in V_{i,j}^{*}$ and $z'_{uv}\in V_{i,j}'$ denote the corresponding vertices.
The edges of $G_{i,j}$ are defined as follows.
The vertex $q_{i,j}$ is adjacent to all vertices in $V_{i,j}^{*}$.
For each $uv \in E_{i,j}$ with $u \in V_i$ and $v \in V_j$, add the edge $z^{*}_{uv}z'_{uv}$.
The vertex $\tilde{x}_{i,j}$ is adjacent to every vertex in $V_{i,j}^{*}$ and to every vertex in $\tilde{X}_{i,j}$.
The vertex $\tilde{y}_{i,j}$ is adjacent to every vertex in $V_{i,j}' \cup \tilde{Y}_{i,j}$.
Finally, add the edge $\tilde{x}_{i,j}\tilde{y}_{i,j}$.

\medskip
For each edge $uv \in E(G)$ with $u \in V_i$ and $v \in V_j$, add the edges $z^{*}_{uv}\ell_{u,i,j}$ and $z^{*}_{uv}\ell_{v,j,i}$.
This completes the construction of the instance $(G',k')$ of \prbOD.


We verify the structural properties of the constructed graph.

\begin{restatable}[\restateref{lem:w_hard_property}]{lemma}{boundeddg}\label{lem:w_hard_property}
    The graph $G'$ is bipartite, $4$-degenerate, and has girth at least~$4$.
\end{restatable}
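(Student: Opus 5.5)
To prove the lemma, I will exhibit an explicit bipartition, then a degeneracy ordering, and finally rule out triangles. Girth at least $4$ follows immediately from bipartiteness, since a bipartite graph has no odd cycles and in particular no triangles. So the work is to find a proper $2$-coloring and a good elimination ordering.

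\emph{Bipartition.} I will put into side $A$ the vertices $q_i$, $v'\in V_i'$, $\ell_{v,i,j}\in L_i$, $\tilde{x}_i$, and $\tilde{Y}_i$ for every $i$, together with the analogous $q_{i,j}$, $z'_{uv}$, $\tilde{x}_{i,j}$, and $\tilde{Y}_{i,j}$ for every edge gadget. Side $B$ receives $V_i^*$, $\tilde{y}_i$, $\tilde{X}_i$, $V_{i,j}^*$, $\tilde{y}_{i,j}$, and $\tilde{X}_{i,j}$. I then check every edge type against this assignment:
\begin{itemize}
\item $q_i v^*$, $v^*v'$, $v^*\ell_{v,i,j}$, and $\tilde{x}_i v^*$ each join $A$ to $B$;
\item $\tilde{x}_i\tilde{x}_i^r$ joins $A$ to $B$;
\item $\tilde{y}_i$ (in $B$) is joined to $V_i'$, $L_i$, and $\tilde{Y}_i$, all in $A$;
\item $\tilde{x}_i\tilde{y}_i$ joins $A$ to $B$.
\end{itemize}
The edge gadgets are checked the same way. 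The cross edges $z^*_{uv}\ell_{u,i,j}$ join $B$ to $A$. Hence $G'$ is bipartite.

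\emph{$4$-degeneracy.} I will use the ordering characterization from the preliminaries and delete vertices greedily, each having degree at most $4$ in what remains.
\begin{enumerate}
\item Remove all pendant vertices $\tilde{X}$ and $\tilde{Y}$ (degree $1$), then every $v'$ and $z'_{uv}$ (degree $2$: a starred copy and a $\tilde{y}$).
\item Remove each $z^*_{uv}$. Its neighbors are $q_{i,j}$, $z'_{uv}$ (already removed), $\tilde{x}_{i,j}$, $\ell_{u,i,j}$, and $\ell_{v,j,i}$, so it has degree at most $4$ at this point.
\item Each $\ell_{v,i,j}$ now has neighbors only $v^*$ and $\tilde{y}_i$, so remove it.
\item Each $v^*$ now has neighbors only $q_i$ and $\tilde{x}_i$, so remove it.
\item What remains is the disjoint edges $\tilde{x}\tilde{y}$ and the isolated vertices $q$, all of which are trivially removable.
\end{enumerate}

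The only step needing care is the degree count of $z^*_{uv}$. This is where the removal order matters: $z'_{uv}$ must be removed before $z^*_{uv}$, so that the remaining count is at most $4$. One must also confirm that $\ell_{u,i,j}$ receives edges from many $z^*$ vertices, which is harmless because $\ell$ is removed after all the $z^*$. Since each removal leaves the next vertex in the list with degree at most $4$, $G'$ is $4$-degenerate.
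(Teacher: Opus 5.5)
Your proposal is correct and follows the paper's proof: the same bipartition, girth from bipartiteness, and a degeneracy ordering that removes the pendants and primed copies before the $z^*_{uv}$, so each $z^*_{uv}$ has at most four remaining neighbors and every other vertex at most two. The only difference is that you delete $V_i'$ before $L_i$, whereas the paper deletes it after; this makes no difference.
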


We can show that $(G,H)$ is a yes-instance of \PSI\ if and only if $(G',k')$ is a yes-instance of \prbOD.
Moreover, $|V(G')|$ is bounded by a polynomial in $n_H$ and $|V(G)|$, and $k' = 4n_H + 4m_H = O(m_H)$.

Suppose that \prbOD\ can be solved in time $f(k') \cdot |V(G')|^{o(k'/\log k')}$ for some computable function $f$.
Then \PSI\ can be solved in time $f'(m_H) \cdot |V(G)|^{o(m_H/\log m_H)}$ for some computable function $f'$, contradicting ETH.
Hence, no such algorithm for \prbOD\ exists, which completes the proof of \Cref{thm:hard_k}.

\subsection{$\W[1]$-hardness on split graphs.}\label{subsec:split}
A graph is \emph{split} if its vertex set can be partitioned into a clique and an independent set.
We prove the following theorem.
\begin{restatable}[\restateref{thm:hard_split}]{theorem}{thmhardsplit}
    {\prbOD} is $\W[1]$-hard when parameterized by $k$ even for split graphs.
    \label{thm:hard_split}
\end{restatable}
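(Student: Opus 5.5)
We reduce from \prb{Multicolored Clique}, parameterized by the number $\kappa$ of colors. Let $G$ have vertex set partitioned into color classes $V_1,\dots,V_\kappa$, each an independent set; the question is whether $G$ has a clique containing exactly one vertex from each class. We will build a split graph $G'$ with clique $K$ and independent set $I$, together with a budget $k'=O(\kappa^2)$. The guiding principle is that clique vertices are cheap to control by parity. If $D\cap K$ has size $s$, then every clique vertex $w$ sees $s$ plus the number of its $I$-neighbours in $D$. So after a global parity adjustment, the condition on clique vertices is a local condition on their independent-set neighbours. Conversely, each independent vertex sees only its clique neighbours plus possibly itself. We will use the independent vertices as \emph{checkers} and the clique vertices as \emph{selectors}.

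\textbf{Construction.} The clique $K$ contains one selector vertex $s_v$ for each $v\in V(G)$ and one selector $s_e$ for each edge $e\in E(G)$. To each color class and each color pair we attach, via independent vertices, enforcement gadgets. For each $i$, we add many independent vertices of degree one attached to a special clique vertex $x_i$ (as with the $\tilde X_i$ pendants in the proof of \Cref{thm:hard_k}). Since there are more than $k'$ of them, $x_i$ cannot be compensated: its pendants force $x_i\in D$ and fix parities. We then add checker vertices $c$ in $I$ whose neighbourhood is $\{s_v: v\in V_i\}\cup\{x_i\}$. Such a checker requires an even number of selectors from $V_i$, or an odd number after offsets; combined with the size budget, this makes $D$ pick exactly one $s_v$ per class. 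Analogously, for each pair $ij$ we force exactly one $s_e$ with $e\in E_{i,j}$. For incidence, for each $i$, each $j\neq i$ and each $v\in V_i$, we add a checker in $I$ adjacent to $s_v$ and to all $s_e$ with $e\in E_{i,j}$ incident to $v$, plus a fixed forced vertex to fix the parity. This checker is satisfied exactly when ``$v$ chosen'' and ``the chosen $e$ is incident to $v$'' agree.

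We then need to handle the checkers themselves. A checker $c\notin D$ needs an odd count in its closed neighbourhood. Some of these conditions can be realised by toggling $c$ into $D$, so we must make that too expensive; the plan is to replace each checker by $k'+1$ twin copies. Twins share neighbourhoods, so either all copies enter $D$ (impossible under the budget) or the condition is enforced. Finally, we fix the global parity issue for $K$. Every clique vertex sees $|D\cap K|$, and we add a padding clique vertex with pendant gadgets so that $|D\cap K|$ has a prescribed parity. Each clique selector's own condition then reduces to the parity of its twin-checker neighbours in $D$, which is zero. We set $k'=\kappa+\binom{\kappa}{2}+O(\kappa)$.

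\textbf{Correctness and main obstacle.} The forward direction takes the clique's vertices and edges together with the forced vertices, and verifies each parity condition by counting. The backward direction argues, in order, that no twin-checker class enters $D$, that forced vertices are in $D$, that at most one selector per group is chosen by the budget and at least one by parity, and that incidence follows from the incidence checkers. The hardest step is making the parity conditions at the clique vertices consistent. Every selector $s_v$ is adjacent to the whole clique, so its condition depends on the global count plus its checker neighbours, and the checkers are not in $D$. Hence all clique vertices impose the same parity constraint, which must be satisfiable both when $s_v\in D$ and when $s_v\notin D$. In a closed neighbourhood, membership does not change the count beyond $|D\cap K|$, so this works provided $|D\cap K|$ is odd. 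The count $\kappa+\binom{\kappa}{2}+(\text{forced})$ is fixed by the budget, so one padding vertex suffices to fix its parity. I would double-check this carefully, and if needed add a second pendant-forced clique vertex to adjust the parity.
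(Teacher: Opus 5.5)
Your reduction is correct once one detail is fixed, and it takes a genuinely different route from the paper.

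\textbf{The fix.} The class and pair checkers must demand an \emph{odd} count. Their neighbourhood should be exactly $\{s_v : v\in V_i\}$ (resp.\ $\{s_e : e\in E_{i,j}\}$), or that set plus an even number of forced vertices. As you literally wrote it, the checker also sees $x_i$, and $x_i\in D$. It then requires an even number of class-$i$ selectors, which zero satisfies, so the ``at least one per group'' half of your tightness argument disappears.

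With that fixed, set $k'$ to $\kappa+\binom{\kappa}{2}$ plus the exact number of pendant-forced vertices. Then everything goes through:
\begin{itemize}
\item the twin and pendant arguments give $D\subseteq K$;
\item tightness gives exactly one selector per group;
\item the incidence checkers, each with one forced neighbour, force $e_{ij}=v_iv_j$;
\item every clique vertex only needs $|D|$ odd, which the forced padding vertex arranges.
\end{itemize}

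\textbf{Comparison with the paper.} The paper reduces instead from \prbOD\ on general graphs, which is already $\W[1]$-hard. It makes $k$ odd by adding an isolated vertex and takes $V(G)\cup\{z\}$ as the clique. It then adds $k+1$ independent copies $v^i$ of every vertex with $N_{G'}(v^i)=N_G[v]$, and keeps $k'=k$.

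The core mechanisms are the same as yours. False twins on the independent side cannot enter a solution of size at most $k$. Once the solution lies in the clique, every clique vertex sees exactly $|D'|$. The differences are:
\begin{itemize}
\item Parity is handled by the optional vertex $z$ together with the odd budget, rather than by a tight budget and a forced padding vertex.
\item No selector/checker gadgets are needed, because the copies $v^i$ simply re-encode the original closed-neighbourhood constraints.
\end{itemize}

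The paper's route is shorter and changes the parameter by at most one. So it would also carry the ETH bound of \Cref{thm:hard_k} over to split graphs. Your blow-up $k'=\Theta(\kappa^2)$ from \prb{Multicolored Clique} yields only $\W[1]$-hardness and an $n^{o(\sqrt{k})}$-type lower bound. In exchange, your argument is self-contained and does not rely on hardness for general graphs.
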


\section{FPT algorithm on graphs with girth at least $5$}\label{sec:FPTgirth5}

In this section, we prove the following theorem.

\begin{theorem}\label{thm:FPT_girth5}
    \prbOD\ can be solved in time $2^{O(k^3)}$ on graphs of girth at least~$5$.
\end{theorem}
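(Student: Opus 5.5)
The plan is to first turn the parity constraint into a domination constraint and then exploit girth to force and localise the solution. If $D$ is an ODD-set, then $|N_G[v]\cap D|\ge 1$ for every $v$, so $D$ is in particular a dominating set of size at most $k$. Girth at least $5$ means there are no triangles and no $4$-cycles. Hence, for a vertex $v\notin D$, every $u\in D$ has at most one vertex of $N_G(v)$ in its closed neighbourhood. If $u\notin N_G(v)$, two such neighbours would give a $C_4$. If $u\in N_G(v)$, any other such neighbour would give a triangle. So $\deg(v)\le k$ whenever $v\notin D$.

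\textbf{Step 1: forced vertices.} Every vertex of degree greater than $k$ lies in $D$. Let $H$ be this set. If $|H|>k$ we reject. Otherwise $H\subseteq D$, and we set $D'\coloneqq D\setminus H$, whose vertices all have degree at most $k$.

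\textbf{Step 2: the residual requirement.} Let $W$ be the set of vertices $v$ with $|N_G[v]\cap H|$ even. These are exactly the vertices that $D'$ must hit an odd number of times; every other vertex must be hit an even number of times. Since $W\subseteq N_G[D']$ and each vertex of $D'$ has degree at most $k$, we may reject if $|W|>k(k+1)$. Otherwise $W$ is a set of $O(k^2)$ vertices that can be computed.

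\textbf{Step 3: localising $D'$.} Consider the auxiliary graph on $D'$ in which two vertices are adjacent when their closed neighbourhoods intersect. The mod-$2$ sums $\sum_{u\in K}\mathbf 1_{N[u]}$ over the connected components $K$ have pairwise disjoint supports. Therefore any component whose sum vanishes can be deleted, which keeps the solution valid and does not increase its size. So we may assume every component contains a vertex within distance $1$ of $W$. A component has at most $k$ vertices and consecutive vertices are at distance at most $2$. Hence every vertex of $D'$ lies within distance $O(k)$ of $W$, along paths whose intermediate steps are constrained by the degree bound and by girth.

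\textbf{Step 4: bounding the candidates and brute force.} The aim is to show that the candidates for $D'$ can be confined to a set $C$ of size $O(k^3)$. Then we try all subsets of $C$ and check each in polynomial time, for a total running time of $2^{O(k^3)}$, as claimed in \Cref{thm:FPT_girth5}. The natural way to get this bound is to grow outward from $W$ in $O(k)$ rounds. At each round we only add low-degree vertices that are needed to repair a parity defect. We would then argue, using the "at most one common neighbour" property, that each of the at most $k$ solution vertices contributes at most $O(k^2)$ new candidates.

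\textbf{Main obstacle.} The weak point is Step 3. Two vertices of $D'$ may conflict only through a shared neighbour in $H$. High-degree vertices would then make the distance argument useless, and a component of $D'$ whose parity sum vanishes outside $H$ may still be needed to fix parities on $H$ itself.

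I would handle this by splitting the conflict graph. The parity constraints outside $H$ are localised using only shared neighbours in $V\setminus H$. The at most $2^{k}$ parity patterns on $H$ are handled separately. For each pattern we guess it, and then argue that a smallest $D'$-component realising it is small and lies near $H$. This uses two girth facts: each $h\in H$ has at most one neighbour in $N_G[u]$ for a low-degree $u$, and such a component must contain an edge $uu'$, since otherwise each $u$ itself is hit only once. If this cannot be made to stay within $O(k^3)$ candidates, the fallback is a branching step. We branch on which vertex of $N_G[w]$, for some $w\in W$, enters $D'$. This gives at most $k+1$ choices per step, repeated up to $k$ times, so the branching contributes only a factor of $(k+1)^k$. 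That is absorbed in $2^{O(k^3)}$, and the final check on the remaining bounded instance is then done exhaustively.
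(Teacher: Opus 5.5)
There is a genuine gap. The theorem is not established because the case that carries the real difficulty is left open. Your preliminary observations are correct. In girth at least $5$, every vertex outside an ODD-set $D$ has degree at most $|D|\le k$, so the high-degree set $H$ is forced. The reformulation $\sum_{u\in D'}\mathbf{1}_{N[u]}\equiv \mathbf{1}_W$ and the deletion of zero-sum conflict components are also sound.

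The problem is what remains after your fallback. Branching on a low-degree $w\in W$ is essentially the paper's Branching Rule, and it only removes odd requirements from vertices of degree at most $k$. Afterwards all odd requirements sit on the forced set $F$, which has $|F|\le k$ but may contain vertices of huge degree, and every other vertex must be hit an even number of times. You call this a ``remaining bounded instance'', but it is not bounded.

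A solution there is a union of small pieces $K\subseteq V\setminus F$ whose closed-neighbourhood sums vanish outside $F$. For instance, any edge $uu'$ with $N(u)\setminus\{u'\}\subseteq F$ and $N(u')\setminus\{u\}\subseteq F$ is such a piece, and there can be $\Theta(n)$ of them around one high-degree forced vertex. This breaks each of your tools:
\begin{itemize}
\item The distance-to-$W$ localisation says nothing, because ``within distance $1$ of $W$'' already contains all of $N(F)$.
\item Guessing the $2^k$ patterns on $H$ and claiming a realising component ``lies near $H$'' does not bound the candidates.
\item Exhaustively choosing up to $k$ mutually non-interfering pieces with the right combined pattern on $F$ costs $n^{\Theta(k)}$, which is only XP.
\end{itemize}
Separately, the $O(k^3)$ candidate set of Step~4 is stated as an aim and never derived.

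The missing idea, used in the paper, is that a vertex $v\notin F$ affects the parities on $F$ only through its class $X=N(v)\cap F$. So one guesses the number $\mathbf{c}(X)$ of solution vertices in each class $C_X$; there are $2^{O(k^2)}$ guesses, and each guess settles the parity at every vertex of $F$ directly (Reduction Rule~\ref{rule:deletionF}). What remains is to find $S$ in $G-F$, which has bounded maximum degree (at most $2k$ in the paper, at most $k$ in your setting). Every closed-neighbourhood count of $S$ must be even, and the class counts are prescribed. Since $|N_H[S]|\le k(2k+1)$, random separation on the edges makes every piece of $S$ a union of red components of bounded size. A dynamic programme over these components, tracking the count vector, then finds $S$, and universal sets derandomise it. Some colour-coding or representative-set mechanism of this kind is needed to pick non-interfering pieces among unboundedly many locations, and your proposal contains none.
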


To prove \cref{thm:FPT_girth5}, we introduce an annotated variant of \prbGen.
An instance of \prb{Annotated} \prbGen consists of a graph $G=(V,E)$, a partition $(\Vopen,\Vclose)$ of $V$, a set $F \subseteq V$ of \emph{forbidden vertices}, an offset function $f \colon V \to \{0,1\}$, and an integer $k \geq 0$. 
The task is to determine whether there exists a set $D \subseteq V \setminus F$ such that $|D \cup F| \leq k$ and $|N_G[v] \cap D| + f(v) \equiv_2 0 $ for every $v \in V$.

An instance $(G,k)$ of \prbOD\ can be transformed into the equivalent annotated instance $(G,\Vopen,\Vclose,F,f,k)$ by setting $\Vopen = \emptyset$, $\Vclose = V(G)$, $F = \emptyset$, and $f(v)=1$ for all $v \in V(G)$.
Throughout this section, the partition $(\Vopen,\Vclose)$ \rev{is fixed}.
\rev{Accordingly, we abbreviate instances as $(G,F,f,k)$.}

We exhaustively apply the following reduction rules in the order listed below.

\begin{redrule} \label{rule:kzero}
    If $|F| > k$, return ``\scno''.
\end{redrule}

\begin{redrule} \label{rule:solve}
    If $f^{-1}(1) = \emptyset$, return ``\scyes''.
\end{redrule}

\begin{redrule} \label{rule:manyON}
    If there exists a vertex $v$ such that $|N_G(v) \cap f^{-1}(1)| > k$, then return the instance $(G, F', f', k)$, where $F' = F \cup \{v\}$ and
    \[
        f'(u) =
        \begin{cases}
        \overline{f(u)} & \text{if } u \in N_G[v],\\
        f(u) & \text{otherwise}.
        \end{cases}
    \]
\end{redrule}

Each reduction rule can clearly be applied in polynomial time.
Moreover, \Cref{rule:manyON} increases $|F|$ by one while leaving $k$ unchanged.
\revn{Consequently, \Cref{rule:manyON} can be applied at most $k$ times.}
Furthermore, as the graph $G$ is never modified, its girth remains at least~$5$ throughout the reduction process.

The safeness of \cref{rule:kzero,rule:solve} is immediate.
The following lemma is central to our algorithm.

\begin{restatable}[\restateref{lem:girth_degree}]{lemma}{girthdegree}\label{lem:girth_degree}
    After exhaustive application of \Cref{rule:kzero,rule:manyON,rule:solve}, every vertex not in $F$ has degree at most $2k$.
\end{restatable}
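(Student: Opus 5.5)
The plan is to read the lemma in the only non-vacuous situation. If \Cref{rule:kzero} or \Cref{rule:solve} fires, the algorithm terminates and there is nothing to prove. So I assume the final instance $(G,F,f,k)$ satisfies three conditions: $|F|\le k$, $f^{-1}(1)\neq\emptyset$, and \Cref{rule:manyON} is no longer applicable. The last condition says that every vertex $v\notin F$ satisfies $|N_G(v)\cap f^{-1}(1)|\le k$. For a fixed $v\notin F$, I split $N_G(v)$ into its ON part $N_G(v)\cap f^{-1}(1)$ and its OFF part $N_G(v)\cap f^{-1}(0)$. The ON part has size at most $k$ directly from the exhaustiveness of \Cref{rule:manyON}. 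The goal is therefore to bound the OFF part by $k$ as well.

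For the OFF part I will use the girth assumption through a single structural fact. Since $G$ has girth at least $5$, it contains no triangle and no $C_4$. Hence every vertex $u\neq v$ has at most one neighbour in $N_G(v)$: two such neighbours would give a $C_4$ through $v$. Likewise a vertex $u\in N_G(v)$ has no neighbour in $N_G(v)$, since that would give a triangle. Consequently $|N_G[u]\cap N_G(v)|\le 1$ for every $u\neq v$.

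The step I expect to be hardest is the OFF part, because \Cref{rule:manyON} says nothing directly about OFF neighbours; the argument must come from solution sets. Let $D$ be any feasible set, with $v\in D$ and $|D|\le k-|F|$. Each OFF neighbour $w$ of $v$ needs $|N_G[w]\cap D|$ even. Because $v\in N_G[w]\cap D$, the set $N_G[w]\cap D$ must contain some $u\in D\setminus\{v\}$. By the fact above, each such $u$ serves at most one $w$. Hence $v$ has at most $|D|-1\le k-1$ OFF neighbours, and so $\deg(v)\le 2k-1$.

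This leaves vertices $v\notin F$ that lie in no solution. My first attempt there is to argue that such a $v$ can be moved into $F$, or simply ignored, because the later branching only ever picks vertices as candidate members of $D$. However, literal application of the three rules does not force this. For example, a high-degree OFF star centre far from the unique ON vertex is untouched by all three rules. So I would present the lemma as: every vertex $v\notin F$ that can still appear in a feasible set has degree at most $2k$. I would then check whether the paper's later use of the lemma only concerns such candidate vertices.
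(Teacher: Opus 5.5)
Your bound on the ON neighbours and your structural fact ($|N_G[u]\cap N_G(v)|\le 1$ for every $u\neq v$) are both correct. The gap is in the OFF part: you count through solution sets, and that only controls vertices lying in some solution. The idea you are missing is an invariant of the instances the algorithm actually produces. They start from $F=\emptyset$ and $f\equiv 1$, and $f$ changes only when a vertex $w$ is added to $F$ (by \Cref{rule:manyON} or later \Cref{rule:non_fix_branch}), at which point $f$ is flipped on $N_G[w]$. Hence $f(u)\equiv_2 1+|N_G[u]\cap F|$ for every $u$, so every vertex with $f(u)=0$ lies in $N_G[w]$ for some $w\in F$.

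Now fix any $v\notin F$ and assign to each OFF neighbour of $v$ such a witness $w\in F$. Since $w\neq v$, your structural fact says $w$ witnesses at most one neighbour of $v$; the closed-neighbourhood form also covers the case where the OFF neighbour is itself in $F$. So the assignment is injective, and $|N_G(v)\cap f^{-1}(0)|\le |F|\le k$ by \Cref{rule:kzero}. This holds for every $v\notin F$, whether or not $v$ belongs to any solution. Your star example therefore cannot occur: each OFF leaf would need its own vertex of $F$ in its closed neighbourhood, which forces $|F|>k$. It refutes the lemma only for arbitrary annotated instances, which are not the ones the lemma is about.

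Your proposed weakening, restricting to vertices that can appear in a feasible set, is also too weak for the rest of the algorithm. Your counting itself is fine and gives $\deg(v)\le 2k-1$ for $v\in D$. But \Cref{rule:non_fix_branch} needs $|N_G[v]\setminus F|\le 2k+1$ for the ON vertex $v$ being branched on. That vertex has to be dominated and need not lie in any solution. Likewise, the random-separation step bounds $|\Gamma_S|$ using the degrees of the vertices in $N_H(S)$, which are not solution vertices either. So the invariant above cannot be replaced by your solution-based argument.
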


We introduce the following branching rule.

\begin{brrule} \label{rule:non_fix_branch}
    Suppose that there exists a vertex $v \in V(G)\setminus F$ such that $f(v)=1$.
    Let $N_G[v]\setminus F = \{v_1,v_2,\dots,v_d\}$.
    For each $i \in [d]$, branch into the instance $(G, F \cup \{v_i\}, f_i, k)$, 
    where
    \[
        f_i(u) =
        \begin{cases}
            \overline{f(u)} & \text{if } u \in N_G[\rev{v_i}],\\
            f(u) & \text{otherwise},
        \end{cases}
    \]
    for every $u \in V(G)$.
\end{brrule}

Each application of \Cref{rule:non_fix_branch} increases $|F|$ by one.
Hence, the depth of the search tree is at most $k$.
By \Cref{lem:girth_degree}, every vertex in $V(G)\setminus F$ has degree at most $2k$.
Therefore, $|N_G[v]\setminus F| \le \deg_G(v)+1 \le 2k+1$. 
It follows that each branching step generates at most $2k+1$ subinstances.
Consequently, the total number of instances generated by \Cref{rule:non_fix_branch} is at most $(2k+1)^k$.

After exhaustive application of \Cref{rule:kzero,rule:solve,rule:manyON,rule:non_fix_branch}, every resulting instance $(G,F,f,k)$ satisfies $|F|\leq k$, $f^{-1}(1)\subseteq F$, and $\deg_G(v)\le 2k$ for every $v\in V(G)\setminus F$.

We next partition the vertices in $V(G) \setminus F$ according to their neighborhoods in $F$.
For each subset $X \subseteq F$, define 
$C_X \coloneq \{\, v \in V(G)\setminus F \mid N_G(v)\cap F = X \}$.
Clearly, the family $\{C_X\}_{X \subseteq F}$ forms a partition of $V(G) \setminus F$.

We iterate over all functions $\mathbf{c} \colon 2^{F} \to [0, k]$ satisfying 
\begin{equation}
    \label{ineq:cdef}
    |F| + \sum_{X \subseteq F} \mathbf{c}(X) \leq k.
\end{equation}
Intuitively, the value $\mathbf{c}(X)$ represents the number of vertices to be selected from $C_X$ \rev{into} the solution.
\revn{The number of such functions is at most the number of ways to assign at most $k-|F|$ selected vertices among the $2^{|F|}$ classes. Hence, it is bounded by $(2^{|F|}+1)^{k - |F|} =2^{O(k^2)}$.}

Fix a function $\mathbf{c}$.
Our goal is to determine whether there exists a set $S \subseteq V(G) \setminus F$ satisfying the following conditions:
\begin{itemize}
    \item $|S \cap C_X| = \mathbf{c}(X)$ for each $X \subseteq F$, and
    \item $S$ is a PD-set for the instance $(G, F, f, k)$.
\end{itemize}
By \Cref{ineq:cdef}, every such set $S$ yields a solution for $(G,F,f,k)$.

To this end, we define the following auxiliary problem. 
An instance consists of a graph $H$ with maximum degree $\Delta(H) \le 2k$, a partition $\mathcal{V} = \{V_1, V_2,\dots, V_{\ell}\}$ of $V(H)$ with $\ell \le 2^k$, and a function $\mathbf{c}' \colon \mathcal{V} \to [0,k]$. 
The task is to determine whether there exists a set $S \subseteq V(H)$ of size at most $k$ such that:
\begin{inparaenum}
    \item[{(1)}] $|N_H(v) \cap S| \equiv_2 0$ for every $v \in V(H)$, and
    \item[{(2)}] $|S \cap V_i| = \mathbf{c}'(V_i)$ for every $i \in [\ell]$.
\end{inparaenum}

We now describe a reduction from our problem to the auxiliary problem. 
For a vertex $v \in F$, define $\mathcal{X}_v \coloneq \{ X \subseteq F \mid v \in X \}$. 
The following reduction rule handles the parity requirements of vertices in $F$.

\begin{redrule} \label{rule:deletionF}
    If there exists a vertex $v \in F$ such that
    \[
        \sum_{X \in \mathcal{X}_v} \mathbf{c}(X) + f(v) \not\equiv_2 0,
    \]
    then return ``\scno''.
\end{redrule}

\rev{Suppose that \Cref{rule:deletionF} does not return ``\scno''.
We construct an instance of the auxiliary problem as follows.
Let $H \coloneq G - F$. 
The partition of $V(H)$ is $\mathcal{V} = \{C_X\}_{X \subseteq F}$.
We further define the function $\mathbf{c}'$ by setting $\mathbf{c}'(C_X) \coloneq \mathbf{c}(X)$ for every $X \subseteq F$.}

Since $|F| \leq k$, we obtain $|\mathcal{V}| \leq 2^k$, which satisfies the requirements of the auxiliary problem. 
Furthermore, by \Cref{lem:girth_degree}, the maximum degree of $H$ is at most $2k$.

\subsubsection{Solving the auxiliary problem}
We provide an $\FPT$ algorithm for the auxiliary problem parameterized by $k$, which completes the proof of \Cref{thm:FPT_girth5}.
Our approach employs the \emph{random separation} technique of Cai et al.~\cite{random_sep:CaiCC06}; see also~\cite[Section~5.3]{book:CyganFKLMPPS15}.

Fix a solution $S \subseteq V(H)$ to the auxiliary problem.
Let $\Lambda_S$ denote the set of edges \rev{in $H$ incident to vertices in $S$.}
Furthermore, let $\Gamma_S$ denote the set of edges \rev{in $H$ incident to vertices in $N_H(S)$ but not incident to vertices in $S$.}

\rev{We independently color each edge of $H$ either $\redcol$ or $\bluecol$ uniformly at random.}
Let $\chi \colon E(H) \to \{\redcol, \bluecol\}$ denote the resulting \rev{edge-coloring}.
We say that $\chi$ is \emph{successful} with respect to $S$ if all edges in $\Lambda_S$ are colored $\redcol$ and all edges in $\Gamma_S$ are colored $\bluecol$.
Since $\Delta(H) \leq 2k$ and $|S| \leq k$, we obtain $|N_H(S)| \leq 2k^2$.
Thus, the number of possible colorings of the edges in $\Lambda_S \cup \Gamma_S$ is bounded by:
\[
    |\Lambda_S| + |\Gamma_S| \leq \Delta(H) \cdot (|S| + |N_{\rev{H}}(S)|) \leq 2k \cdot (k + 2k^2) = 4k^3 + 2k^2.
\]
Therefore, a random coloring $\chi$ is successful with probability at least 
\[
    2^{-(|\Lambda_S|+|\Gamma_S|)}
    \geq 2^{-(4k^3+2k^2)}.
\]

Let $H_R = (V(H), \chi^{-1}(\redcol))$ be the subgraph of $H$ consisting of all red edges. 
We call a vertex $v \in V(H)$ \emph{consistent} if all edges incident to $v$ in $H$ is colored $\redcol$.
Observe that if $\chi$ is successful, every vertex $v \in S$ is consistent, since every edge incident to a vertex in $S$ belongs to $\Lambda_S$.

We design a dynamic programming algorithm to find $S$ from $H_R$.
Let $\{C_i\}_{i = 1}^{t}$ denote the connected components of $H_R$,
and for each $i \in [0,t]$, let $H_R^i$ be the subgraph \rev{of $H_R$} induced by $\bigcup_{j=1}^i V(C_j)$, where $H_R^0$ is the empty graph. 
For each $i \in [0,t]$ and each function $\dpvec \colon [\ell] \to [0,k]$, we define $\auxDP(i, \dpvec) = \true$ if there exists a set $D \subseteq V(H_R^i)$ such that:
\begin{itemize}
    \item $|N_H(v) \cap D| \equiv_2 0$ for every $v \in V(H_R^i)$;
    \item $|D \cap V_j| = \dpvec(j)$ for every $j \in [\ell]$; and
    \item every vertex in $D$ is consistent.
\end{itemize}
Otherwise, define $\auxDP(i, \dpvec) = \false$. 
The base case is given by $\auxDP(0, \dpvec) = \true$ if and only if $\dpvec(j) = 0$ for every $j \in [\ell]$.

For $i \ge 1$, we observe that if $|V(C_i)| > k(2k+1)$, then $C_i$ cannot contain any \rev{vertex} of $S$, since \rev{$|N_H[S]| \leq |S| + |N_H(S)| \leq k + 2k^2 = k(2k+1)$}.
Therefore, in this case, $\auxDP(i, \dpvec) = \auxDP(i-1, \dpvec)$.

\rev{Suppose now that $|V(C_i)| \leq k(2k+1)$.}
Let $\mathcal{D}_i$ denote the family of all subsets $D \subseteq V(C_i)$ of size at most $k$ such that, for all $v \in V(C_i)$, $v$ is consistent and satisfies $|N_H(v) \cap D| \equiv_2 0$. 
Since $|V(C_i)| \leq 2k^2+k$, we obtain \rev{$|\mathcal{D}_i| \leq \sum_{r=0}^{k} \binom{|V(C_i)|}{r} \leq (2k^2+k+1)^k$.}
Moreover, $\mathcal{D}_i$ can be computed in time $O((2k^2+k+1)^k) \cdot n^{O(1)}$.

For each $D \in \mathcal{D}_i$, define the function $\dpvec_D(j) = |D \cap V_j|$ for every $j \in [\ell]$.
We then update the table according to $\auxDP(i, \dpvec) = \bigvee_{D \in \mathcal{D}_i} \auxDP(i-1, \dpvec - \dpvec_D)$.

The number of possible functions $\dpvec$ is at most $2^{O(k^2)}$.
\rev{Hence, each dynamic programming step can be processed in time $(2k^2+k+1)^k \cdot 2^{O(k^2)} \cdot n^{O(1)} = 2^{O(k^2)} \cdot n^{O(1)}$.}

\paragraph{Running time and correctness}

For a fixed coloring $\chi$, the dynamic programming decides in $2^{O(k^2)} \cdot n^{O(1)}$ time whether \rev{there exists a set $S$ satisfying} the requirements \rev{specified by} $\mathbf{c}'$. 
If the given instance is a no-instance, then $\auxDP(t, \mathbf{c}) = \false$ for any coloring $\chi$.
Otherwise, \rev{there exists a solution $S$ to the auxiliary problem.}
\rev{As shown above,} a random coloring $\chi$ is successful with respect to $S$ with probability at least $2^{-(4k^3 + 2k^2)}$, in which case the dynamic programming \rev{algorithm} returns $\true$.
Therefore, the auxiliary problem can be solved by a randomized algorithm in $2^{4k^3+2k^2}\cdot 2^{O(k^2)}\cdot n^{O(1)} = 2^{O(k^3)}\cdot n^{O(1)}$ time with constant probability. 

By \Cref{rule:non_fix_branch}, the algorithm generates at most $(2k + 1)^k$ instances of \prb{Annotated} \prbGen.
For each such instance, we consider at most $2^{O(k^2)}$ candidate functions $\mathbf{c}$.
For every choice of $\mathbf{c}$, we solve the corresponding auxiliary problem in randomized time $2^{O(k^3)} n^{O(1)}$.
Therefore, the overall running time is $(2k+1)^k\cdot 2^{O(k^2)}\cdot 2^{O(k^3)}\cdot n^{O(1)}=2^{O(k^3)}\cdot n^{O(1)}$. 
The algorithm outputs $\true$ only if it finds a valid solution.
Thus, the algorithm is a Monte Carlo algorithm with one-sided error.


Finally, the randomized algorithm can be derandomized using a standard tool called \emph{universal sets}; see, for example, Cygan et al.~\cite{book:CyganFKLMPPS15}.
This completes the proof of \Cref{thm:FPT_girth5}.

\section{Structural Parameters}\label{sec:structual_parameter}

\subsection{Clique-width}\label{subsec:clique-width}

We now state the main result of this subsection.

\begin{restatable}[\restateref{thm:cliquew}]{theorem}{fptcliquew}
    \label{thm:cliquew}
    \prbGen\ is fixed-parameter tractable when parameterized by clique-width.
\end{restatable}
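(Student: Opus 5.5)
We prove \Cref{thm:cliquew} by dynamic programming over a $w$-expression (irredundant $k$-expression) of the input graph. The paper claims time $4^{\cw}\cdot n^{O(1)}$, and that bound is what we aim for. If only an approximate expression is available, we first compute one with $2^{O(\cw)}$ labels via rank-width, which still gives an FPT algorithm. We may assume the expression is irredundant, i.e., every join $\eta_{a,b}$ adds only edges that are not yet present; standard normalization achieves this in polynomial time.

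The key observation is that parity conditions are linear over $\mathbb{F}_2$, so only parities matter. For a node $t$ with graph $G_t$, a candidate partial solution $D\subseteq V(G_t)$ is summarized by a state consisting of two vectors $\alpha,\beta\in\{0,1\}^{w}$.
\begin{itemize}
\item $\alpha(a)=|D\cap \mathrm{lab}^{-1}(a)| \bmod 2$ records the parity of the number of chosen vertices with label $a$.
\item $\beta(a)\in\{0,1\}$ records the parity deficit still required by the vertices of label $a$.
\end{itemize}
For $\beta$ to be well defined, we only allow states in which all vertices of the same label have the same current deficit, namely $f(v)+|N_{G_t}(v)\cap D|$ for $v\in\Vopen$, or $f(v)+|N_{G_t}[v]\cap D|$ for $v\in\Vclose$, taken mod $2$. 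Vertices of one label receive identical future neighborhoods, so any two same-label vertices with different deficits can never both be fixed, and such states can be discarded.

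We also need to handle labels with no vertices. To do so, we let $\beta(a)=\ast$ mark an empty label, or we keep a separate indicator of label emptiness; this indicator is determined by the expression and does not depend on $D$. The table entry for $(t,\alpha,\beta)$ stores the minimum $|D|$ achieving that state. This gives $4^{w}$ states per node.

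The operations are handled as follows.
\begin{itemize}
\item \emph{Create.} Creating a vertex $v$ with label $a$ gives two states: $v\notin D$, with deficit $f(v)$, or $v\in D$, with deficit $f(v)+[v\in\Vclose]$ and $\alpha(a)=1$.
\item \emph{Relabel.} The operation $\rho_{a\to b}$ merges the two labels. We set $\alpha(b)\leftarrow\alpha(a)+\alpha(b)$, and we require $\beta(a)=\beta(b)$ when both labels are nonempty, discarding the state otherwise. At the root, however, we will require all deficits to be zero, so a state can be discarded as soon as any vertex is forced to keep a nonzero deficit. Relabeling into a common class with differing deficits is therefore exactly such a forced failure, and discarding it is valid.
\item \emph{Join.} The join $\eta_{a,b}$ adds every $a$–$b$ edge. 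Irredundancy guarantees these are new edges, so each vertex of label $a$ gains exactly $|D\cap\mathrm{lab}^{-1}(b)|$ chosen neighbors. The update is therefore $\beta(a)\leftarrow\beta(a)+\alpha(b)$ and $\beta(b)\leftarrow\beta(b)+\alpha(a)$.
\item \emph{Disjoint union.} For $G_1\oplus G_2$, the new state is the coordinatewise sum of the $\alpha$'s, and the $\beta$'s must agree on labels that are nonempty in both parts.
\end{itemize}

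The main obstacle is the disjoint-union step within the $4^{w}$ bound. A naive combination pairs every state of one child with every state of the other, which costs $16^{w}$. Instead, we fix the common $\beta$ and observe that the combination over $\alpha$ is an XOR-convolution under $(\min,+)$. Since the table values are sizes bounded by $n$, we can handle it in two ways.
\begin{itemize}
\item For each target size $s$, we encode the table as a polynomial in $x$, where the exponent is the size, and apply the Walsh–Hadamard transform over $\{0,1\}^{w}$. This runs in $2^{w}w\cdot n^{O(1)}$ per $\beta$, for a total of $4^{w}\cdot n^{O(1)}$. To avoid cancellations, we count the number of solutions with integer weights rather than working mod $2$.
\item Alternatively, we restrict the $\alpha$-combination to the coordinates that are nonempty in both parts.
\end{itemize}

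At the root, we accept if some state with all deficits $0$ has value at most $k$. Correctness follows by induction, using the invariant that the state exactly captures everything future operations can observe.
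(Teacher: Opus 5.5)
Your proposal is correct and follows the paper's proof essentially step for step: you use the same parity states $(\alpha,\beta)\in\{0,1\}^{w}\times\{0,1\}^{w}$ over an irredundant $w$-expression, the same create/relabel/join/union transitions with $\beta(a)\leftarrow\beta(a)+\alpha(b)$ at joins, and the same reduction of the union step to a min-plus XOR-convolution encoded over $\mathbb{Z}[y]$ and evaluated by the Walsh--Hadamard transform, which gives $4^{w}\cdot n^{O(1)}$. Your second ``alternative'' for the union step (restricting to labels nonempty in both children) does not by itself achieve the $4^{w}$ bound, since all $w$ labels may be nonempty on both sides, but the transform-based route you give first already suffices.
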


\subsection{No Polynomial Kernel Parameterized by Structural Parameters} 
\label{subsec:kernel}

We use the framework of OR-cross-compositions~\cite{kernel:BodlaenderJK14}, a standard technique for proving kernelization lower bounds.
Our proof follows the approach used to establish kernelization lower bounds for \prb{Dominating Set} parameterized by vertex cover~\cite{book:CyganFKLMPPS15,DS:DomLS14}.

\begin{restatable}[\restateref{thm:vc_no_poly_kernel}]{theorem}{vcnopolykernel}
\label{thm:vc_no_poly_kernel}
Unless $\coNP \subseteq \NP/\text{poly}$,
\prbGen\ admits no polynomial kernel parameterized by vertex cover.
Moreover, \prbOD\ admits no polynomial kernel parameterized by cluster deletion \revn{number}, feedback vertex set \revn{number}, or treedepth.
\end{restatable}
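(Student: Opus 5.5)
We will give an OR-cross-composition into \prbGen\ parameterized by vertex cover, following the \prb{Dominating Set} lower bound of Dom et al.~\cite{DS:DomLS14}, and then adapt it to \prbOD\ with the weaker parameters.

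\textbf{Source problem.} We compose from an $\NP$-hard problem with a small-universe structure. A natural choice is \prb{Exact Cover by 3-Sets}, or \prbOinT\ viewed as ``choose sets covering each element exactly once'', which is a parity-friendly condition. The polynomial equivalence relation groups instances by universe size $|U|=N$ and target $k$. Given $t$ equivalent instances $(U,\mathcal{F}_1),\dots,(U,\mathcal{F}_t)$, we identify their universes.

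\textbf{Construction.}
\begin{itemize}
\item Create a vertex $u_e$ for each element $e\in U$, placed in $\Vclose$ with $f(u_e)=1$, so that it must be hit an odd number of times; equivalently, use $\Vopen$ with suitable offsets.
\item For each instance $i$, add its sets as vertices in $\Vopen$ with offset $0$, each adjacent to the element vertices it contains. These set vertices form an independent set.
\item Select the instance index via $2\log t$ bit-vertices $b_{p,0},b_{p,1}$, with a constraint vertex per bit position forcing exactly one of $b_{p,0},b_{p,1}$ into $D$, i.e.\ odd parity on the pair. Each set vertex of instance $i$ is connected to the bit vertices disagreeing with the binary encoding of $i$. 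The constraint is that a set vertex of instance $i$ may be chosen only if no disagreeing bit is chosen.
\end{itemize}
The difficulty is that parity cannot express ``no disagreeing bit chosen'' directly: a count of $0$ and a count of $2$ look the same. We fix this with budget and uniqueness. Setting the budget to $k+\log t$ leaves only $\log t$ bits, exactly one per position. We also make each set vertex $S$ require even parity from the bit side by attaching to it a private $\Vopen$-checker adjacent to $S$ and to the disagreeing bits. Then if $S$ is chosen, exactly one disagreeing bit would flip the checker, so, arguing per position, the index is pinned.

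The vertex cover is $U$ together with the bit vertices plus checker gadgets. If the checkers are private to each set, they enlarge the cover, so I would instead attach the checker condition through $O(\log t)$ shared vertices per bit position. The cover size is then $\mathrm{poly}(N+\log t)$.

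\textbf{Correctness and the main obstacle.} Correctness argues both directions. A yes-instance $i$ yields a PD-set from the bits of $i$ plus its exact cover. Conversely, the budget together with the bit constraints forces a single index, and the elements' odd parity then forces an exact cover drawn from that instance. The main obstacle is making the index-consistency constraint parity-expressible without per-set vertices inside the cover. I would first try per-(position, value) checker vertices whose parity counts the chosen sets together with the chosen bits; this might require doubling element vertices or adding pendant tricks.

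\textbf{Transfer to \prbOD.} With all $f=1$ and closed neighborhoods, we simulate $\Vopen$ and offsets by the standard trick of attaching pendant paths or twin pairs to adjust parity. Such attachments increase the cluster deletion number, feedback vertex set number, and treedepth only by $\mathrm{poly}$ in the original cover, since they are disjoint small trees or cliques hanging off cover vertices. We would redo the composition directly for \prbOD\ with $U$ plus the bits as modulator, whose removal leaves stars/cliques of bounded depth. This is where I expect to spend the most care.
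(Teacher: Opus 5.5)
Your proposal has a genuine gap at the obstacle you yourself flag: there is no working instance-selection gadget. The ones you sketch fail because in a parity problem \emph{every} vertex carries a constraint, including every vertex adjacent to a vertex that may or may not be selected.

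\textbf{Instance selection.} Your set vertices lie in $\Vopen$ with offset $0$ and are adjacent to the disagreeing bit vertices. A set vertex of instance $i$ therefore sees exactly $d(i,i^*)$ selected neighbors, where $d$ is the Hamming distance to the selected index $i^*$. Whenever this distance is odd the constraint is violated, so even a yes-instance is mapped to a no-instance. The checker $c_S$ fails the same way. For $S$ in the selected instance it sees $1$ selected neighbor if $S$ is chosen and $0$ otherwise, so its fixed offset decides in advance whether $S$ may be used. For $S$ in another instance it again sees $d(i,i^*)$. The shared per-(position, value) checkers are not specified and face the same objection.

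The missing idea is a \emph{colored} intermediate problem, which is what the paper uses. In \prb{Col-RBOD} the red side is split into classes $R^1,\dots,R^k$ and exactly one red vertex is picked per class. Only the never-selected blue vertices carry nontrivial constraints, and instance consistency is checked by blue vertices that look at two color classes at once. The paper merges two equivalent instances by adding, for $r\in[k-1]$, a vertex $u_r$ adjacent to $R_1^r\cup R_2^k$ and a vertex $v_r$ adjacent to $R_2^r\cup R_1^k$. Iterating this in a binary tree gives $|B|+k=O(p+k\log t)$. The binary-encoding composition for colored red-blue domination would also carry over, since mismatched indices give $0$ or $2$ selected neighbors at a check vertex, both even. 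Your X3C source fits this framework if each of $N/3$ color classes holds a copy of every set. For \prbGen, the classes are enforced by $z_i\in\Vopen$ with $f(z_i)=1$ adjacent to $R^i$, together with the budget $k$. Red vertices get offset $0$ and are trivially satisfied, because all their neighbors are unselected. This gives the vertex cover $B\cup\{z_1,\dots,z_k\}$.

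\textbf{Transfer to \prbOD.} Your transfer is only a plan, and the suggested pendant trick does not work. With closed neighborhoods and all offsets $1$, every \emph{unselected} red vertex must see an odd number of selected vertices. A pendant or twin attached to each red vertex forces one solution vertex per red vertex, which destroys the budget. You need gadgets that act on a whole color class.

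For cluster deletion number, the paper makes $R^i\cup\{z_i\}$ and $R^i\cup\{z_i'\}$ cliques. If no vertex of $R^i$ were selected, both twins $z_i,z_i'$ would have to be, exceeding the budget. Removing $B$ and all $z$'s leaves the cliques $R^1,\dots,R^k$.

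For feedback vertex set number and tree-depth, the paper adds a copy $\tilde r$ of every red vertex, a hub $x^i$ adjacent to $R^i\cup\tilde R^i$ with a pendant $y^i$, and $3k+1$ vertices $z^i_j$ adjacent to $R^i$, with budget $3k$. The $z^i_j$ force a selected vertex in $R^i$. The copies, together with padding to $|R^i|>3k$, force one in $\tilde R^i$, and $y^i$ forces one in $\{x^i,y^i\}$. The budget $3k$ then leaves exactly one red vertex per class. Deleting $X\cup Y\cup Z\cup B$ leaves a matching, so both parameters are $O(k^2+|B|)$.
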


\bibliographystyle{splncs04}
\bibliography{ref}

\appendix

\section{Algorithms by first-order model checking}\label{app:nowheredenseFO}

In this subsection, we prove the following theorem using first-order model checking.
\FPTnowheredense* \label{thm:FPTnowheredense*}

\paragraph{Nowhere dense graphs.}
\revn{A graph $H$ is a \emph{minor} of a graph $G$, denoted by $H \preceq G$, if each vertex $v\in V(H)$ can be associated with a nonempty connected subgraph $G_v$ of $G$ such that the subgraphs $\{G_v\}_{v\in V(H)}$ are pairwise vertex-disjoint, and for every edge $uv\in E(H)$, there exists an edge of $G$ with one endpoint in $G_u$ and the other in $G_v$.
The subgraph $G_v$ is called the \emph{branch set} corresponding to $v$.
For a nonnegative integer $r$, we say that $H$ is a \emph{depth-$r$ minor} of $G$, denoted by $H \preceq_r G$, if there exists such a collection of branch sets $\{G_v\}_{v\in V(H)}$ in which every branch set has radius at most $r$.
A graph class $\mathcal C$ is \emph{nowhere dense} if there exists a function $t\colon\mathbb{N}\to\mathbb{N}$ such that $K_{t(r)} \not\preceq_r G$ for every $r\in\mathbb N$ and every $G\in\mathcal C$.}


\newcommand{\onV}{V_{\mathrm{on}}}
\newcommand{\offV}{V_{\mathrm{off}}}

\paragraph{First-order logic.}
We refer the reader to~\cite{NowheredenseFO:GroheKS17} for the formal definition of first-order logic.
Let $\nu$ be a \emph{vocabulary}, that is, a finite set of relation symbols, each equipped with a prescribed arity.
First-order formulas over $\nu$ are constructed from atomic formulas using Boolean connectives and quantifiers.
The atomic formulas are of the form $x=y$ and $R(x_1,\ldots,x_k)$, where $R$ is a $k$-ary relation symbol in $\nu$.
If $\varphi$ and $\psi$ are first-order formulas over $\nu$, then so are
$\neg\varphi$, $\varphi\land\psi$, $\varphi\lor\psi$, $\varphi\rightarrow\psi$, $\forall x\,\varphi$, and $\exists x\,\varphi$.
For a graph $G$ and a first-order formula $\varphi$, we write $G \models \varphi$ if $G$ satisfies $\varphi$.


To prove \Cref{thm:FPTnowheredense}, it suffices to construct a first-order formula $\phi$ expressing the existence of a parity dominating set of size at most $k$, and then apply the following theorem.

\begin{theorem}[Theorem~1.1 of~\cite{NowheredenseFO:GroheKS17}]\label{FPT:nowheredenseFO}
    Let $\mathcal{C}$ be a nowhere dense graph class.
    For every $\epsilon>0$, there exist a computable function
    $f\colon\mathbb{N}\to\mathbb{N}$ and an algorithm that, given an $n$-vertex graph $G\in\mathcal{C}$ and a first-order formula $\phi$, decides whether $G\models\phi$ in time
    $f(|\phi|)\cdot n^{1+\epsilon}$.
\end{theorem}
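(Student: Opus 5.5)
The statement is Theorem~1.1 of~\cite{NowheredenseFO:GroheKS17}, which the paper imports rather than proves. My plan would follow the locality-plus-game strategy of that work.

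\textbf{Reduction to local evaluation.} Using Gaifman's theorem, I would first reduce model checking of a sentence $\phi$ to two tasks. The first is evaluating $r$-local formulas $\psi(x)$ at every vertex. The second is deciding basic local sentences, which assert the existence of $m$ vertices that are pairwise at distance more than $2r$ and each satisfy an $r$-local formula. Here $r$ and $m$ depend only on $|\phi|$.

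For the second task I would use the unary predicate ``$\psi$ holds'' computed in the first task. I would then search for an $r$-scattered set of size $m$ among the marked vertices. This search is FPT on nowhere dense classes because they are uniformly quasi-wide: deleting a bounded number of vertices leaves a large $r$-scattered subset, so a greedy argument combined with bounded branching over the small deletion sets settles it. The main work is therefore evaluating local formulas for all vertices in time $f(|\phi|)\cdot n^{1+\epsilon}$.

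\textbf{Covers and the splitter game.} I would rely on two structural facts about a nowhere dense class $\mathcal C$.
\begin{enumerate}
\item[(i)] Every $G\in\mathcal C$ admits an $r$-neighbourhood cover of radius at most $2r$ and maximum degree $n^{o(1)}$, computable in time $n^{1+o(1)}$.
\item[(ii)] Splitter wins the radius-$r$ splitter game on $\mathcal C$ within $s(r)$ rounds.
\end{enumerate}
To evaluate an $r$-local $\psi(x)$ at $v$, it suffices to evaluate it inside a single cluster $X$ containing $N_r[v]$. For each cluster, I would let Splitter's winning move pick a vertex $w\in X$. I would then delete $w$ and compensate by adding fresh unary predicates that mark $w$'s neighbours. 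The formula is rewritten so that every quantified variable either equals $w$ or does not, and each atom involving $w$ is replaced by the new colour predicates.

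The recursion then works on $G[X]-w$, which is again a graph from a nowhere dense class after expanding colours, with a new local formula of bounded rank. After at most $s(r)$ rounds the clusters become trivial, so the recursion depth depends only on $\phi$. Because cover degrees are $n^{\epsilon'}$, the total size of all subinstances summed over the recursion is $n^{1+s\epsilon'}$. Choosing $\epsilon'=\epsilon/s$ then yields the claimed bound.

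\textbf{Main obstacle.} The delicate step is making the rewriting \emph{rank-preserving}. A naive application of Gaifman's theorem at every level multiplies the radius and the quantifier rank, and then the splitter-game bound $s(r)$ no longer controls the depth. I would try first to work with a normal form in which distance atoms $\mathrm{dist}(x,y)\le d$ are treated as primitive, with $d$ bounded in terms of the original rank. I would then prove a locality lemma showing that a rank-$q$ formula is equivalent to a Boolean combination of rank-$q$ local formulas and distance atoms. This keeps the radius and rank bounded across all $s(r)$ levels.

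A second technical point is that the distance atoms must themselves be evaluated under vertex deletions. I would handle this by carrying precomputed distance information inside each cluster as additional colours, which is affordable since clusters have bounded radius.
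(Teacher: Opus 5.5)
The paper does not prove this statement: it is quoted from~\cite{NowheredenseFO:GroheKS17} and used as a black box, so the comparison is with that proof. Your outline has its architecture: covers of degree $n^{o(1)}$, the splitter game, deleting Splitter's vertex while recording distances to it by colours, distance atoms, and $\epsilon'=\epsilon/s$. However, the lemma that is supposed to drive your recursion is false as you state it. After you delete $w$, you rewrite $\mathrm{dist}_G(x,y)\le d$ as ``$\mathrm{dist}_{G[X]-w}(x,y)\le d$, or the colours of $x$ and $y$ certify $\mathrm{dist}(x,w)+\mathrm{dist}(w,y)\le d$''. The quantifiers of the new formula then also range over vertices that were close to $x$ only through $w$, and these may lie in many far-apart components of $G[X]-w$. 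So the next level does not receive a local formula. Re-localising also cannot give a Boolean combination of local formulas and distance atoms alone. Already $\exists y\,P(y)$ is global, and so is the subformula $\exists y\,(C_1(y)\wedge\chi(y))$ that your rewriting produces, with $C_1$ marking the former neighbours of $w$. Any locality normal form must therefore contain basic local sentences. At each of the $s$ levels you need two things. First, a locality theorem whose global parts also stay within the rank and radius bounds fixed at the start; the game radius must be chosen before the recursion begins, because $s$ depends on it. Second, a procedure that evaluates those sentences inside every coloured subinstance. Your plan handles global parts only once, at the top level, so the recursion does not close. Controlling these parts without rank growth is where the rank-preserving locality theorem of Grohe, Kreutzer and Siebertz does its real work.

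The scattered-set step is not justified either. Uniform quasi-wideness gives $m$ vertices that are scattered in $G-S$, not in $G$. It therefore neither certifies nor refutes that the marked set contains $m$ vertices pairwise far apart in $G$. For example, in $K_{1,n}$ the leaves become scattered once the centre is deleted, yet no two of them are at distance more than~$2$. You also do not say how branching on $S$ would turn scatteredness in $G-S$ into an answer for $G$. The standard argument (Frick and Grohe) goes as follows:
\begin{itemize}
\item compute greedily a maximal $2r$-scattered subset $M$ of the marked vertices;
\item if $|M|\ge m$, accept;
\item otherwise every marked vertex is within distance $2r$ of fewer than $m$ centres, so after grouping nearby centres the question becomes local;
\item pass it to the local-evaluation procedure, with the extra radius and rank budgeted in advance.
\end{itemize}
Finally, Splitter's moves must be efficiently computable, and they must come from one winning strategy followed along a single play. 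Each subinstance $G[X]-w$ should be a subgraph of the current game position, with game radius $2r$ to match the cluster radius. Picking a fresh winning move in each new graph gives no bound on the recursion depth.
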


\paragraph{Expression of \prbGen.}
Let $(G, f, \Vopen, \Vclose, k)$ denote the instance of \prbGen.
Define
\[
\offV \coloneq \{v\in V(G)\mid f(v)=0\}
\qquad\text{and}\qquad
\onV \coloneq \{v\in V(G)\mid f(v)=1\}.
\]

Consider a vocabulary $\nu = \{E, \Vopen,\Vclose, \offV, \onV\}$, where $E(x,y)$ denotes that $x$ and $y$ are adjacent in $G$.
For each unary predicate symbol $Z\in{\Vopen,\Vclose,\offV,\onV}$, the interpretation of $Z$ is the corresponding subset of $V(G)$; that is, $Z(v)$ holds if and only if $v$ belongs to that subset.

For each $i\in[0,k]$, let $\phi_i(x_1,x_2, \ldots,x_i)$ denote a formula expressing that $D=\{x_1,x_2, \ldots,x_i\}$ is a PD-set of size exactly $i$.
Then the desired formula $\phi$ is given by $\phi \coloneq \bigvee_{i \in [0,k]} \phi_i$.
We show that each $\phi_i$ is a first-order formula over the vocabulary $\nu$.

Throughout the construction, the free variables are assumed to belong to $\{x_1,x_2,\dots, x_i\}$.
We first define a formula expressing that a vertex belongs to $D$:
\begin{align*}
    &\id{selected}(x) \coloneq \bigvee_{j \in [i]} (x=x_j)
\end{align*}

For a positive integer $r$, let $\id{at-least-}r\id{-in-neiop}(x)$ denote the formula expressing that at least $r$ selected vertices belong to $N_G(x)$.
For $r\ge 1$, we define
\begin{align*}
    &\id{at-least-}r\id{-in-neiop}(x)\\
    &\;\coloneq\;
    \exists y_1 \dots \exists y_r \Bigg(
        \bigwedge_{1 \le p < q \le r} \neg(y_p = y_q)
        \;\wedge 
        \bigwedge_{1 \le p \le r}
        \big( \id{selected}(y_p) \wedge E(y_p,x) \big)
    \Bigg).
\end{align*}

For a positive integer $r$, the formula expressing that exactly $r$ selected vertices belong to $N_G(x)$ is
\begin{align*}
    r\id{-in-neiop}(x)\coloneq\id{at-least-}r\id{-in-neiop}(x)\wedge\neg\id{at-least-}(r+1)\id{-in-neiop}(x),
\end{align*}
and 
\begin{align*}
    0\id{-in-neiop}(x)\coloneq\neg\id{at-least-}1\id{-in-neiop}(x).
\end{align*}

Similarly, for the closed neighborhood, we define $\id{at-least-}r\id{-in-neicl}(x)$, $r\id{-in-neicl}(x)$, and $\id{0-in-neicl}(x)$.
For a positive integer $r$,
\begin{align*}
    &\id{at-least-}r\id{-in-neicl}(x)\\
    &\;\coloneq\;
    \exists y_1 \dots \exists y_r \Bigg(
    \bigwedge_{1 \le p < q \le r} \neg(y_p = y_q)
    \;\wedge 
    \bigwedge_{1 \le p \le r}
    \big( \id{selected}(y_p) \wedge (E(y_p,x) \vee y_p = x) \big)
\Bigg), 
\end{align*}
\begin{align*}
    r\id{-in-neicl}(x)
    \;\coloneq\;
    \id{at-least-}r\id{-in-neicl}(x)
    \;\wedge\;
    \neg\,\id{at-least-}(r+1)\id{-in-neicl}(x),
\end{align*}
and 
\begin{align*}
    0\id{-in-neicl}(x)
    \;\coloneq\;
    \neg\,\id{at-least-}1\id{-in-neicl}(x).
\end{align*}

We now define the formula $\id{is-parity-dominated}(v)$ expressing that
$v$ satisfies the parity condition:
\begin{align*}
&\id{is-parity-dominated}(v) 
\;\coloneq\;\\
&\Bigg(
\Vopen(v) \Rightarrow
    \Big(
        (\onV(v) \Rightarrow
            \bigvee_{\substack{1 \le r \le k \\ r \colon\text{odd}}}
            r\id{-in-neiop}(v))
        \wedge  
        (\offV(v) \Rightarrow
            \bigvee_{\substack{0 \le r \le k \\ r \colon \text{even}}}
            r\id{-in-neiop}(v))
    \Big)
\Bigg)
\wedge \\
&\Bigg(
\Vclose(v) \Rightarrow
    \Big(
        (\onV(v) \Rightarrow
            \bigvee_{\substack{1 \le r \le k \\ r\colon \text{odd}}}
            r\id{-in-neicl}(v))
        \wedge 
        (\offV(v) \Rightarrow
            \bigvee_{\substack{0 \le r \le k \\ r\colon \text{even}}}
            r\id{-in-neicl}(v))
    \Big)
\Bigg).
\end{align*}

Finally, the existence of a parity dominating set of size exactly $i$ is expressed by the formula
\begin{align*}
\phi_i \;\coloneq\;
\exists x_1 \dots \exists x_i
\left(
    \bigwedge_{1 \le p < q \le i} \neg(x_p = x_q)
    \;\wedge\;
    \forall v\, (\id{is-parity-dominated}(v))
\right).
\end{align*}

By construction, $\phi_i$ is a first-order formula over the vocabulary $\nu$, and its length is bounded by a function depending only on $k$. 
Therefore, \Cref{thm:FPTnowheredense} follows immediately from \Cref{FPT:nowheredenseFO}.

\section{Omitted discussion for \Cref{thm:NPcomp_planar_bipartite}}
\label{appsec:NPcomp_planar_bipartite}

We now verify that $G$ satisfies the required structural properties.

\begin{claim}
    The graph $G$ is planar, bipartite, has maximum degree~$3$, and girth at least~$g$.
\end{claim}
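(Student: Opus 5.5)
We check the four properties one at a time, using the structure of the gadgets and of the incidence graph $G_\phi$. Recall that $G_{x_i}$ is a path on eight vertices and $G_{C_j}$ is a cycle $Y_j$ of length $3b$ with pendant vertices $\ell^j_1,\ell^j_2,\ell^j_3$. The only edges between gadgets are of the form $f_i^1\ell_r^j$ or $f_i^3\ell_r^j$, one for each occurrence of a variable in a clause. Equivalently, $G$ is obtained from $G_\phi$ by replacing each vertex with a gadget and each edge with a single connecting edge.

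\emph{Planarity.} Start from the fixed planar embedding of $G_\phi$. Each variable vertex $x_i$ has degree $3$, and its three incident edges leave in some cyclic order. Route the edge of the first occurrence to $f_i^1$ and the other two edges to $f_i^3$. Draw the path $G_{x_i}$ inside a small disk around $x_i$, bent so that the endpoint $f_i^1$ and the interior vertex $f_i^3$ meet the boundary of the disk in the required cyclic order; $f_i^3$ is not an endpoint, but its two attachments are consecutive in the rotation, so this is possible. Similarly, draw $Y_j$ as a small circle around $C_j$, with the pendant vertices $\ell^j_1,\ell^j_2,\ell^j_3$ placed outside it. If the cyclic order of the clause's incident edges disagrees with the order $c_b,c_{2b},c_{3b}$ on the cycle, reflect the circle. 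Since every cyclic order of three elements is either that order or its reverse, a reflection always suffices. This local replacement keeps the embedding planar. I expect this cyclic-order bookkeeping to be the only point needing care.

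\emph{Maximum degree.} On the path $G_{x_i}$, the vertex $f_i^1$ has degree $2+1=3$, $f_i^3$ has degree $1+2=3$, and every other vertex has degree at most $2$. On a clause gadget, $c_b,c_{2b},c_{3b}$ have degree $3$, the other cycle vertices have degree $2$, and each $\ell_r^j$ has degree $2$.

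\emph{Bipartiteness.} We give an explicit $2$-coloring. Color each variable path alternately, so that $t,z$ receive color $A$ and $f$ receives color $B$ (the path is $t f z t f z t f$). Every $\ell$ is adjacent only to an $f$-vertex and a $c$-vertex, so give every $\ell$ color $A$ and every $c_b,c_{2b},c_{3b}$ color $B$. The cycle $Y_j$ has even length $3b$ with $b=2^a$, and the vertices $c_b,c_{2b},c_{3b}$ are at even distance $b$ apart along it, so alternating along the cycle gives them the same color. For this we need $b$ even, i.e., $a\ge 1$, which holds because $g\ge 3$ forces $b\ge 4$. (The case $g=3$ with $b=4$ is fine.)

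\emph{Girth.} Take any cycle $Q$ in $G$. Each variable gadget is a path, hence acyclic. If $Q$ lies within a single clause gadget, then it is $Y_j$, of length $3b\ge g$. Otherwise $Q$ uses inter-gadget edges. Each such edge attaches through a pendant $\ell$, so $Q$ must enter and leave a clause gadget through two different $\ell$'s. Between them it traverses a path on $Y_j$ of length at least $b\ge g$. Hence $|Q|\ge g$.
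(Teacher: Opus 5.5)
Your proposal follows the same route as the paper: local replacement of vertices of $G_\phi$ by gadgets for planarity, a direct degree count, an explicit $2$-coloring, and the girth bound from the spacing $b$ of $c_b,c_{2b},c_{3b}$ on $Y_j$. Your handling of rotations in the planarity step is more careful than the paper's, which only notes that the connecting vertices lie on the outer faces of the gadgets. There is, however, one false step and one mislabel to fix.

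\emph{Bipartiteness.} The coloring you describe does not exist. On the path $t_i^1 f_i^1 z_i^1 t_i^2 f_i^2 z_i^2 t_i^3 f_i^3$, a period-$3$ pattern cannot be a proper $2$-coloring: the edge $z_i^1t_i^2$ would have both ends colored $A$. The alternating coloring actually puts $t_i^1,z_i^1,f_i^2,t_i^3$ in one class and $f_i^1,t_i^2,z_i^2,f_i^3$ in the other, which is exactly the paper's $S/T$ split. Your argument only needs the two attachment vertices $f_i^1$ and $f_i^3$ to receive the same color. That is true because they are at distance $6$ along the path. With this justification, coloring every $\ell_r^j$ with $A$ and every $c_b,c_{2b},c_{3b}$ with $B$ works as you say.

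\emph{Planarity.} You have the roles reversed: $f_i^3$ is the endpoint of the path and $f_i^1$ is an interior vertex. Your own degree count ($2+1$ for $f_i^1$, $1+2$ for $f_i^3$) already reflects this. The argument survives and becomes simpler. Both attachments at the leaf $f_i^3$ lie in its single angle, in whichever order you choose. The attachment at $f_i^1$ can go on either side of the path. So every rotation of the three edges at $x_i$ can be realized.
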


\begin{proof}
    \rev{
    We first show that $G$ is planar.
    Fix a planar embedding of the incidence graph $G_\phi$.
    By construction, each variable vertex and clause vertex of $G_\phi$ is replaced by a gadget that admits a planar embedding.
    Moreover, every edge of $G_\phi$ is replaced by a single edge joining the corresponding variable gadget and clause gadget.
    Since the connecting vertices of each gadget lie on its outer face, each edge of $G_\phi$ can be replaced without introducing any edge crossing.
    Therefore, $G$ is planar.
    }

    \rev{
    Next, we show that $G$ has maximum degree $3$.
    Every vertex in a variable gadget or a clause gadget has degree at most $3$.
    Moreover, the only vertices incident with edges between distinct gadgets are the vertices $f_i^1$, $f_i^3$, and $\ell_r^j$, each of which has degree exactly $3$ in $G$.
    Hence, every vertex still has degree at most $3$.
    }


    We now consider the girth of $G$.
    Each variable gadget is a path and therefore contains no cycle.
    Each clause gadget contains exactly one cycle, namely the cycle $Y_j$, whose length is $3b$.
    \rev{Since $b \ge g$, every cycle contained entirely in a clause gadget has length at least $g$.
    Consider a cycle $Q$ that is not contained in a single gadget.
    Since variable gadgets are acyclic, $Q$ must contain vertices from some clause gadget $G_{C_j}$ and use at least two of the vertices $\ell_1^j,\ell_2^j,\ell_3^j$.
    Inside $G_{C_j}$, the shortest path between any two of $\ell_1^j,\ell_2^j,\ell_3^j$ has length $b+1$.
    Therefore, $Q$ contains a subpath of length at least $b+1$ within $G_{C_j}$.
    Since $b \ge g$, it follows that $|V(Q)| \ge b+1 > g$.
    Hence every cycle of $G$ has length at least $g$, and thus the girth of $G$ is at least $g$.
    }
    

    Finally, we show that $G$ is bipartite.
    Define a partition $(S,T)$ of $V(G)$ as follows.
    For each variable gadget $G_{x_i}$, place $t_i^1, z_i^1, f_i^2, t_i^3$ in $S$, and place all remaining vertices of $G_{x_i}$ in $T$.
    For each clause gadget $G_{C_j}$, place $c_h$ in $S$ if $h$ is odd and in $T$ if $h$ is even.
    \rev{Since $b=2^a$ is even, the vertices $c_b$, $c_{2b}$, and $c_{3b}$ belong to $T$.}
    Place $\ell_1^j,\ell_2^j,\ell_3^j$ in $S$.
    
    \rev{It is straightforward to verify that every edge inside a variable gadget or a clause gadget joins a vertex of $S$ and a vertex of $T$.
    Furthermore, every edge between distinct gadgets joins a vertex $f_i^1$ or $f_i^3$ in $T$ to a vertex $\ell_r^j$ in $S$.
    Therefore, every edge of $G$ has one endpoint in $S$ and the other in $T$.}
    It follows that $(S,T)$ is a bipartition of $G$.
\end{proof}

To complete our reduction, we show the following lemma.
\NPcplanar*
\label{thm:NPcplanar*}

\begin{proof}
    Suppose first that $\phi$ is a yes-instance of \prbOinT, and  let $\sigma\colon X\rightarrow \{\true,\false\}$ be a satisfying truth assignment such that every clause of $\phi$ contains exactly one true literal.
    
    We construct an ODD-set $D$ of $G$ as follows.
    For each variable $x_i$, if $\sigma(x_i)=\true$, then add $t_i^1,t_i^2,t_i^3$ to $D$; otherwise, add $f_i^1,f_i^2,f_i^3$ to $D$.
    For each clause $C_j$, let $r_j\in \{1,2,3\}$ be the unique index such that the $r_j$-th literal of $C_j$ is true under $\sigma$.
    We then add to $D$ the set $B_{j,r_j}\coloneqq \{c_{r_jb+3p} \mid p=0,1,\ldots,b-1\}$, where indices are taken modulo $3b$ and $c_{3b}$ is identified with $c_0$.
    In other words, in the cycle $Y_j$ we select every third vertex, starting from $c_{r_j\cdot b}$.




    By construction, $|D|=3n+bm=k$.
    It remains to show that $D$ is an ODD-set.

    First consider a variable gadget $G_{x_i}$ for some $i \in [n]$.
    If $\sigma(x_i)=\true$, then $D\cap V(G_{x_i})=\{t_i^1,t_i^2,t_i^3\}$; otherwise, $D\cap V(G_{x_i})=\{f_i^1,f_i^2,f_i^3\}$.
    In either case, every vertex of $G_{x_i}$ is odd-dominated by $D$ (see again \Cref{fig:NPcomp_gadgets}).

    Now consider a clause gadget $G_{C_j}$ for some $j \in [m]$.
    The set $D\cap V(Y_j)$ consists of every third vertex on the cycle $Y_j$.
    Since $\ell^j_1,\ell^j_2,\ell^j_3\notin D$, every vertex of $Y_j\setminus D$ has exactly one neighbor in $D$ on the cycle.
    Hence every vertex of $Y_j$ is odd-dominated by $D$.
    
    It remains to check the three vertices $\ell_1^j,\ell_2^j,\ell_3^j$.
    For the unique true literal vertex $\ell_{r_j}^j$, its neighbor in the variable gadget is not in $D$, while its neighbor $c_{r_jb}$ on the cycle belongs to $D$.
    Thus $\ell_{r_j}^j$ has exactly one neighbor in $D$.
    For each of the other two literal vertices, their neighbor in the variable gadget belongs to $D$, whereas their neighbors on the cycle are not in $D$.
    Hence each of them also has exactly one neighbor in $D$.
    Therefore, every vertex of $G$ is odd-dominated by $D$.

    It remains to verify that the vertices $\ell_1^j,\ell_2^j$, and $\ell_3^j$ are odd-dominated.
    Consider the vertex $\ell_{r_j}^j$ corresponding to the unique true literal of $C_j$.
    Its neighbor in the corresponding variable gadget does not belong to $D$, whereas its neighbor $c_{r_jb}$ on the cycle belongs to $D$. 
    Therefore, $N_G[\ell_{r_j}^j]\cap D=\{c_{r_jb}\}$, and hence $\ell_{r_j}^j$ is odd-dominated.

    For each of the other two vertices among
    ${\ell_1^j,\ell_2^j,\ell_3^j}$, the neighbor in the corresponding variable gadget belongs to $D$, whereas the neighbor on the cycle does not belong to $D$.
    Thus each of these vertices is also odd-dominated.

    Therefore, every vertex of $G$ is odd-dominated by $D$, and hence $D$ is an ODD-set of $G$.

    Conversely, suppose that $(G,k)$ is a yes-instance of \prbOD, and let $D$ be an ODD-set of $G$ with $|D|\leq k$.

    Consider a variable gadget $G_{x_i}$ for some $i\in[n]$. 
    Since $t_i^1$ has degree~$1$ and is adjacent only to $f_i^1$, odd domination of $t_i^1$ requires that exactly one of $t_i^1$ and $f_i^1$ belongs to $D$. 
    Furthermore, among the five vertices $t_i^2,f_i^2,z_i^2,t_i^3$, and $f_i^3$, each vertex of $G_{x_i}$ can odd-dominate at most two of them. 
    Therefore, at least two additional vertices of $G_{x_i}$ must belong to $D$ in order to odd-dominate all five vertices.
    Consequently, we have $|D\cap V(G_{x_i})|\geq 3$.
    

    Next, consider a clause gadget $G_{C_j}$ for some $j\in[m]$.
    Since $Y_j$ is a cycle on $3b$ vertices, every ODD-set of $Y_j$ contains at least $b$ vertices. 
    Hence, we have $|D\cap V(G_{C_j})|\geq b$.

    Summing over all variable and clause gadgets yields $|D| \geq \sum_{i=1}^{n}|D\cap V(G_{x_i})| + \sum_{j=1}^{m}|D\cap V(G_{C_j})| \geq 3n+bm$.
    Since $|D|\leq k=3n+bm$, all inequalities must be tight.
    Therefore, $D$ contains exactly~$3$ vertices from each variable gadget and exactly~$b$ vertices from each clause gadget.
    

    For each $j\in[m]$, the set $D\cap V(G_{C_j})$ contains exactly $b$ vertices. Since all $3b$ vertices of the cycle $Y_j$ are odd-dominated by these $b$ vertices, it follows that $D\cap V(G_{C_j}) \subseteq V(Y_j)$.
    In particular, $\ell_1,\ell_2,\ell_3 \notin D$.

    Consequently, every variable gadget $G_{x_i}$ must odd-dominate all of its eight vertices using only vertices of $G_{x_i}$. 
    Since $|D\cap V(G_{x_i})|=3$, it follows that either $\{t_i^1,t_i^2,t_i^3\}\subseteq D$ or $\{f_i^1,f_i^2,f_i^3\}\subseteq D$.
    We therefore define a truth assignment $\sigma\colon X \to \{\true, \false\}$ by setting 
    \[ 
    \sigma(x_i)= 
    \begin{cases}  
        \true & \text{if } \{t_i^1,t_i^2,t_i^3\}\subseteq D,\\
        \false & \text{otherwise}. 
    \end{cases} 
    \]

    We show that $\sigma$ satisfies every clause of $\phi$ and that each clause contains exactly one true literal.

    


    For each $j\in[m]$, since all $3b$ vertices of $Y_j$ are odd-dominated by the $b$ vertices in $D\cap V(G_{C_j})$, it follows that $D\cap V(Y_j)$ must be one of the following three sets: $ \{c_{b+3p} \mid p\in[b]\}$, $\{c_{2b+3p} \mid p\in[b]\}$, and $\{c_{3b+3p} \mid p\in[b]\}$.
    Without loss of generality, assume that $\{c_{b+3p} \mid p\in[b]\}\subseteq D$.
    Then $c_b\in D$, whereas $c_{2b}, c_{3b}\notin D$.
    Let $x_{q_1}$, $x_{q_2}$, and $x_{q_3}$ be the three variables appearing in $C_j$, and let $\ell^j_1$, $\ell^j_2$, and $\ell^j_3$ denote the corresponding literal vertices of $G_{C_j}$, respectively.
    To odd-dominate $\ell_1^j$, $\ell_2^j$, and $\ell_3^j$, we must have $f_{q_1}^1,f_{q_1}^3\notin D$ and $f_{q_2}^1,f_{q_2}^3, f_{q_3}^1,f_{q_3}^3\in D$.
    Hence, $\sigma(x_{q_1})=\true$ and $\sigma(x_{q_2})=\sigma(x_{q_3})=\false$.
    Therefore, the clause $C_j$ contains exactly one true literal under $\sigma$.


    Since $j$ was arbitrary, every clause of $\phi$ contains exactly one true literal under $\sigma$. 
    Therefore, $\sigma$ is a satisfying truth assignment of $\phi$.
\end{proof}

\section{Omitted discussion for \Cref{thm:hard_k}}\label{appsec:hard_k}

\begin{figure}[ht]
    \centering
    \begin{tikzpicture}

    \node[] at (4,11.5) {$H$};
    \node[draw=black,fill=red, circle, minimum size=5mm, inner sep=0pt] (H1) at (3,13){$1$}; 
    \node[draw=black,fill=blue!60, circle, minimum size=5mm, inner sep=0pt] (H2) at (4,14){$2$};
    \node[draw=black,fill=green, circle, minimum size=5mm, inner sep=0pt] (H3) at (4,12){$3$};
    \node[draw=black,fill=yellow, circle, minimum size=5mm, inner sep=0pt] (H4) at (5,13){$4$};

    \draw[very thick] (H1)--(H2);
    \draw[very thick] (H3)--(H2);
    \draw[very thick] (H1)--(H3);
    \draw[very thick] (H2)--(H4);
    \draw[very thick] (H3)--(H4);

    \node[] at (9,11.5) {$G$};
    \node[draw=black, fill=yellow, circle, minimum size=5mm, inner sep=0pt] (G1) at (7,13){$v_1$};
    \node[draw=black, fill=green, circle, minimum size=5mm, inner sep=0pt] (G2) at (8,14){$v_2$};
    \node[draw=black, fill=blue!60, circle, minimum size=5mm, inner sep=0pt] (G3) at (8,12){$v_3$};
    \node[draw=black, fill=yellow, circle, minimum size=5mm, inner sep=0pt] (G4) at (9,14){$v_4$};
    \node[draw=black, fill=red, circle, minimum size=5mm, inner sep=0pt] (G5) at (9,13){$v_5$};
    \node[draw=black, fill=blue!60, circle, minimum size=5mm, inner sep=0pt] (G6) at (10,14){$v_6$};
    \node[draw=black, fill=yellow, circle, minimum size=5mm, inner sep=0pt] (G7) at (10,13){$v_7$};
    \node[draw=black, fill=green, circle, minimum size=5mm, inner sep=0pt] (G8) at (10,12){$v_8$};
    \node[draw=black, fill=red, circle, minimum size=5mm, inner sep=0pt] (G9) at (11,13){$v_9$};

    \draw[very thick] (G1)--(G2);
    \draw[very thick] (G2)--(G3);
    \draw[very thick] (G1)--(G3);
    \draw[very thick] (G2)--(G4);
    \draw[very thick] (G3)--(G4);
    \draw[very thick] (G3)--(G5);
    \draw[very thick] (G3)--(G8);
    \draw[very thick] (G4)--(G6);
    \draw[very thick] (G4)--(G8);
    \draw[very thick] (G5)--(G8);
    \draw[very thick] (G6)--(G7);
    \draw[very thick] (G6)--(G9);
    \draw[very thick] (G7)--(G8);
    \draw[very thick] (G8)--(G9);    
\end{tikzpicture}
    \caption{An example of an instance $(G,H)$ of \PSI. \rev{Vertices with the same color belong to the same partition set.}}
    \label{fig:instance_PSI}
\end{figure}

\begin{figure}[ht]
    \centering
    \begin{tikzpicture}

    \node[] at (1.5,1.5) {$G_1$};
    \filldraw[fill = red] (1.75,1.65) rectangle (2.25, 1.35);
    
    \node[draw=black, circle, minimum size=5mm, inner sep=0pt] (q) at (0,6){$q_1$};

    \draw[rounded corners=5pt] (0.75,6.3) rectangle (3,5.7);
    \node[] at (1,6) {$V'_1$};
    \node[draw=black, circle, minimum size=5mm, inner sep=0pt] (v5_pr) at (1.5,6){$v'_5$};
    \node[draw=black, circle, minimum size=5mm, inner sep=0pt] (v9_pr) at (2.5,6){$v'_9$};

    \draw[rounded corners=5pt] (0.75,4.8) rectangle (3,4.2);
    \node[] at (1,4.5) {$V_1^*$};
    \node[draw=black, circle, minimum size=5mm, inner sep=0pt] (v5_str) at (1.5,4.5){$v^*_5$};
    \node[draw=black, circle, minimum size=5mm, inner sep=0pt] (v9_str) at (2.5,4.5){$v^*_9$};

    \draw[rounded corners=5pt] (-0.5,3.3) rectangle (4,2.3);
    \node[] at (-0.2,3) {$L_1$};
    \node[draw=black,fill=blue!50, circle, minimum size=3mm, inner sep=0pt, label=below:$\ell_{v_5,1,2}$] (l_1) at (0.5,3){};
    \node[draw=black, fill=green, circle, minimum size=3mm, inner sep=0pt, label=below:$\ell_{v_5,1,3}$] (l_2) at (1.5,3){};
    \node[draw=black, fill=blue!50, circle, minimum size=3mm, inner sep=0pt, label=below:$\ell_{v_9,1,2}$] (l_3) at (2.5,3){};
    \node[draw=black, fill=green, circle, minimum size=3mm, inner sep=0pt, label=below:$\ell_{v_9,1,3}$] (l_4) at (3.5,3){};

    \node[draw=black, circle, minimum size=5mm, inner sep=0pt] (y) at (4,6){$\Tilde{x}_1$};

    \draw[rounded corners=3pt] (4.8,7.5) rectangle (5.2,5.8);
    \node[] at (5,7.3) {$\Tilde{X}_1$};
    \node[draw=black, circle, minimum size=3mm, inner sep=0pt] (y1) at (5,6){};
    \node[] at (5,6.6) {$\vdots$};
    \node[draw=black, circle, minimum size=3mm, inner sep=0pt] (yk) at (5,7){};

    \draw[very thick] (y)--(y1);
    \draw[very thick] (y)--(yk);
    
    \node[draw=black, circle, minimum size=5mm, inner sep=0pt] (x) at (4,4.5){$\Tilde{y}_1$};

    \draw[rounded corners=3pt] (4.8,5) rectangle (5.2,3.3);
    \node[] at (5,4.8) {$\Tilde{Y}_1$};
    \node[draw=black, circle, minimum size=3mm, inner sep=0pt] (x1) at (5,3.5){};
    \node[] at (5,4.1) {$\vdots$};
    \node[draw=black, circle, minimum size=3mm, inner sep=0pt] (xk) at (5,4.5){};

    \draw[very thick] (x)--(x1);
    \draw[very thick] (x)--(xk);

    \draw[very thick] (x)--(y);

    \draw[very thick] (q)--(v5_str);
    \draw[very thick] (q)--(v9_str);

    \draw[very thick] (y)--(v5_str);
    \draw[very thick] (y)--(v9_str);

    \draw[very thick] (x)--(v5_pr);
    \draw[very thick] (x)--(v9_pr);
    \draw[very thick] (x)--(l_1);
    \draw[very thick] (x)--(l_2);
    \draw[very thick] (x)--(l_3);
    \draw[very thick] (x)--(l_4);

    \draw[very thick] (v5_pr)--(v5_str);
    \draw[very thick] (v9_pr)--(v9_str);

    \draw[very thick] (v5_str)--(l_1);
    \draw[very thick] (v5_str)--(l_2);
    \draw[very thick] (v9_str)--(l_3);
    \draw[very thick] (v9_str)--(l_4);

    \node[] at (9.5,1.5) {$G_{1,2}$};
    \node[draw=black, circle, minimum size=5mm, inner sep=0pt] (q12) at (6.5,3){$q_{1,2}$};

    \filldraw[fill = red] (9,1.65-0.35) rectangle (9.5, 1.35-0.35);
    \filldraw[fill = blue!50] (9.5,1.65-0.35) rectangle (10, 1.35-0.35);

    \draw[rounded corners=5pt] (6.9+0.5,5.4+1) rectangle (9.5+0.5,4.6+1);
    \node[] at (7.3+0.5,5+1) {$V_{1,2}^*$};
    \node[draw=black, circle, minimum size=5mm, inner sep=0pt] (z53_str) at (8+0.5,5+1){$z^*_{5,3}$};
    \node[draw=black, circle, minimum size=5mm, inner sep=0pt] (z96_str) at (9+0.5,5+1){$z^*_{9,6}$};

    \draw[rounded corners=5pt] (6.9+0.5,3.9-0.75) rectangle (9.5+0.5,3.1-0.75);
    \node[] at (7.3+0.5,3.5-0.75) {$V'_{1,2}$};
    \node[draw=black, circle, minimum size=5mm, inner sep=0pt] (z53_pr) at (8+0.5,3.5-0.75){$z'_{5,3}$};
    \node[draw=black, circle, minimum size=5mm, inner sep=0pt] (z96_pr) at (9+0.5,3.5-0.75){$z'_{9,6}$};

    \node[draw=black, circle, minimum size=5mm, inner sep=0pt] (y12) at (11,5){$\Tilde{y}_{1,2}$};

    \draw[rounded corners=3pt] (4.7+7,7.6-1) rectangle (5.3+7,5.7-1);
    \node[] at (5+7,7.3-1) {$\Tilde{Y}_{1,2}$};
    \node[draw=black, circle, minimum size=3mm, inner sep=0pt] (y121) at (5+7,6-1){};
    \node[] at (5+7,6.6-1) {$\vdots$};
    \node[draw=black, circle, minimum size=3mm, inner sep=0pt] (y12k) at (5+7,7-1){};

    \draw[very thick] (y12)--(y121);
    \draw[very thick] (y12)--(y12k);
    
    \node[draw=black, circle, minimum size=5mm, inner sep=0pt] (x12) at (4+7,4.5-1){$\Tilde{x}_{1,2}$};

    \draw[rounded corners=3pt] (4.7+7,5.1-1) rectangle (5.3+7,3.1-1);
    \node[] at (5+7,4.8-1) {$\Tilde{X}_{1,2}$};
    \node[draw=black, circle, minimum size=3mm, inner sep=0pt] (x121) at (5+7,3.5-1){};
    \node[] at (5+7,4.1-1) {$\vdots$};
    \node[draw=black, circle, minimum size=3mm, inner sep=0pt] (x12k) at (5+7,4.5-1){};

    \draw[very thick] (x12)--(x121);
    \draw[very thick] (x12)--(x12k);

    \draw[very thick] (x12)--(y12);
    
    \draw[very thick] (x12)--(z53_str);
    \draw[very thick] (x12)--(z96_str);
    \draw[very thick] (y12)--(z53_pr);
    \draw[very thick] (y12)--(z96_pr);
    
    \draw[very thick] (q12)--(z53_str);
    \draw[very thick] (q12)--(z96_str);

    \draw[very thick] (z53_pr)--(z53_str);
    \draw[very thick] (z96_pr)--(z96_str);
    




\end{tikzpicture}
    \caption{Illustrations of a vertex gadget $G_1$ and an edge gadget $G_{1,2}$ constructed from the instance $(G,H)$ in \cref{fig:instance_PSI}.
    For each $ij \in E(H)$, $\Tilde{X}_i$, $\Tilde{Y}_i$, $\Tilde{X}_{i,j}$, and $\Tilde{Y}_{i,j}$ contain $k'+1$ vertices; for readability, $k'-1$ of them are omitted from the figure.
    Each vertex $z^*_{v_iv_j}$ (resp.\ $z'_{v_iv_j}$) is abbreviated as $z^*_{ij}$ (resp.\ $z'_{ij}$).
    \rev{For visual clarity, each vertex $\ell_{v,i,j}$ is colored according to the index $j$.}
    }
    \label{fig:W1_gadgets}
\end{figure}

\begin{figure}[ht]
    \centering
    \scalebox{0.8}{\begin{tikzpicture}

    \node[] at (2-0.5, 10.25) {$G_1$};
    \draw[rounded corners=5pt] (0.2,10) rectangle (3.4,7.5);
    \filldraw[fill = red] (1.8, 10.4) rectangle (2.3, 10.1);

    \draw[rounded corners=5pt] (0.5, 9.8) rectangle (3, 9.2);
    \node[] at (1, 9.5) {$V_1^*$};
    \node[draw=black, circle, minimum size=5mm, inner sep=0pt] (v5) at (1.5, 9.5){$v^*_5$};
    \node[draw=black, circle, minimum size=5mm, inner sep=0pt] (v9) at (2.5, 9.5){$v^*_9$};

    \node[] at (0.9, 8.4) {$L_1$};
    \draw[rounded corners=5pt] (0.7,8.75) rectangle (2.8,7.6);
    \node[draw=black, fill=blue!50, circle, minimum size=3mm, inner sep=0pt] (G1_l0) at (1,8){};
    \node[draw=black, fill=green, circle, minimum size=3mm, inner sep=0pt] (G1_l1) at (1.5,8){};
    \node[draw=black, fill=blue!50, circle, minimum size=3mm, inner sep=0pt] (G1_l2) at (2,8){};
    \node[draw=black, fill=green, circle, minimum size=3mm, inner sep=0pt] (G1_l3) at (2.5,8){};

    \draw[very thick] (v5)--(G1_l0);
    \draw[very thick] (v5)--(G1_l1);
    \draw[very thick] (v9)--(G1_l2);
    \draw[very thick] (v9)--(G1_l3);

    \node[] at (5.25-0.5,10.25) {$G_2$};
    \draw[rounded corners=5pt] (3.6,10) rectangle (6.9,7.5);
    \filldraw[fill = blue!50] (5, 10.4) rectangle (5.5, 10.1);

    \draw[rounded corners=5pt] (0+4, 9.8) rectangle (2.5+4, 9.2);
    \node[] at (0.5+4, 9.5) {$V_2^*$};
    \node[draw=black, circle, minimum size=5mm, inner sep=0pt] (v3) at (1+4, 9.5){$v^*_3$};
    \node[draw=black, circle, minimum size=5mm, inner sep=0pt] (v6) at (2+4, 9.5){$v^*_6$};

    \node[] at (-0.1+4,8.4) {$L_2$};
    \draw[rounded corners=5pt] (-0.3+4, 8.75) rectangle (2.8+4, 7.6);
    
    \node[draw=black, fill=red, circle, minimum size=3mm, inner sep=0pt] (G2_l0) at (4+ 0.5*0,8){};
    \node[draw=black, fill=green, circle, minimum size=3mm, inner sep=0pt] (G2_l1) at (4+ 0.5*1,8){};
    \node[draw=black, fill=yellow, circle, minimum size=3mm, inner sep=0pt] (G2_l2) at (4+ 0.5*2,8){};
    \node[draw=black, fill=red, circle, minimum size=3mm, inner sep=0pt] (G2_l3) at (4+ 0.5*3,8){};
    \node[draw=black,fill=green, circle, minimum size=3mm, inner sep=0pt] (G2_l4) at (4+ 0.5*4,8){};
    \node[draw=black, fill=yellow, circle, minimum size=3mm, inner sep=0pt] (G2_l5) at (4+ 0.5*5,8){};

    \draw[very thick] (v3)--(G2_l0);
    \draw[very thick] (v3)--(G2_l1);
    \draw[very thick] (v3)--(G2_l2);
    \draw[very thick] (v6)--(G2_l3);
    \draw[very thick] (v6)--(G2_l4);
    \draw[very thick] (v6)--(G2_l5);

    \node[] at (8.75-0.5,10.25) {$G_3$};
    \draw[rounded corners=5pt] (7.1,10) rectangle (10.4,7.5);
    \filldraw[fill = green] (8.5, 10.4) rectangle (9, 10.1);

    \draw[rounded corners=5pt] (0+7.5, 9.8) rectangle (2.5+7.5, 9.2);
    \node[] at (0.5+7.5, 9.5) {$V_3^*$};
    \node[draw=black, circle, minimum size=5mm, inner sep=0pt] (v2) at (1+7.5, 9.5){$v^*_2$};
    \node[draw=black, circle, minimum size=5mm, inner sep=0pt] (v8) at (2+7.5, 9.5){$v^*_8$};
    
    \node[] at (-0.1+7.5,8.4) {$L_3$};
    \draw[rounded corners=5pt] (-0.3+7.5, 8.75) rectangle (2.8+7.5, 7.6);
    
    \node[draw=black, fill=red, circle, minimum size=3mm, inner sep=0pt] (G3_l0) at (7.5+ 0.5*0,8){};
    \node[draw=black,fill=blue!50, circle, minimum size=3mm, inner sep=0pt] (G3_l1) at (7.5+ 0.5*1,8){};
    \node[draw=black, fill=yellow, circle, minimum size=3mm, inner sep=0pt] (G3_l2) at (7.5+ 0.5*2,8){};
    \node[draw=black,fill=red, circle, minimum size=3mm, inner sep=0pt] (G3_l3) at (7.5+ 0.5*3,8){};
    \node[draw=black, fill=blue!50, circle, minimum size=3mm, inner sep=0pt] (G3_l4) at (7.5+ 0.5*4,8){};
    \node[draw=black, fill=yellow, circle, minimum size=3mm, inner sep=0pt] (G3_l5) at (7.5+ 0.5*5,8){};

    \draw[very thick] (v2)--(G3_l0);
    \draw[very thick] (v2)--(G3_l1);
    \draw[very thick] (v2)--(G3_l2);
    \draw[very thick] (v8)--(G3_l3);
    \draw[very thick] (v8)--(G3_l4);
    \draw[very thick] (v8)--(G3_l5);

    \node[] at (12.25-0.5,10.25) {$G_4$};
    \draw[rounded corners=5pt] (10.6,10) rectangle (13.9,7.5);
    \filldraw[fill = yellow] (12, 10.4) rectangle (12.5, 10.1);

    \draw[rounded corners=5pt] (10.7, 9.8) rectangle (13.8, 9.2);
    \node[] at (11, 9.5) {$V_4^*$};
    \node[draw=black, circle, minimum size=5mm, inner sep=0pt] (v1) at (11.5, 9.5){$v^*_1$};
    \node[draw=black, circle, minimum size=5mm, inner sep=0pt] (v4) at (12.5, 9.5){$v^*_4$};
    \node[draw=black, circle, minimum size=5mm, inner sep=0pt] (v7) at (13.5, 9.5){$v^*_7$};
    
    \node[] at (-0.1+11, 8.4) {$L_4$};
    \draw[rounded corners=5pt] (-0.3+11, 8.75) rectangle (2.8+11, 7.6);

    \node[draw=black, fill=blue!50, circle, minimum size=3mm, inner sep=0pt] (G4_l0) at (11+ 0.5*0,8){};
    \node[draw=black, fill=green, circle, minimum size=3mm, inner sep=0pt] (G4_l1) at (11+ 0.5*1,8){};
    \node[draw=black, fill=blue!50, circle, minimum size=3mm, inner sep=0pt] (G4_l2) at (11+ 0.5*2,8){};
    \node[draw=black, fill=green, circle, minimum size=3mm, inner sep=0pt] (G4_l3) at (11+ 0.5*3,8){};
    \node[draw=black, fill=blue!50, circle, minimum size=3mm, inner sep=0pt] (G4_l4) at (11+ 0.5*4,8){};
    \node[draw=black, fill=green, circle, minimum size=3mm, inner sep=0pt] (G4_l5) at (11+ 0.5*5,8){};

    \draw[very thick] (v1)--(G4_l0);
    \draw[very thick] (v1)--(G4_l1);
    \draw[very thick] (v4)--(G4_l2);
    \draw[very thick] (v4)--(G4_l3);
    \draw[very thick] (v7)--(G4_l4);
    \draw[very thick] (v7)--(G4_l5);    

    \node[] at (1,4.25) {$G_{1,2}$};
    \draw[rounded corners=5pt] (0.1,5.5) rectangle (1.9,4.5);
    \node[draw=black, circle, minimum size=3mm, inner sep=0pt] (G12_1) at (0.5,5){$z^*_{5,3}$};
    \node[draw=black, circle, minimum size=3mm, inner sep=0pt] (G12_2) at (1.5,5){$z^*_{9,6}$};

    \filldraw[fill = red] (0.5, 4) rectangle (1, 3.7);
    \filldraw[fill = blue!50] (1, 4) rectangle (1.5, 3.7);

    \node[] at (1+2,4.25) {$G_{1,3}$};
    \draw[rounded corners=5pt] (0.1+2,5.5) rectangle (1.9+2, 4.5);
    \node[draw=black, circle, minimum size=3mm, inner sep=0pt] (G13_1) at (2.5,5){$z^*_{5,8}$};
    \node[draw=black, circle, minimum size=3mm, inner sep=0pt] (G13_2) at (3.5,5){$z^*_{9,8}$};

    \filldraw[fill = red] (0.5+2, 4) rectangle (1+2, 3.7);
    \filldraw[fill = green] (1+2, 4) rectangle (1.5+2, 3.7);

    \node[] at (1+4,4.25) {$G_{2,3}$};
    \draw[rounded corners=5pt] (0.1+4,5.5) rectangle (1.9+4,4.5);
    \node[draw=black, circle, minimum size=3mm, inner sep=0pt] (G23_1) at (4.5,5){$z^*_{3,2}$};
    \node[draw=black, circle, minimum size=3mm, inner sep=0pt] (G23_2) at (5.5,5){$z^*_{3,8}$};

    \filldraw[fill = blue!50] (0.5+4, 4) rectangle (1+4, 3.7);
    \filldraw[fill = green] (1+4, 4) rectangle (1.5+4, 3.7);

    \node[] at (2+6,4.25) {$G_{2,4}$};
    \draw[rounded corners=5pt] (0.1+6,5.5) rectangle (3.9+6,4.5);
    \node[draw=black, circle, minimum size=3mm, inner sep=0pt] (G24_1) at (6.5,5){$z^*_{3,1}$};
    \node[draw=black, circle, minimum size=3mm, inner sep=0pt] (G24_2) at (7.5,5){$z^*_{3,4}$};
    \node[draw=black, circle, minimum size=3mm, inner sep=0pt] (G24_3) at (8.5,5){$z^*_{6,4}$};
    \node[draw=black, circle, minimum size=3mm, inner sep=0pt] (G24_4) at (9.5,5){$z^*_{6,7}$};

    \filldraw[fill = blue!50] (0.5+7, 4) rectangle (1+7, 3.7);
    \filldraw[fill = yellow] (1+7, 4) rectangle (1.5+7, 3.7);

    \node[] at (2+10,4.25) {$G_{3,4}$};
    \draw[rounded corners=5pt] (0.1+10,5.5) rectangle (3.9+10,4.5);
    \node[draw=black, circle, minimum size=3mm, inner sep=0pt] (G34_1) at (10.5,5){$z^*_{2,1}$};
    \node[draw=black, circle, minimum size=3mm, inner sep=0pt] (G34_2) at (11.5,5){$z^*_{2,4}$};
    \node[draw=black, circle, minimum size=3mm, inner sep=0pt] (G34_3) at (12.5,5){$z^*_{8,4}$};
    \node[draw=black, circle, minimum size=3mm, inner sep=0pt] (G34_4) at (13.5,5){$z^*_{8,7}$};

    \filldraw[fill = green] (0.5+11, 4) rectangle (1+11, 3.7);
    \filldraw[fill = yellow] (1+11, 4) rectangle (1.5+11, 3.7);

    \draw[thick] (G1_l0)--(G12_1)--(G2_l0);
    \draw[thick] (G1_l2)--(G12_2)--(G2_l3);

    \draw[thick] (G1_l1)--(G13_1)--(G3_l3);
    \draw[thick] (G1_l3)--(G13_2)--(G3_l3);

    \draw[thick] (G2_l1)--(G23_1)--(G3_l1);
    \draw[thick] (G2_l1)--(G23_2)--(G3_l4);

    \draw[thick] (G2_l2)--(G24_1)--(G4_l0);
    \draw[thick] (G2_l2)--(G24_2)--(G4_l2);
    \draw[thick] (G2_l5)--(G24_3)--(G4_l2);
    \draw[thick] (G2_l5)--(G24_4)--(G4_l4);

    \draw[thick] (G3_l2)--(G34_1)--(G4_l1);
    \draw[thick] (G3_l2)--(G34_2)--(G4_l3);
    \draw[thick] (G3_l5)--(G34_3)--(G4_l3);
    \draw[thick] (G3_l5)--(G34_4)--(G4_l5);
    
\end{tikzpicture}}
    \caption{Illustration of the graph $G'$ constructed from the instance $(G,H)$ shown in \cref{fig:instance_PSI}. 
    For each vertex gadget $G_i$, all vertices except those in $V_i^* \cup L_i$ are omitted.
    Each vertex $z^*_{v_i v_j}$ (resp.\ $z'_{v_i v_j}$) is abbreviated as $z^*_{ij}$ (resp.\ $z'_{ij}$).
    For each edge gadget $G_{i,j}$, all vertices except those in $V_{i,j}^*$ are omitted.
    }
    \label{fig:bipartite_degeneracy_girth}
\end{figure}


\boundeddg* \label{lem:w_hard_property*}

\begin{proof}
    We first prove that $G'$ is bipartite.
    We define a partition $(A, B)$ of $V(G')$ as follows:
    \[
        A \coloneqq
        \{q_i \mid i\in V(H)\}
        \;\cup\;
        \bigcup_{i\in V(H)} (V_i' \cup L_i \cup \{\tilde{x}_{i}\} \cup \tilde{Y}_i)
        \;\cup\;
        \bigcup_{ij\in E(H)} (V_{i,j}' \cup \{q_{i,j}\} \cup \{\tilde{x}_{i,j}\} \cup \tilde{Y}_{i,j}),
    \]
    and
    \[
        B \coloneqq
        \bigcup_{i\in V(H)} (V_i^{*} \cup \{\tilde{y}_{i}\} \cup \tilde{X}_i)
        \;\cup\;
        \bigcup_{ij\in E(H)} (V_{i,j}^{*} \cup \{\tilde{y}_{i,j}\} \cup \tilde{X}_{i,j}).
    \]
    By construction, every edge of $G'$ has one endpoint in $A$ and the other in $B$. 
    Hence $(A, B)$ is a bipartition of $G'$, and thus $G'$ is bipartite.

    We next prove that $G'$ has degeneracy at most~$4$.
    Recall that a graph has degeneracy at most~$d$ if its vertices admit an ordering in which every vertex has at most~$d$ neighbors appearing later in the ordering.
    Consider the following ordering of $V(G')$:
    \begin{enumerate}
        \item all vertices in $\tilde{X}_i \cup \tilde{Y}_i$ for $i \in V(H)$;
        \item all vertices in $\tilde{X}_{i,j} \cup \tilde{Y}_{i,j}$ for $ij \in E(H)$;
        \item $\bigcup_{ij\in E(H)} V_{i,j}'$;
        \item $\bigcup_{ij\in E(H)} V_{i,j}^{*}$;
        \item all vertices $q_{i,j}$;
        \item all vertices $\{\tilde{x}_{i,j},\tilde{y}_{i,j} \mid ij \in E(H)\}$;
        \item $\bigcup_{i\in V(H)} L_i$;
        \item $\bigcup_{i\in V(H)} V_i'$;
        \item $\bigcup_{i\in V(H)} V_i^{*}$;
        \item all vertices $q_i$ and all vertices $\{\tilde{x}_i,\tilde{y}_i \mid i \in V(H)\}$.
    \end{enumerate}
    Within each item, the ordering is arbitrary.

    We now bound the number of later neighbors of each vertex.
    Each vertex in $V_{i,j}^{*}$ has exactly five neighbors:
    $q_{i,j}$, $\tilde{x}_{i,j}$, one vertex in $V_{i,j}'$, and two vertices of the form $\ell_{u,i,j}$ and $\ell_{v,j,i}$.
    Among these neighbors, only the vertex in $V_{i,j}'$ appears earlier in the ordering.
    Hence every vertex in $V_{i,j}^{*}$ has exactly four later neighbors.

    Every other vertex has at most two neighbors appearing later in the ordering.
    Therefore, every vertex has at most four later neighbors, implying that $G'$ has degeneracy at most~$4$.

    Finally, we show that $G'$ has girth at least~$4$.
    Since $G'$ is bipartite, it contains no odd cycle, and in particular no triangle.
    Moreover, $G'$ is a simple graph, and thus contains no cycle of length~$2$.
    Therefore, $G'$ contains no cycle of length at most~$3$, implying that its girth is at least~$4$.
\end{proof}

We next show that $(G,H)$ is a yes-instance of \PSI\ if and only if $(G',k')$ is a yes-instance of \prbOD.

\medskip
\noindent
\textbf{Sufficiency (``if'' part).}

Suppose that $(G,H)$ is a yes-instance of \PSI, and let $\phi \colon V(H) \to V(G)$ be a colored subgraph isomorphism.
For each $i \in V(H)$, let $u_i \coloneqq \phi(i)$ and define
\[
D' =
\{u_i^{*},u_i',\tilde{x}_i,\tilde{y}_i \mid i \in V(H)\}
\cup
\{z^{*}_{u_i u_j},z'_{u_i u_j},\tilde{x}_{i,j},\tilde{y}_{i,j}
\mid ij \in E(H)\}.
\]
By construction, $|D'| = 4 n_H + 4 m_H = k'$.

We prove that $D'$ is an ODD-set of $G'$.
Fix $i \in V(H)$ and consider the vertex gadget $G_i$.
The vertex $q_i$ is adjacent, among vertices of $D'$, only to $u_i^{*}$, and hence is odd-dominated by $D'$.
Each of $u_i^{*}, u_i', \tilde{x}_i,$ and $\tilde{y}_i$ has exactly three vertices of $D'$ in its closed neighborhood and is thus odd-dominated.

\rev{Now let $j \in N_H(i)$.
Since $\phi$ is a subgraph isomorphism and $ij \in E(H)$, we have $u_i u_j \in E(G)$.
Hence the vertex $z^{*}_{u_i u_j}$ is well-defined.
Moreover, the vertex $\ell_{u_i,i,j}$ is adjacent, among vertices of $D'$, exactly to $u_i^{*}$, $\tilde{y}_i$, and $z^{*}_{u_i u_j}$.
Therefore, $\ell_{u_i,i,j}$ is odd-dominated.}

\rev{It remains to consider the other vertices of $G_i$.
Every vertex in $\tilde{X}_i \cup \tilde{Y}_i$ is a private neighbor\footnote{
A vertex $u$ is called a \emph{private neighbor} of a vertex $v$ if $u$ is adjacent only to $v$.
} of $\tilde{x}_i$ or $\tilde{y}_i$, respectively, and is therefore odd-dominated.
For every vertex $\ell_{v,i,j} \in L_i$ with $v \neq u_i$, the only neighbor of $\ell_{v,i,j}$ in $D'$ is $\tilde{y}_i$.
Similarly, for every vertex $v^{*} \in V_i^{*}$ with $v \neq u_i$, the only neighbors of $v^{*}$ in $D'$ are $\tilde{x}_i$ and vertices of the form $z^{*}_{vw}$, none of which belongs to $D'$ by the definition of $D'$.}
Thus every vertex of $G_i$ is odd-dominated by $D'$.

Next consider an edge gadget $G_{i,j}$ for some $ij \in E(H)$.
\rev{The argument is analogous.
The vertex $q_{i,j}$ is adjacent, among vertices of $D'$, only to $z^{*}_{u_i u_j}$.
Each of $z^{*}_{u_i u_j}$, $z'_{u_i u_j}$, $\tilde{x}_{i,j}$, and $\tilde{y}_{i,j}$ has exactly three vertices of $D'$ in its closed neighborhood.
Every vertex in $\tilde{X}_{i,j} \cup \tilde{Y}_{i,j}$ is a private neighbor of $\tilde{x}_{i,j}$ or $\tilde{y}_{i,j}$, respectively.
Finally, every remaining vertex of $V_{i,j}^{*} \cup V_{i,j}'$ has exactly one neighbor in $D'$.
Therefore, every vertex of $G_{i,j}$ is odd-dominated by $D'$.}

Consequently, every vertex of $G'$ is odd-dominated by $D'$.
Hence $D'$ is an ODD-set of $G'$, and thus $(G',k')$ is a yes-instance of \prbOD.

\medskip
\noindent
\textbf{Necessity (``only if'' part).}

Suppose that $(G',k')$ is a yes-instance of \prbOD, and let $D'$ be an ODD-set of $G'$ with $|D'| \leq k'$.

We first show that, for every vertex gadget $G_i$, both $\tilde{x}_i$ and $\tilde{y}_i$ belong to $D'$.
\rev{Suppose, for contradiction, that $\tilde{x}_i \notin D'$.
Each vertex in $\tilde{X}_i$ is a private neighbor of $\tilde{x}_i$.
Hence every vertex in $\tilde{X}_i$ must belong to $D'$ in order to be odd-dominated.
Since $|\tilde{X}_i| = k'+1$, this contradicts the assumption that $|D'| \leq k'$.
Therefore, $\tilde{x}_i \in D'$.}
By symmetry, $\tilde{y}_i \in D'$ as well.
The same argument applies to every edge gadget $G_{i,j}$, implying that $\tilde{x}_{i,j}, \tilde{y}_{i,j} \in D'$ for all $ij \in E(H)$.

Fix $i \in V(H)$.
We next prove that
\begin{align*}
    |(\{q_i\} \cup V_i^{*} \cup V_i' \cup L_i) \cap D'| \geq 2. 
\end{align*}

\rev{First suppose that $q_i \notin D'$.
Since $\tilde{x}_i \in D'$, the parity condition at $q_i$ implies that at least one vertex $v^{*} \in V_i^{*}$ belongs to $D'$.
Fix such a vertex $v^{*}$.
Consider the corresponding vertex $v' \in V_i'$.
The vertex $v'$ is adjacent only to $v^{*}$ and $\tilde{y}_i$.
Since $\tilde{y}_i \in D'$, the parity condition at $v'$ forces $v' \in D'$.
Hence at least two vertices of $\{q_i\} \cup V_i^{*} \cup V_i' \cup L_i$ belong to $D'$.}


\rev{Next suppose that $q_i \in D'$.
Fix an arbitrary vertex $v^{*} \in V_i^{*}$; such a vertex exists because $V_i \neq \emptyset$.
Since both $q_i$ and $\tilde{x}_i$ belong to $D'$, the parity condition at $v^{*}$ implies that at least one additional vertex in $N(v^{*}) \setminus \{q_i,\tilde{x}_i\} \subseteq V_i' \cup L_i$ belongs to $D'$.
Therefore, $|(\{q_i\} \cup V_i^{*} \cup V_i' \cup L_i) \cap D'| \geq 2$ follows.}
%


Thus every vertex gadget contributes at least two vertices from
$\{q_i\} \cup V_i^{*} \cup V_i' \cup L_i$.
By an analogous argument, every edge gadget $G_{i,j}$ contributes at least two vertices from $\{q_{i,j}\} \cup V_{i,j}^{*} \cup V_{i,j}'$.

Recall that $k' = 4n_H + 4m_H$.
Moreover, we have already shown that $D'$ contains the $2n_H + 2m_H$
forced vertices
\[
\{\tilde{x}_i,\tilde{y}_i \mid i \in V(H)\}
\cup
\{\tilde{x}_{i,j},\tilde{y}_{i,j} \mid ij \in E(H)\}.
\]
\rev{Since there are exactly $n_H$ vertex gadgets and $m_H$ edge gadgets, the above lower bounds imply that each vertex gadget contributes exactly two vertices from $\{q_i\} \cup V_i^{*} \cup V_i' \cup L_i$, and each edge gadget contributes exactly two vertices from $\{q_{i,j}\} \cup V_{i,j}^{*} \cup V_{i,j}'$.}


We now show that $q_i \notin D'$ and $L_i \cap D' = \emptyset$ for every $i \in V(H)$.
\rev{Suppose, for contradiction, that $q_i \in D'$.
As shown above, there exists a vertex in $V_i' \cup L_i$ that also belongs to $D'$.
Moreover, since the parity condition at $q_i$ requires an odd number of neighbors of $q_i$ in $D'$, at least one vertex of $V_i^{*}$ belongs to $D'$.
Consequently,
\[
|(\{q_i\} \cup V_i^{*} \cup V_i' \cup L_i) \cap D'| \geq 3,
\]
contradicting the fact that this quantity is exactly~$2$.
Therefore, $q_i \notin D'$.}

\rev{Since $q_i \notin D'$ and $\tilde{x}_i \in D'$, the parity condition at $q_i$ implies that exactly one vertex of $V_i^{*}$ belongs to $D'$.
Let this vertex be $u_i^{*}$.
Consider the corresponding vertex $u_i' \in V_i'$.
Since $\tilde{y}_i \in D'$, the parity condition at $u_i'$ forces $u_i' \in D'$.
Because the gadget contributes exactly two vertices from $\{q_i\} \cup V_i^{*} \cup V_i' \cup L_i$, it follows that $L_i \cap D' = \emptyset$.}

Hence, for every $i \in V(H)$, there exists a unique vertex $u_i \in V_i$ such that $\{u_i^{*}, u_i'\} \subseteq D'$.
Define $\phi \colon V(H) \to V(G)$ by $\phi(i) \coloneq u_i$.


\rev{Let $ij \in E(H)$ and consider the vertex $\ell_{u_i,i,j}$.
Since $\tilde{y}_i \in D'$ and $\ell_{u_i,i,j} \notin D'$, the parity condition at $\ell_{u_i,i,j}$ implies that $\ell_{u_i,i,j}$ has an odd number of neighbors in $D'$.
Among its neighbors, both $u_i^{*}$ and $\tilde{y}_i$ belong to $D'$.
Hence some additional neighbor of $\ell_{u_i,i,j}$ must belong to $D'$.
By construction, the only remaining neighbors of $\ell_{u_i,i,j}$ are vertices of the form $z^{*}_{u_i x}$ with $x \in V_j$.
Therefore, $z^{*}_{u_i x} \in D'$ for some $x \in V_j$.
Applying the same argument to $\ell_{u_j,j,i}$ yields $z^{*}_{y u_j} \in D'$ for some $y \in V_i$.}

\rev{Since exactly one vertex of $V_{i,j}^{*}$ belongs to $D'$, the two selected vertices above must coincide.
Therefore, $z^{*}_{u_i x} = z^{*}_{y u_j}$, which implies that $x = u_j$ and $y = u_i$.
Consequently, $z^{*}_{u_i u_j} \in D'$.}

\rev{By the definition of the edge gadget, the vertex $z^{*}_{u_i u_j}$ exists only if $u_i u_j \in E(G)$.
Hence $u_i u_j \in E(G)$.
Since this holds for every edge $ij \in E(H)$, the mapping $\phi(i) \coloneqq u_i$ defines a colored subgraph isomorphism from $H$ to $G$.}
\qed


Note that $|V(G')|$ is bounded by a polynomial in $n_H$ and $|V(G)|$.
Recall that $k' = 4n_H + 4m_H = O(m_H)$.

Suppose that \prbOD\ can be solved in time $f(k') \cdot |V(G')|^{o(k'/\log k')}$ for some computable function $f$.
Since $|V(G')|$ is polynomial in $n_H$ and $|V(G)|$, this would yield an algorithm for \PSI\ running in time $f'(m_H) \cdot |V(G)|^{o(m_H/\log m_H)}$ for some computable function $f'$.
\revn{This contradicts the ETH-based lower bound for \PSI.}
Therefore, no such algorithm for \prbOD\ exists, which completes the proof of \Cref{thm:hard_k}.

\section{Omitted discussion for \Cref{thm:hard_split}}\label{app:split_hard}

\thmhardsplit* \label{thm:hard_split*}

We present a polynomial-time reduction from \prbOD\ on general graphs to \prbOD\ on split graphs.

Let $(G,k)$ be an instance of \prbOD.
We may assume without loss of generality that $k$ is odd.
Indeed, if $k$ is even, then we obtain an equivalent instance $(G^{*},k^{*})$ by adding an isolated vertex to $G$ and setting $k^{*} \coloneqq k+1$.
Since every isolated vertex must belong to every ODD-set, $(G,k)$ is a yes-instance if and only if $(G^{*},k^{*})$ is.

Let $V(G)=\{v_1,v_2,\dots,v_n\}$.
For each $i\in[k+1]$, let $V^i \coloneqq  \{v_1^i,v_2^i,\ldots,v_n^i\}$.
\rev{We construct a graph $G'$ as follows.}

The vertex set of $G'$ consists of $V(G') \coloneqq V(G) \cup \{z\} \cup \bigcup_{i \in [k+1]} V^{i}$, where $z$ is a newly introduced vertex.
The graph $G'$ contains all edges with both endpoints in $V(G) \cup \{z\}$, and hence this set induces a clique.
Moreover, the set $I \coloneqq \bigcup_{i \in [k+1]} V^{i}$
is an independent set.
Finally, for each $i \in [k+1]$, each vertex $v_b^{i} \in V^{i}$, and each vertex $v_a \in N_G[v_b]$, add the edge $v_a v_b^{i}$.


Set $k' \coloneq k$.
This completes the construction of the instance $(G',k')$.

By construction, $G'$ is a split graph since $V(G)\cup\{z\}$ is a clique and $I$ is an independent set.
Moreover, the reduction clearly can be done in polynomial time.

Next, we prove the correctness of the reduction.
The following lemma completes the proof of \cref{thm:hard_split}.

\begin{lemma}
    The instance $(G,k)$ is a yes-instance of \prbOD\ if and only if the instance $(G',k')$ is a yes-instance of \prbOD.
\end{lemma}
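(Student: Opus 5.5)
The proof is a direct two-way argument that exploits the structure of $G'$. The clique $V(G)\cup\{z\}$ forces a global parity condition on the size of the solution, and the $k+1$ independent copies $V^1,\dots,V^{k+1}$ force the local odd-domination conditions of $G$. The preliminary step of making $k$ odd exists precisely to resolve the global parity.

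For the forward direction, let $D$ be an ODD-set of $G$ with $|D|\le k$. If $|D|$ is odd, set $D'\coloneqq D$. If $|D|$ is even, then $|D|\le k-1$ because $k$ is odd, so set $D'\coloneqq D\cup\{z\}$, which still has size at most $k=k'$. We check the two kinds of vertices of $G'$.

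\begin{itemize}
\item For a copy $v_b^i$, we have $v_b^i\notin D'$, and its neighbourhood consists exactly of $N_G[v_b]$, so $z\notin N_{G'}[v_b^i]$. Hence $|N_{G'}[v_b^i]\cap D'|=|N_G[v_b]\cap D|$, which is odd.
\item For a clique vertex $w\in V(G)\cup\{z\}$, its closed neighbourhood contains the whole clique, and $D'$ contains no copy vertices. Hence $|N_{G'}[w]\cap D'|=|D'|$, which is odd by the choice of $D'$.
\end{itemize}

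Thus $D'$ is an ODD-set of $G'$.

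For the backward direction, let $D'$ be an ODD-set of $G'$ with $|D'|\le k$. The sets $V^1,\dots,V^{k+1}$ are pairwise disjoint and there are $k+1$ of them, so by pigeonhole some index $i$ satisfies $D'\cap V^i=\emptyset$. Fix this $i$. For every $b\in[n]$, the vertex $v_b^i$ is not in $D'$ and its neighbours are exactly $N_G[v_b]\subseteq V(G)$. Therefore $|N_{G'}[v_b^i]\cap D'|=|N_G[v_b]\cap (D'\cap V(G))|$, and this is odd. So $D\coloneqq D'\cap V(G)$ is an ODD-set of $G$ with $|D|\le |D'|\le k$.

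The only delicate point is the forward direction. Every clique vertex sees the entire solution, so the solution size must be odd; the vertex $z$, which is adjacent to no copy, absorbs the parity, and the assumption that $k$ is odd guarantees room for it. The remaining checks are routine.
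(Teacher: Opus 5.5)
Your proof is correct. The forward direction is the same as the paper's: take $D'=D$ or $D'=D\cup\{z\}$ according to the parity of $|D|$, using that $k$ is odd.

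The backward direction takes a different, slightly shorter route. The paper first proves $D'\cap I=\emptyset$ by a false-twin argument. If $v_j^i\in D'$, it picks $r\neq i$ with $v_j^r\notin D'$. Since $N_{G'}(v_j^i)=N_{G'}(v_j^r)$, the closed-neighbourhood counts of $v_j^i$ and $v_j^r$ differ by exactly one, so they cannot both be odd. It then reads off the ODD-set $D'\setminus\{z\}$ from an arbitrary copy of each vertex.

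You instead apply the pigeonhole principle to whole layers. This gives one layer $V^i$ disjoint from $D'$, and you read off $D'\cap V(G)$ from that layer alone. That avoids the twin argument entirely.

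The paper's version yields an extra structural fact: no ODD-set of $G'$ of size at most $k$ contains a copy vertex, so $D'\setminus\{z\}$ is itself the witness. The lemma does not need this fact.
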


\begin{proof}
    Suppose that $(G,k)$ is a yes-instance of \prbOD.
    Then there exists an ODD-set $D$ of $G$ with $|D| \leq k$.
    We distinguish two cases according to the parity of $|D|$.

    First, suppose that $|D|$ is odd.
    Define $D' \coloneq D$.
    Since $D \subseteq V(G)$ and $V(G)\cup\{z\}$ induces a clique in $G'$, every vertex in $C \coloneqq V(G)\cup\{z\}$ is odd-dominated by $D'$.
    Now consider a vertex $v_j^i \in I$.
    By construction, $N_{G'}(v_j^i)=N_G[v_j]$.
    Since $D$ is an ODD-set of $G$, the vertex $v_j$ is odd-dominated by $D$ in $G$, and hence $v_j^i$ is odd-dominated by $D'$ in $G'$.
    Therefore, every vertex of $G'$ is odd-dominated by $D'$, implying that $D'$ is an ODD-set of $G'$.
    Because $|D'|=|D|\leq k=k'$, the instance $(G',k')$ is a yes-instance of \prbOD.

    Next, suppose that $|D|$ is even.
    Define $D' \coloneqq D \cup \{z\}$.
    Since $k$ is odd and $|D|$ is even, we have $|D| \leq k-1$, and therefore $|D'| \leq k = k'$.
    Moreover, $|D'|$ is odd.

    As above, since $D' \subseteq C$ and $C$ is a clique, every vertex in $C$ is odd-dominated by $D'$.
    Now consider a vertex $v_j^i \in I$.
    Because $z$ is adjacent to no vertex in $I$, we have $N_{G'}(v_j^i)\cap D' = N_G[v_j]\cap D$.
    Since $v_j$ is odd-dominated by $D$ in $G$, it follows that $v_j^i$ is odd-dominated by $D'$ in $G'$.
    Hence every vertex of $G'$ is odd-dominated by $D'$.
    Therefore, $D'$ is an ODD-set of $G'$, and $(G',k')$ is a yes-instance of \prbOD.
    
    Suppose that $(G',k')$ is a yes-instance of \prbOD.
    Then there exists an ODD-set $D'$ of $G'$ with $|D'| \leq k'$.
    
    We first show that $D' \cap I = \emptyset$.
    Suppose, for contradiction, that there exists a vertex $v_j^i \in D' \cap I$.
    Since $|D'|\leq k = k'$ and there are $k+1$ copies of each vertex $v_j$, there exists an index $r \in [k+1]\setminus \{i\}$ such that $v_j^r \notin D'$.
    By construction, $N_{G'}(v_j^i)=N_{G'}(v_j^r)$.
    Hence
    \[
    N_{G'}[v_j^i]\cap D'
    =
    \bigl(N_{G'}(v_j^r)\cap D'\bigr)\cup\{v_j^i\}.
    \]
    Therefore, exactly one of the vertices $v_j^i$ and $v_j^r$ has an odd number of vertices of $D'$ in its closed neighborhood, contradicting the assumption that $D'$ is an ODD-set.
    Thus, $D' \cap I = \emptyset$.

    Define $D \coloneqq D'\setminus\{z\}$.
    Since $D'\cap I=\emptyset$, we have $D \subseteq V(G)$.
    We claim that $D$ is an ODD-set of $G$.
    Fix an arbitrary vertex $v_j \in V(G)$.
    Choose any copy $v_j^i \in I$.
    Since $v_j^i$ is odd-dominated by $D'$ in $G'$ and $N_{G'}(v_j^i)=N_G[v_j]$, we obtain
    \[
    |N_G[v_j]\cap D|
    =
    |N_{G'}(v_j^i)\cap D'|
    \equiv 1 \pmod 2.
    \]
    Therefore, $v_j$ is odd-dominated by $D$ in $G$.
    Since this holds for every vertex $v_j \in V(G)$, the set $D$ is an ODD-set of $G$.
    
    Finally, since $|D|\leq|D'|\leq k'=k$, the instance $(G,k)$ is a yes-instance of \prbOD.
\end{proof}

\section{FPT algorithm parameterized by $k$ for graphs of girth at least~$5$}\label{appsec:FPTgirth5}
In this section, we prove that \prbOD\ is fixed-parameter tractable when restricted to graphs of girth at least~$5$.

\begin{lemma}\label{lem:manyONsafe}
    \Cref{rule:manyON} is safe.
\end{lemma}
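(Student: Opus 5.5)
Our aim is to show that whenever some vertex $v$ has more than $k$ neighbors in $f^{-1}(1)$, every solution $D$ of $(G,F,f,k)$ must contain $v$. Once this is established, safeness is immediate, because the annotated instances are equivalent under the correspondence $D \mapsto D\setminus\{v\}$. In the forward direction, take a solution $D$ of the original instance and put $D' \coloneq D\setminus\{v\}$. We have $D'\cap(F\cup\{v\})=\emptyset$ and $D'\cup F\cup\{v\} = D\cup F$, so the size bound still holds. Moreover, for every $u$, the count $|N_G[u]\cap D'|$ differs from $|N_G[u]\cap D|$ exactly when $u\in N_G[v]$. That is precisely the toggle that turns $f$ into $f'$, so the parity conditions are preserved. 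The converse direction is the same computation read backwards: given a solution $D'$ of the new instance, the set $D'\cup\{v\}$ is a solution of the old one, and $v\notin F$ because of $v\in F'$ being newly added. Note that the paper's annotated variant uses closed neighborhoods throughout, so no case split between $\Vopen$ and $\Vclose$ is needed.

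The main step is the forced-membership claim, and this is where girth at least $5$ is used. Suppose $v\notin D$, let $D$ be a solution, and let $W \coloneq N_G(v)\cap f^{-1}(1)$, so that $|W|>k$. Every $w\in W$ needs an odd number of vertices of $D$ in $N_G[w]$, so each such $w$ is covered by some $d_w\in D\cap N_G[w]$. Since $v\notin D$, we have $d_w\neq v$. If $d_w=w$, then $w$ itself lies in $D$. Otherwise $d_w$ is a neighbor of $w$ different from $v$.

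We then argue that a single vertex $d\in D$ can serve as $d_w$ for at most one $w\in W$. First, if $d\in W$, then $d$ cannot be adjacent to another $w'\in W$, since $v,d,w'$ would form a triangle. Second, if $d\notin N_G[v]$ were adjacent to two vertices $w,w'\in W$, then $v,w,d,w'$ would form a $4$-cycle, which is impossible when the girth is at least $5$. Hence the map $w\mapsto d_w$ is injective, which gives $|D|\ge|W|>k$. This contradicts $|D\cup F|\le k$, so $v\in D$. It also follows that $v\notin F$: otherwise $v\notin D$ is forced, and the same contradiction arises.

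The point that needs the most care is justifying the choice of $d_w$. An odd count is nonzero, so $N_G[w]\cap D\neq\emptyset$ for every $w \in W$, and any element of that set can serve as $d_w$. The injectivity argument only uses this nonemptiness together with the girth assumption, so it goes through. Finally, we should check that the precondition involves the current $f$ and the current $F$; this is consistent with how the rule is applied, so the argument holds at every application of \Cref{rule:manyON}.
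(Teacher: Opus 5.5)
Your proposal is correct and follows the same route as the paper. Both arguments force $v\in D$ via the injectivity argument on $W=N_G(v)\cap f^{-1}(1)$ based on girth at least $5$, and then use the correspondence $D\mapsto D\setminus\{v\}$ together with the toggle of $f$ on $N_G[v]$.

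Your handling of the closed neighborhood, allowing $d_w=w$, is slightly more careful than the paper's, which writes $N_G(x)\cap D$. Your case split does omit one subcase, $d\in N_G(v)\setminus W$, but it is immediate: such a $d$ cannot serve any $w\in W$, since $d\neq w$ and $dw\in E(G)$ would give the triangle $v,d,w$.
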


\begin{proof}
    Let $(G,F,f,k)$ be an instance to which \Cref{rule:manyON} applies, and let $v$ be the vertex identified by the rule.
    Let $(G,F',f',k)$ denote the resulting instance, where $F' = F \cup \{v\}$.

    Suppose $(G,F,f,k)$ is a yes-instance, and let $D \subseteq V(G)\setminus F$ be a PD-set for $(G,F,f,k)$ with $|D \cup F|\leq k$.
    We claim that $v \in D$.
    Assume for contradiction that $v \notin D$.
    For every vertex $x \in N_G(v)\cap f^{-1}(1)$, the parity constraint at $x$ implies $|N_G(x)\cap D| \geq 1$.
    Choose an arbitrary vertex $u_x \in N_G(x)\cap D$.

    Since the girth of $G$ is at least~$5$, no two distinct vertices of $N_G(v)$ have a common neighbor other than $v$.
    Hence, for distinct $x,y \in N_G(v)\cap f^{-1}(1)$, we have $u_x \neq u_y$.
    Therefore $|D| \geq |N_G(v) \cap f^{-1}(1)| > k$, contradicting $|D| \leq |D \cup F|\le k$.
    Thus $v \in D$.

    Define $D' \coloneq D \setminus \{v\}$. 
    Then $D' \subseteq V(G)\setminus F'$ and $|D' \cup F'| = |D \cup F| \le k$.
    By the definition of $f'$, the set $D'$ satisfies all parity constraints in $(G,F',f',k)$.
    Hence $(G,F',f',k)$ is a yes-instance.
    
    Conversely, suppose $(G,F',f',k)$ is a yes-instance, and let $D' \subseteq V(G)\setminus F'$ with $|D' \cup F'|\le k$ be a PD-set for $(G,F',f',k)$.
    Since $v \in F'$, we have $v \notin D'$.
    Define $D \coloneq D' \cup \{v\}$.
    Then $D \subseteq V(G)\setminus F$ and $|D \cup F| = |D' \cup F'| \le k$.

    Let $x \in V(G)$.
    If $x \notin N_G[v]$, then $f'(x)=f(x)$ and $N_G(x)\cap D = N_G(x)\cap D'$, and the parity constraint at $x$ follows immediately.
    If $x \in N_G[v]$, then $f'(x)=\overline{f(x)}$ and $|N_G(x)\cap D| = |N_G(x)\cap D'| + 1$.
    Since $|N_G(x)\cap D'| + f'(x) \equiv 0 \pmod 2$, it follows that $|N_G(x)\cap D| + f(x) \equiv 0 \pmod 2$.
    Thus $D$ satisfies all parity constraints, and $(G,F,f,k)$ is a yes-instance.
\end{proof}

The following lemma is central to our algorithm.

\girthdegree*\label{lem:girth_degree*}

\begin{proof}
    Let $v \in V(G) \setminus F$. 
    Since \Cref{rule:manyON} is no longer applicable, we have $|N_G(v) \cap f^{-1}(1)| \leq k$.
    
    We bound $|N_G(v)\cap f^{-1}(0)|$.
    \revn{Observe that every vertex $u$ with $f(u)=0$ is contained in $N_G[w]$ for some vertex $w\in F$.}
    Hence every such $u$ is adjacent to at least one vertex of $F$.
    
    Fix $v \in V(G) \setminus F$.
    Since $G$ has girth at least~$5$, no two distinct neighbors of $v$ have a common neighbor other than $v$.
    Therefore, each vertex $w \in F$ is adjacent to at most one vertex in $N_G(v)$.
    Consequently, $|N_G(v) \cap f^{-1}(0)| \le |F| \leq k$, where the last inequality follows from \Cref{rule:kzero}.

    Combining the bounds, $\deg_G(v) = |N_G(v) \cap f^{-1}(1)| + |N_G(v) \cap f^{-1}(0)| \le k + k = 2k$.
\end{proof}


\subsection{Derandomization of the Auxiliary Problem}

To derandomize our algorithm, we employ \emph{universal sets}; see, for example, Cygan et al.~\cite{book:CyganFKLMPPS15}.
Let $V$ be an $n$-element set.
An \emph{$(n,k)$-universal set} over $V$ is a family $\mathcal{U} \subseteq 2^V$ such that for every $S \subseteq V$ with $|S|=k$, the collection $\{A \cap S \mid A \in \mathcal{U}\}$ equals $2^S$. 
Naor et al.~\cite{derandmize:NaorSS95} provided a deterministic construction of an $(n,k)$-universal set $\mathcal{U}$ of size $|\mathcal{U}| = 2^k \cdot k^{O(\log k)} \cdot \log n$ in $O(n \cdot |\mathcal{U}|)$ time.



\begin{lemma}\label{lem:deterministic}
    There exists a deterministic algorithm that solves the auxiliary problem in $2^{O(k^3)} \cdot n^{O(1)}$ time.
\end{lemma}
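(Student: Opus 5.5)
We replace the random edge-colouring $\chi$ by the family of colourings induced by a universal set. Take $V \coloneqq E(H)$ as the ground set, with $|E(H)| \le n^2$, and set $s \coloneqq 4k^3+2k^2$. Using the construction of Naor et al.~\cite{derandmize:NaorSS95}, we build an $(|E(H)|,s)$-universal set $\mathcal{U}$ of size $2^{s}\cdot s^{O(\log s)}\cdot \log n = 2^{O(k^3)}\log n$ in time $2^{O(k^3)}\cdot n^{O(1)}$. Each $A\in\mathcal{U}$ defines a colouring $\chi_A$ in which the edges of $A$ are $\redcol$ and the remaining edges are $\bluecol$. The deterministic algorithm runs the dynamic programming of the randomized algorithm once for every $\chi_A$, and answers \scyes\ if and only if some run yields $\auxDP(t,\mathbf{c}')=\true$.

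For correctness, first consider soundness. The DP only certifies sets $D$ that satisfy the parity and cardinality constraints; consistency is merely an extra restriction. Hence the answer \scyes\ always corresponds to a genuine solution, for any colouring. For completeness, fix a solution $S$. By the bound computed above, $|\Lambda_S\cup\Gamma_S|\le s$, and $\Lambda_S\cap\Gamma_S=\emptyset$ by definition. Pad $\Lambda_S\cup\Gamma_S$ to an arbitrary set $T\subseteq E(H)$ of exactly $s$ edges; if $|E(H)|<s$, simply take all $2^{|E(H)|}$ subsets instead. Universality gives some $A\in\mathcal{U}$ with $A\cap T=\Lambda_S$. Then $\chi_A$ colours every edge of $\Lambda_S$ red and every edge of $\Gamma_S$ blue, so $\chi_A$ is successful with respect to $S$.

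It remains to argue that a successful colouring makes the DP accept. This is exactly the argument already used for the randomized version: under a successful colouring, $S$ is a union of pieces, each lying inside a single red component $C_i$ of size at most $k(2k+1)$, and each piece belongs to the corresponding family $\mathcal{D}_i$. I expect this step to be the main point needing care. One must check that $S\cap V(C_i)$ satisfies the parity condition for all vertices of $C_i$ using only $S\cap V(C_i)$. This holds because every $H$-neighbour of a vertex of $S$ is joined to it by a red edge of $\Lambda_S$, so it lies in the same red component. Consequently, for each $v\in V(C_i)$ we have $N_H(v)\cap S\subseteq V(C_i)$.

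Concerning the consistency requirement on all $v\in V(C_i)$ in $\mathcal{D}_i$: it should be read as a requirement on the vertices of $D$ only, as in the definition of $\auxDP$. With that reading, the vertices of $S$ are consistent, and $|V(C_i)|\le |N_H[S]|$ whenever $C_i$ meets $S$, because every edge incident to $N_H(S)$ that is not in $\Lambda_S$ belongs to $\Gamma_S$ and is therefore blue. Summing the costs gives $|\mathcal{U}|\cdot 2^{O(k^2)} n^{O(1)} = 2^{O(k^3)}\cdot n^{O(1)}$.
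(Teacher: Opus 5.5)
Your proposal is correct and follows the paper's proof. You build an $(|E(H)|,4k^3+2k^2)$-universal set via Naor et al., run the dynamic program once for each induced red/blue colouring, and invoke universality to obtain a colouring that is successful for a fixed solution $S$, for a total of $|\mathcal{U}|\cdot 2^{O(k^2)}\cdot n^{O(1)}=2^{O(k^3)}\cdot n^{O(1)}$ time. Your extra details are sound and fill in points the paper leaves implicit: padding $\Lambda_S\cup\Gamma_S$ to exactly $s$ edges so the universal-set definition applies, and reading the consistency requirement in $\mathcal{D}_i$ as a condition on the vertices of $D$ only.
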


\begin{proof}
    To derandomize the algorithm, we set $q \coloneq 4k^3 + 2k^2$ and construct an $(|E(G)|, q)$-universal set $\mathcal{U} \subseteq 2^{E(G)}$. 
    For each $U \in \mathcal{U}$, we define an edge coloring $\chi_U \colon E(G) \to \{\redcol, \bluecol\}$ such that $\chi_U(e) = \redcol$ if $e \in U$ and $\chi_U(e) = \bluecol$ otherwise. 
    The properties of $\mathcal{U}$ ensure that for any solution $S$ with $|\Lambda_S \cup \Gamma_S| \le q$, there exists at least one $U^* \in \mathcal{U}$ such that $\chi_{U^*}$ colors all edges in $\Lambda_S$ red and all edges in $\Gamma_S$ blue.

    By the construction of Naor et al.~\cite{derandmize:NaorSS95}, the size of the universal set is $|\mathcal{U}| = \rev{2^q \cdot q^{O(\log q)}} \cdot \log |E(G)| = 2^{O(k^3)} \cdot \mathrm{poly}(k, \log n)$.
    Since each procedure of the dynamic programming algorithm takes $2^{O(k^2)} \cdot n^{O(1)}$ time, the total deterministic running time is $|\mathcal{U}| \cdot 2^{O(k^2)} \cdot n^{O(1)} = 2^{O(k^3)} \cdot n^{O(1)}$.
\end{proof}

Combined with \Cref{lem:deterministic}, since the remaining steps of the algorithm are deterministic, the overall running time is $2^{O(k^3)} \cdot n^{O(1)}$. 
This completes the proof of \Cref{thm:FPT_girth5}.
\section{Omitted discussion for \cref{subsec:clique-width}}\label{appsec:clique-width}
The \emph{clique-width} of a graph $G$, denoted by $\cw(G)$, is the minimum integer $w$ such that $G$ can be constructed using at most $w$ labels by repeatedly applying the following operations:
\begin{itemize}
    \item creating a new graph consisting of a single vertex $v$ with an arbitrary label $i$;
    \item taking the disjoint union of two labeled graphs;
    \item relabeling all vertices with label $i$ to label $j$;
    \item adding all possible edges between vertices of two distinct labels $i$ and $j$.
\end{itemize}

Such a construction naturally induces a rooted tree in which each node corresponds to one of the above operations.
This tree is called a \emph{$w$-expression tree} of $G$.
Nodes corresponding to the four operations are referred to as an \emph{introduce node} $\circ_i$, a \emph{union node} $\cup$, a \emph{relabel node} $\rho_{i \to j}$, and a \emph{join node} $\eta_{i,j}$, respectively.
In any $w$-expression tree, every introduce node is a leaf, and conversely, every leaf is an introduce node.

It is known that, given a graph $G$, one can compute a $(2^{\cw(G)+1}-1)$-expression for $G$ in $O(|V(G)|^3)$ time~\cite{cw:HlinenyO08,cw:Oum08,cw:OumS06}.
A $w$-expression tree is called \emph{irredundant} if every edge of the represented graph is introduced by exactly one join node.
Any $w$-expression tree for an $n$-vertex graph can be transformed into an irredundant one with $O(n)$ nodes in linear time~\cite{cw:CourcelleO00}.
Thus, throughout this paper, we assume that all $w$-expression trees are irredundant.

For a node $t$ in a $w$-expression tree $T_G$, let $G_t$ denote the labeled graph represented by the subtree rooted at $t$.
In a labeled graph, a vertex with label $i$ is called an \emph{$i$-vertex}.
For each $i \in [w]$, let $U_i^t$ denote the set of $i$-vertices in $G_t$.

We now state the main result of this subsection.

\fptcliquew* \label{thm:cliquew*}

Let $(G,f,\Vopen,\Vclose,k)$ be an instance of \prbGen, where $G$ has $n$ vertices and $T_G$ is a $w$-expression tree of $G$.
We solve the problem by performing a bottom-up dynamic programming over $T_G$, where the update rules depend on the type of each node.

For a node $t$ of $T_G$, a pair of functions $(\alpha,\beta)$ with $\alpha,\beta \colon [w] \to \{0,1\}$ is called a \emph{characteristic} of a set $D \subseteq V(G_t)$ if:
\begin{itemize}
    \item for every $i \in [w]$, $\alpha(i) \equiv_2 |D \cap U^t_i|$;
    \item for every $i \in [w]$, the following \emph{parity conditions} hold: for all $v \in U_i^t \cap \Vopen$, we have $|N_{G_t}(v) \cap D| + f(v) \equiv_2 \beta(i) $, and for all $v \in U_i^t \cap \Vclose$, we have $|N_{G_t}[v] \cap D| + f(v) \equiv_2 \beta(i)$.
\end{itemize}

For each node $t$ of $T_G$, we maintain a DP table indexed by states $s = (\alpha,\beta)$.
For each state $s$, the table entry $c_t[s]$ stores the minimum size of a PD-set $D \subseteq V(G_t)$ whose characteristic is $(\alpha,\beta)$.
If no such set exists, we set $c_t[s] = +\infty$.
Since there are at most $2^{w}$ choices for each of $\alpha$ and $\beta$, the number of states for each node is at most $4^{w}$.

After all table entries $c_t[s]$ have been computed, we check whether there exists a function $\alpha$ such that $c_r[(\alpha,\beta^*)] \le k$, where $r$ is the root of $T_G$ and $\beta^*(i) = 0$ for all $i \in [w]$.

We now give an intuitive explanation of the states of our DP table.
The value $\alpha(i)$ describes the parity of the number of vertices of label $i \in [w]$ selected into a solution.
The value $\beta(i)$ encodes the \emph{parity-domination status} of label $i$, namely, whether all $i$-vertices satisfy the parity condition or all violate it.
If $U_i^t = \emptyset$, we allow $\beta(i)$ to take either value in $\{0,1\}$.

\revn{
The correctness of the above definition follows from the following observation.
For every node $t$ and every label $i \in [w]$, all vertices in $U_i^t$ satisfy the same parity condition.
Indeed, any two vertices with the same label are treated identically by all operations that may appear above $t$ in the expression tree.
Hence, for any two vertices $u,v \in U_i^t$ and any set $D \subseteq V(G_t)$, the values $|N_{G_t}(u)\cap D|+f(u)$ and $|N_{G_t}(v)\cap D|+f(v)$ are affected in exactly the same way by every subsequent join and relabel operation.
Therefore, these two values are equal modulo~$2$ at every stage of the computation.
Consequently, all vertices of label $i$ either satisfy the parity condition or violate it simultaneously.
It is therefore sufficient to represent this information by a single bit $\beta(i)\in\{0,1\}$.
}


We now explain the update rule for each node.

\paragraph{Introduce node $\circ_i$.}
Let $t = \circ_i$ be an introduce node that introduces a single vertex $v$ with label $i$.
We may choose whether to include $v$ in the solution.
Since $G_t$ contains no vertices with label $j \neq i$, if there exists some $j \in [w] \setminus \{i\}$ with $\alpha(j)=1$, then we set $c_t[\alpha,\beta]=+\infty$.
For $j \in [w] \setminus \{i\}$, the value of $\beta(j)$ is irrelevant, as the corresponding parity condition is always satisfied.

If $v \in \Vopen$, we define
\begin{align*}
	c_t[\alpha,\beta]=
	\begin{cases}
		0
		 & (\text{if } \alpha(i)=0,\ \beta(i)\equiv_2 f(v),\ \alpha(j)=0\ (\forall j \in [w] \setminus \{i\})) \\
		1
		 & (\text{if } \alpha(i)=1,\ \beta(i)\equiv_2 f(v),\ \alpha(j)=0\ (\forall j \in [w] \setminus \{i\})) \\
		+\infty
		 & \text{(otherwise.)}
	\end{cases}
\end{align*}

If $v \in \Vclose$, then including $v$ in the solution flips its parity condition.
Accordingly, we define
\begin{align*}
	c_t[\alpha,\beta]=
	\begin{cases}
		0
		 & (\text{if } \alpha(i)=0,\ \beta(i)\equiv_2 f(v),\ \alpha(j)=0\ \ (\forall j \in [w] \setminus \{i\}))   \\
		1
		 & (\text{if } \alpha(i)=1,\ \beta(i)\equiv_2 \overline{f(v)},\ \alpha(j)=0\ \ (\forall j \in [w] \setminus \{i\})) \\
		+\infty
		 & \text{(otherwise.)}
	\end{cases}
\end{align*}

\paragraph{Union node $\cup$.}
Let $t$ be a union node with children $t_1$ and $t_2$.
Suppose that $D_1 \subseteq V(G_{t_1})$ and $D_2 \subseteq V(G_{t_2})$ have characteristics $(\alpha_1,\beta_1)$ and $(\alpha_2,\beta_2)$, respectively.
\revn{
For each label $i\in[w]$, vertices of label $i$ occurring in $G_{t_1}$ and $G_{t_2}$ must satisfy the same parity condition.
Hence, we only combine states satisfying $\beta_1(i)=\beta_2(i)$ for all $i\in[w]$.
}
Then the characteristic of $D_1 \cup D_2$ is $(\alpha,\beta)$, where $\alpha(i)\equiv_2 \alpha_1(i)+\alpha_2(i)$ and $\beta(i)=\beta_1(i)=\beta_2(i)$ for all $i \in [w]$.
Thus, we define
\begin{align*}
	c_t[\alpha,\beta]
	=
	\min_{\substack{
			\alpha_1,\alpha_2,\beta_1,\beta_2                            \\
			\alpha(i)\equiv_2 \alpha_1(i)+\alpha_2(i)\ (\forall i \in [w]) \\
			\beta(i)=\beta_1(i)=\beta_2(i)\ (\forall i \in [w])
		}}
	\left(
	c_{t_1}[\alpha_1,\beta_1]
	+
	c_{t_2}[\alpha_2,\beta_2]
	\right).
\end{align*}

\paragraph{Relabel node $\rho_{i \to j}$.}
Let $t = \rho_{i \to j}$ be a relabel node with its child $t'$.
After relabeling, no vertex with label $i$ remains, and therefore we must set $c_t[\alpha,\beta]=+\infty$ whenever $\alpha(i)=1$.
Moreover, $i$-vertices and $j$-vertices in $G_{t'}$ are merged into label $j$ in $G_t$.
Thus, a subset $D' \subseteq V(G_{t'})$ with characteristic $(\alpha',\beta')$ induces a subset of $G_t$ with characteristic $(\alpha,\beta)$ satisfying $\alpha(j)\equiv_2 \alpha'(i)+\alpha'(j)$ and $\beta(j)=\beta'(i)=\beta'(j)$.

All other labels remain unchanged.
Thus, we set
\begin{align*}
	c_t[\alpha,\beta]=
	\begin{cases}
		+\infty
		 & (\text{if } \alpha(i)=1)                               \\[1ex]
		\displaystyle
		\min_{\substack{
				\alpha',\beta'                                            \\
				\alpha(k)=\alpha'(k)\ (\forall k \in [w] \setminus \{i,j\}) \\
				\beta(k)=\beta'(k)\ (\forall k \in [w] \setminus \{j\})   \\
				\alpha(j)\equiv_2 \alpha'(i)+\alpha'(j)                     \\
				\beta(j)=\beta'(i)=\beta'(j)
			}
		}
		c_{t'}[\alpha',\beta']
		 & \text{(otherwise.)}
	\end{cases}
\end{align*}

\paragraph{Join node $\eta_{i,j}$.}
Let $t=\eta_{i,j}$ be a join node with its child $t'$.
The join operation adds all edges between $i$-vertices and $j$-vertices in $G_{t'}$.
Let $D' \subseteq V(G_{t'})$ have characteristic $(\alpha',\beta')$.
Then each $i$-vertex gains $|D' \cap U_j^{t'}|\equiv_2 \alpha'(j)$ new neighbors in $D'$, and each $j$-vertex gains $|D' \cap U_i^{t'}|\equiv_2 \alpha'(i)$ new neighbors in $D'$.
\revn{
Since the expression tree $T_G$ is irredundant, there are no edges between labels $i$ and $j$ before applying $\eta_{i,j}$.
Hence every $i$-vertex receives exactly $|D'\cap U_j^{t'}|$ new neighbors in $D'$, and symmetrically every $j$-vertex receives exactly $|D'\cap U_i^{t'}|$ new neighbors in $D'$.
}
Thus, the parity conditions are updated, and we define
\begin{align*}
	c_t[\alpha,\beta]
	=
	\min_{\substack{
			\alpha',\beta'                                            \\
			\alpha(k)=\alpha'(k)\ (\forall k \in [w]) \\
			\beta(k)=\beta'(k)\ (\forall k \in [w]\setminus \{i,j\})    \\
			\beta(i)\equiv_2 \beta'(i)+\alpha'(j)                       \\
			\beta(j)\equiv_2 \beta'(j)+\alpha'(i)
		}}
	c_{t'}[\alpha',\beta'].
\end{align*}

\paragraph{Complexity Analysis.}
For each node of the $w$-expression tree $T_G$, the DP table contains at most $4^w$ states.

Each table entry for introduce, relabel, and join nodes can be computed in time polynomial in $w$. 
To evaluate the transition at a union node efficiently, we use a standard reduction from min-plus XOR convolution to ordinary XOR convolution.

Let $N$ be a positive integer, and let $f,g \colon \mathbb{F}_2^N \to \{0,1,\ldots,M\}$. 
Our goal is to compute the min-plus XOR convolution
\[
    H(c) \coloneq \sum_{a + b = c} F(a) \cdot G(b),
\]
where addition is taken component-wise over $\mathbb{F}_2$.
Introduce an indeterminate $y$ and define $F(a)=y^{f(a)}$ and $G(b)=y^{g(b)}$.
Consider their ordinary XOR convolution
\[
H(c) = \sum_{a+b=c} F(a)G(b) = \sum_{a+b=c} y^{f(a)+g(b)}.
\]
The minimum degree of the polynomial $H(c)$ is exactly $\min_{a+b=c} \bigl( f(a)+g(b) \bigr)$.
Hence, computing the min-plus XOR convolution reduces to computing an ordinary XOR convolution over the polynomial ring $\mathbb{Z}[y]$. 
Since every polynomial has degree at most $n+1$, additions and multiplications over $\mathbb{Z}[y]$ require $O(M)$ and $O(M^2)$ time, respectively. 
Using the Fast Walsh--Hadamard Transform, all values of the ordinary XOR convolution can be computed using $O(N2^N)$ ring operations. 
Consequently, all values of the min-plus XOR convolution can be computed simultaneously in $O(M^2 N2^N)$ time.

We apply this procedure with $N=w$. 
For every fixed function $\beta$, define $f(\alpha)=c_{t_1}[\alpha,\beta]$ and $g(\alpha)=c_{t_2}[\alpha,\beta]$, where every occurrence of $+\infty$ is replaced by $n+1$, since every feasible solution has size at most $n$.
In other words, set $M = n + 1$.
The above algorithm computes all values $c_t[\alpha,\beta]$ simultaneously in $O(n^2w2^w)$ time.
Since there are $2^w$ possible choices of $\beta$, all table entries for a union node can be computed in $O(n^2w4^w)$ time.

Since the $w$-expression tree has $O(n)$ nodes, the overall running time is $O(n^3w4^w)$.
In particular, given a $w$-expression tree of $G$, the problem \prbGen\ can be solved in $4^w\cdot n^{O(1)}$ time.
\section{Definitions of Graph Parameters}

The \emph{vertex cover number} of a graph $G$, denoted by $\vc(G)$, is the minimum size of a vertex set $S \subseteq V(G)$ such that every edge of $G$ has at least one endpoint in $S$.

The \emph{tree-depth} of a graph $G$, denoted by $\td(G)$, is defined recursively as follows.
If $|V(G)| = 1$, then $\td(G) = 1$.
If $G$ is disconnected, then $\td(G)$ is the maximum tree-depth among its connected components.
Otherwise,
\[
\td(G) = 1 + \min_{v \in V(G)} \td(G - v),
\]
where $G - v$ denotes the graph obtained from $G$ by deleting the vertex $v$.
Equivalently, the tree-depth of $G$ is the minimum depth of an \emph{elimination forest} of $G$.
An elimination forest of $G$ is a rooted forest $F$ on $V(G)$ such that every edge of $G$ joins a vertex with one of its ancestors in $F$.
The depth of $F$ is the maximum number of vertices on a path from a root to a vertex.

The \emph{feedback vertex set number} of a graph $G$, denoted by $\fvs(G)$, is the minimum size of a vertex set $S \subseteq V(G)$ such that $G - S$ is acyclic.

The \emph{cluster deletion number} of a graph $G$, denoted by $\cvd(G)$, is the minimum size of a vertex set $S \subseteq V(G)$ such that $G - S$ is a disjoint union of cliques.

The following simple observation will be used repeatedly.
For each of the parameters $\vc$, $\td$, $\fvs$, and $\cvd$, adding a single vertex to a graph increases the parameter value by at most~$1$.
\section{Omitted discussion for \Cref{subsec:kernel}}

Let $Q \subseteq \Sigma^{*} \times \mathbb{N}$ be a parameterized problem.
A \emph{kernelization} for $Q$ is a polynomial-time algorithm $\mathcal{A} \colon \Sigma^{*} \times \mathbb{N} \to \Sigma^{*} \times \mathbb{N}$ such that for every $(x,k) \in \Sigma^{*} \times \mathbb{N}$:
\begin{enumerate}
    \item $(x,k) \in Q$ if and only if $\mathcal{A}(x,k) \in Q$, and
    \item $|\mathcal{A}(x,k)| \le f(k)$, where $f\colon \mathbb{N} \to \mathbb{N}$ is a computable function.
\end{enumerate}
The output $\mathcal{A}(x,k)$ is called a \emph{kernel}.
If $f$ is polynomial, then the kernel is called a \emph{polynomial kernel}.

We employ the framework of OR-cross-compositions~\cite{kernel:BodlaenderJK14}, which is a standard technique for establishing kernelization lower bounds.

\begin{definition}
An equivalence relation $R$ on $\Sigma^{*}$ is a \emph{polynomial equivalence relation} if:
\begin{enumerate}
    \item for all $x,y \in \Sigma^{*}$, whether $(x,y) \in R$ holds can be decided in time polynomial in $|x|+|y|$, and
    \item for every $n \in \mathbb{N}$, the number of equivalence classes of $R$ restricted to strings of length at most $n$ is bounded by a polynomial in $n$.
\end{enumerate}
\end{definition}

\begin{definition}
Let $L \subseteq \Sigma^{*}$ be a language,
let $R$ be a polynomial equivalence relation on $\Sigma^{*}$, and let $Q \subseteq \Sigma^{*} \times \mathbb{N}$ be a parameterized problem.
An \emph{OR-cross-composition} from $L$ into $Q$ (with respect to $R$) is an algorithm that, given $t$ instances $x_1,x_2, \ldots,x_t \in \Sigma^{*}$ that belong to the same equivalence class of $R$, runs in time polynomial in $\sum_{j=1}^{t} |x_j|$ and outputs an instance $(y,k) \in \Sigma^{*} \times \mathbb{N}$ such that:
\begin{enumerate}
    \item the parameter $k$ is bounded by a polynomial in
    $\max_{j \in [t]} |x_j| + \log t$, and
    \item $(y,k) \in Q$ if and only if $x_j \in L$ for some $j \in [t]$.
\end{enumerate}
\end{definition}

We say that $L$ \emph{cross-composes} into $Q$ if there exists an OR-cross-composition from $L$ into $Q$.
The resulting instance $(y,k)$ is called the \emph{composed instance}.

\begin{theorem}[{\cite{book:CyganFKLMPPS15}}]
    \label{thm:cross_compose}
    If an $\NP$-hard language $L$ cross-composes into a parameterized problem $Q$, then $Q$ does not admit a polynomial kernel unless $\coNP \subseteq \NP/\mathrm{poly}$.
\end{theorem}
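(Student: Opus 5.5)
The plan is to argue by contraposition through \emph{OR-distillations}. Assume $Q$ has a polynomial kernel and that the $\NP$-hard language $L$ cross-composes into $Q$. From these two ingredients I will build an algorithm with the following behaviour. It takes $t$ instances $x_1,\dots,x_t$ of $L$, each of length at most $n$. It runs in time polynomial in $\sum_j|x_j|$. It outputs a single string $z$ of length bounded by a polynomial in $n$ alone, such that $z$ belongs to a fixed language $L'$ if and only if some $x_j\in L$. I then invoke the Fortnow--Santhanam distillation lemma. That lemma states that if an $\NP$-hard language has an OR-distillation into \emph{any} language $L'$ with output size $\mathrm{poly}(n)$, then $L\in\coNP/\mathrm{poly}$. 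This gives $\NP\subseteq\coNP/\mathrm{poly}$, which is equivalent to $\coNP\subseteq\NP/\mathrm{poly}$.

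The construction goes as follows.
\begin{enumerate}
\item Reduce to $t\le 2^n$. If $t>2^n$, there is time to solve every instance by brute force in time polynomial in the input length, and then output a trivial yes- or no-string.
\item Group the instances by the polynomial equivalence relation $R$. Instances that are malformed or longer than $n$ are discarded or placed in a dummy class. By the definition of a polynomial equivalence relation, at most $p(n)$ classes remain.
\item Apply the OR-cross-composition to each class $\mathcal{C}_i$. This yields an instance $(y_i,k_i)$ of $Q$ with $k_i\le \mathrm{poly}(n+\log t)=\mathrm{poly}(n)$, and $(y_i,k_i)\in Q$ exactly when some member of $\mathcal{C}_i$ lies in $L$.
\item Run the assumed polynomial kernel on each $(y_i,k_i)$. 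This gives an equivalent instance of size $\mathrm{poly}(k_i)=\mathrm{poly}(n)$.
\item Output the tuple of the at most $p(n)$ kernels as an instance of the language $L'\coloneqq\mathrm{OR}(Q)$, namely the set of tuples containing at least one yes-instance of $Q$.
\end{enumerate}
The total output length is $p(n)\cdot\mathrm{poly}(n)$, and the running time is polynomial, so this is an OR-distillation of $L$ into $L'$.

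The main obstacle is the distillation lemma itself, i.e.\ showing that such a compression yields polynomial advice for $\overline{L}$. I would prove it by the standard covering argument. Fix $n$ and let $\overline{L}_n$ be the set of no-instances of length at most $n$. Choose $t$ to be a large polynomial in $n$. The distillation maps $t$-tuples from $\overline{L}_n$ into the set $\overline{L'}$ restricted to length $\mathrm{poly}(n)$, so the image lies in a set of size $2^{\mathrm{poly}(n)}$.

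Say an output string $s$ \emph{covers} a string $x$ if some $t$-tuple containing $x$, with all other entries in $\overline{L}_n$, maps to $s$. A pigeonhole argument then shows the following: there is an output $s$ that covers at least a $1/2^{\mathrm{poly}(n)/t}$-fraction of the still-uncovered elements of $\overline{L}_n$. Iterating greedily produces a set $A$ of $\mathrm{poly}(n)$ outputs that cover all of $\overline{L}_n$. This $A$ is the advice.

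The resulting $\NP$ procedure with advice $A$ works as follows. On input $x$, it guesses $t-1$ further strings and accepts if the distillation maps the tuple to an element of $A$. Soundness holds because any tuple containing a yes-instance maps into $L'$, and $A\subseteq \overline{L'}$. Hence $\overline{L}\in\NP/\mathrm{poly}$. Since $L$ is $\NP$-hard, this yields $\coNP\subseteq\NP/\mathrm{poly}$, completing the proof.
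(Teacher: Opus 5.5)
Your proof is correct in substance and follows the standard route. The paper gives no proof of its own and takes the theorem as a black box from Cygan et al., where the argument runs the same way:
\begin{itemize}
\item a cross-composition followed by a polynomial kernel yields an OR-distillation of $L$ into $\mathrm{OR}(Q)$;
\item a Fortnow--Santhanam-type covering argument turns that distillation into polynomial advice, which puts $\overline{L}$ in $\NP/\mathrm{poly}$.
\end{itemize}
The book states the distillation for exactly $t(n)$ inputs of equal length, allows output length up to $t(n)\log t(n)$, and takes $t(n)$ to be a large polynomial. Your version, with output length $\mathrm{poly}(n)$, is the original Fortnow--Santhanam form and works equally well. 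Your pigeonhole step is also correct: among the $a^t$ tuples of still-uncovered no-instances, some output has at least $a^t/2^m$ preimages, so it covers at least $a/2^{m/t}$ of them.

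Step~1, however, is wrongly justified. An $\NP$-hard language need not be decidable at all; the halting problem is $\NP$-hard. Even for $L\in\NP$ with witnesses of length $n^2$, the obvious brute force takes $2^{\Theta(n^2)}$ time, which is not polynomial in $tn$ when $t$ is just above $2^n$. So ``solve every instance by brute force'' is not available. Fortunately the step is unnecessary, and either of two fixes works:
\begin{itemize}
\item Delete duplicate inputs first. This does not change the OR, and there are fewer than $(|\Sigma|+1)^{n+1}$ strings of length at most $n$, so afterwards $\log t=O(n\log|\Sigma|)$.
\item Note that your covering argument only ever applies the distillation to $t=t(n)=\mathrm{poly}(n)$ inputs. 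Then $\log t=O(\log n)$, and the bound $k_i\le\mathrm{poly}(n)$ holds with no case distinction.
\end{itemize}

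In Step~2 you also need not treat malformed strings specially. $R$ is an equivalence relation on all of $\Sigma^*$, and its class bound covers every string of length at most $n$, so you simply partition the given strings by $R$. Putting $R$-inequivalent strings into a common dummy class would actually give the cross-composition invalid input. With these repairs the argument is complete.
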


The main result of this section is as follows.

\vcnopolykernel*\label{thm:vc_no_poly_kernel*}

Our proof follows the standard approach used to derive kernelization lower bounds for \prb{Dominating Set} parameterized by the vertex cover number~\cite{book:CyganFKLMPPS15,DS:DomLS14}.
The approach employs \prb{Red-Blue Dominating Set} as an intermediate problem and derives a polynomial-parameter transformation from it.
We adopt the same strategy and introduce the following red-blue variant of \prbOD.

The \textsc{Colored Red-Blue Odd Domination (Col-RBOD)} problem is defined as follows.
The input consists of a bipartite graph $G=(R \cup B,E)$ with bipartition $(R,B)$,
an integer $k$, and a partition of $R$ into $k$ sets $R^1,R^2,\dots,R^k$.
The question is whether there exists a set $X \subseteq R$ such that $|X|=k$, $|X \cap R^i|=1$ for every $i \in [k]$, and every vertex in $B$ has an odd number of neighbors in $X$.
For convenience, we abuse the notation for \prbGen as the one for \prb{Col-RBOD}.

We begin by observing that \prb{Col-RBOD} is $\NP$-hard.

\begin{lemma}
    \label{lem:RBOD_hard}
    \prb{Col-RBOD} is $\NP$-hard.
\end{lemma}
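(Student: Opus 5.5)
The plan is a polynomial-time reduction from a known $\NP$-hard problem whose solutions can be forced to be colorful. The natural source is \prbOinT\ with positive literals, which is $\NP$-complete and was already used for \Cref{thm:NPcomp_planar_bipartite}. I would also keep \prb{Exact Cover} in reserve, since each element must be covered exactly once and a parity condition says exactly one only when combined with suitable counting.

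Given $\phi$ with variables $x_1,\dots,x_n$ and clauses $C_1,\dots,C_m$, set $k\coloneqq n$ and build a Col-RBOD instance as follows.
\begin{itemize}
\item For each variable $x_i$, the color class $R^i$ consists of two red vertices $t_i$ and $f_i$, encoding $x_i=\true$ and $x_i=\false$.
\item For each clause $C_j=(x_a\lor x_b\lor x_c)$, add a blue vertex $c_j$ adjacent to $t_a,t_b,t_c$.
\item For each clause, also add one auxiliary blue vertex adjacent to all of $t_a,t_b,t_c,f_a,f_b,f_c$.
\end{itemize}
Since exactly one of $t_i,f_i$ is chosen, the auxiliary vertex always sees exactly three chosen neighbors. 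It is therefore automatically odd, so it is harmless but also useless; I would simply drop it.

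The parity constraint at $c_j$ then says that the number of true literals in $C_j$ is odd, i.e., one or three. The remaining task is to forbid the all-true case. For this I would add, for each clause $C_j$ and each pair $\{a,b\}$ of its variables, a blue vertex adjacent to $t_a$, $t_b$ and one further red vertex that is chosen exactly when the pair is not both true. This is the delicate point, since red vertices can only be chosen one per color class.

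A cleaner route is to reduce instead from \prb{Exact Cover by 3-Sets} (X3C). Take the universe $U$ with $|U|=3q$ and the family $\mathcal{S}$, and set $k\coloneqq q$. Let each color class $R^i$ be a copy of $\mathcal{S}$, make $B=U$, and join every copy of $S$ to the elements of $S$. Odd coverage allows elements to be covered three times, so I would again need a counting argument. The total number of incidences is $3q$, every element of $U$ needs at least one chosen neighbor, and $|U|=3q$. Hence each element is covered exactly once, and the $q$ chosen sets form an exact cover. Conversely, an exact cover ordered arbitrarily gives a colorful choice. A repeated choice of the same set $S$ in two classes is ruled out automatically, because it would cover the elements of $S$ twice, which is even.

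This counting argument is the crux, and it makes X3C the source I would commit to. The reduction is clearly polynomial, so the remaining step is just to verify both directions as above.
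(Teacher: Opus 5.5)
Your committed argument via \prb{Exact Cover by 3-Sets} is correct, but it takes a different route from the paper, which reduces from \prbOD\ itself. In the paper's construction, each class $R^i$ is a copy of $V(G)$ plus a dummy vertex $z^i$ with no blue neighbors, which absorbs the slack when an ODD-set has fewer than $k$ vertices. It sets $B=V(G)$ and joins the copy $x^i$ to $N_G[x]$. In the backward direction, repeated picks are handled by letting $D$ be the set of vertices chosen an odd number of times, so that $|N_{G'}(v)\cap D'| \equiv_2 |N_G[v]\cap D|$.

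You replace this parity-of-multiplicities argument by a tight double count. Every red vertex has exactly three blue neighbors, so $\sum_{u\in U}|N(u)\cap X| = 3q = |U|$. Since oddness gives at least one chosen neighbor at each $u$, every element is covered exactly once. This yields an exact cover and at the same time rules out choosing the same set in two classes. Your separate remark that a repeat is excluded ``because it would cover the elements twice, which is even'' is not valid on its own, since a third cover could restore oddness. It is the exactly-once conclusion of the count that excludes repeats.

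Your route gives a self-contained reduction from a classical $\NP$-complete problem. It needs no dummy vertices, and it shows hardness already for instances with $|B|=3k$ and all red degrees equal to $3$. The paper's route stays within the problem family under study and presents \prb{Col-RBOD} directly as a colored red-blue encoding of \prbOD, mirroring the standard \prb{Dominating Set} to \prb{Red-Blue Dominating Set} transformation that the kernel lower bound is modeled on.

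The exploratory \prbOinT\ paragraph is unfinished, since the ``delicate point'' of forbidding three true literals is never resolved. Drop it from a write-up.
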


\begin{proof}
    We present a polynomial-time reduction from \prbOD.
    Let $(G,k)$ be an instance of \prbOD.
    We construct an instance $(G',k)$ of \prb{Col-RBOD} as follows.

    For each $i \in [k]$, let $V^i$ be a copy of $V(G)$, and for each $x \in V(G)$, let $x^i$ denote the copy of $x$ in $V^i$.
    We additionally introduce vertices $z^1,z^2,\ldots,z^k$.
    Let $R^i \coloneq V^i \cup \{z^i\}$, $R\coloneq \bigcup_{i=1}^{k} R^i$, and $B\coloneq V(G)$ and define a bipartite graph $G'=(R \cup B,E')$.

    We add edges only between $R$ and $B$ as follows.
    For every vertex $x \in V(G)$ and every $i \in [k]$, add the edge between $x^i \in R^i$ and $x \in B$.
    Furthermore, for every edge $xy \in E(G)$ and every $i \in [k]$, add the edge between $x^i \in R^i$ and $y \in B$.
    This completes the construction. 

    For the correctness, we first suppose that $(G,k)$ is a yes-instance of \prbOD, and let $D \subseteq V(G)$ be an ODD-set of $G$ with $|D| = k' \leq k$.
    We write $D=\{d_1,d_2,\ldots,d_{k'}\}$. 
    We now define a set $D'\subseteq R$ by
    \[
    D'\cap R^i = 
    \begin{cases}
        \{d_i^i\} & \text{if } i\le k',\\
        \{z^i\} & \text{otherwise}.
    \end{cases}
    \]
    Then $|D'|=k$ and $|D'\cap R^i|=1$ for all $i\in [k]$.
    Since each vertex $z^i$ has no neighbors in $B$, for every $v\in B=V(G)$, we obtain 
    \[
    |N_{G'}(v)\cap D'|
    =
    |\{d_i^i \mid d_i\in N_G[v]\}|
    =
    |N_G[v]\cap D|.
    \]
    \revn{Since $D$ is an ODD-set of $G$, the right-hand side is odd.}
    Hence every vertex of $B$ has an odd number of neighbors in $D'$,
    and $(G',k)$ is a yes-instance of \prb{Col-RBOD}.

    Conversely, we suppose that $(G',k)$ is a yes-instance of \prb{Col-RBOD}, and let $D' \subseteq R$ be a set satisfying $|D' \cap R^i| = 1$ for every $i \in [k]$ and the parity conditions for $B$.
    For each vertex $u \in V(G)$, define
    \[
    \mu(u) \coloneq |\{ i\in [k] \mid u^i\in D' \}|.
    \]
    Let $D \coloneq \{u\in V(G) \mid \mu(u)\equiv_2 1\}$.
    Since $|D'| = k$, we have $|D|\le k$.

    Fix an arbitrary vertex $v \in V(G)=B$.
    By construction of $G'$, we have 
    \[
        |N_{G'}(v)\cap D'| = \sum_{u\in N_G[v]} \mu(u) \equiv_2 |N_G[v]\cap D|.
    \]
    Since $D'$ satisfies the parity condition in $G'$, the left-hand side is odd.
    Therefore, $|N_G[v]\cap D|$ is odd.
    As $v$ was arbitrary, every vertex of $G$ is oddly dominated by $D$.
    Consequently, $D$ is an ODD-set of $G$ of size at most $k$, implying that $(G,k)$ is a yes-instance.
\end{proof}

\begin{lemma}
    \label{lem:RBOD_lowerbound}
    \prb{Col-RBOD} does not admit a polynomial kernel when parameterized by $|B| + k$, unless $\coNP \subseteq \NP/\textnormal{poly}$.
\end{lemma}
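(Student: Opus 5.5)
The plan is to give an OR-cross-composition (\Cref{thm:cross_compose}) from an $\NP$-hard language into \prb{Col-RBOD} parameterized by $|B|+k$. For the source language we take \prb{Col-RBOD} itself, which is $\NP$-hard by \Cref{lem:RBOD_hard}. We fix a polynomial equivalence relation $R$ under which two instances are equivalent if they have the same $k$ and the same $|B|$. Malformed strings form one extra class. This has polynomially many classes on inputs of length at most $n$, so it suffices to compose instances $(G_1,k),\dots,(G_t,k)$ that share $|B|=b$. We may assume $t=2^s$ by duplicating instances, so $s=\lceil\log t\rceil$.

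The idea is to put all instances on a common blue side and add a selector gadget. We identify the blue sets, so $B$ consists of $b$ shared vertices $w_1,\dots,w_b$. Red classes $R^1,\dots,R^k$ are the disjoint unions over $j\in[t]$ of the red classes $R^i_j$ of $G_j$, with adjacency to the shared $B$ copied from $G_j$. Choosing one red vertex per class could then mix instances. To forbid this, we add $2s$ extra blue \emph{selector} vertices $p_{\ell,0},p_{\ell,1}$ for $\ell\in[s]$. Let $\mathrm{bin}(j)$ be the $s$-bit binary code of $j$.

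The selector must force all chosen red vertices to come from the same instance, and this is the main obstacle. Plain parity constraints only see sums mod $2$, so a naive bit-encoding fails: two classes choosing different instances could cancel. The fix is to give each class $i$ its own selector vertices $p^i_{\ell,c}$, at the cost of $2sk$ extra blue vertices, and to add $k$ extra red classes that carry the equality checks.

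The construction is as follows. A red vertex of $R^i_j$ is adjacent to $p^i_{\ell,\mathrm{bin}(j)_\ell}$ for every $\ell$. Each extra class $Q^i$ contains one vertex $q^i_j$ for each $j\in[t]$. For $i<k$, $q^i_j$ is adjacent to $p^i_{\ell,\mathrm{bin}(j)_\ell}$ and to $p^{i+1}_{\ell,\mathrm{bin}(j)_\ell}$ for all $\ell$. The vertex $q^k_j$ is adjacent only to $p^k_{\ell,\mathrm{bin}(j)_\ell}$.

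We then check the parity. Each pair $p^i_{\ell,0},p^i_{\ell,1}$ must have odd degree into the solution. A selector of class $i$ sees the choice in $R^i$, the choice in $Q^i$, and (for $i>1$) the choice in $Q^{i-1}$. In $p^k$ this gives exactly two hits, from $R^k$ and $Q^k$, so the parity fails. We therefore adjust the construction so each selector sees exactly one hit when all codes agree. One option is to flip the selector bits for $Q$ vertices, using the complement code, so that each $p$-pair receives hits summing to $1$ on each side only when the codes match. The adjustment will be tuned by checking that agreement of codes $j_i$ (from $R^i$) and $j'_i$ (from $Q^i$) is forced bitwise, and chaining gives $j_1=\dots=j_k$.

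We set the new parameter to $2k$. Then $|B|+k'=b+2sk+2k$, which is polynomial in $\max|x_j|+\log t$.

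For correctness, a solution of some $G_j$ extends by $q^i_j$ for all $i$. Conversely, forced agreement makes all chosen red vertices lie in a single $G_j$, whose blue constraints are exactly those of $G_j$. Together with \Cref{thm:cross_compose}, this gives \Cref{lem:RBOD_lowerbound}.
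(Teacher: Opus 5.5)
\textbf{Verdict: there is a genuine gap in the selector gadget.} Your overall plan is sound. You compose \prb{Col-RBOD} into itself under the relation ``same $k$, same $|B|$'', share the blue side, and use binary codes of length $s=\lceil\log t\rceil$ to force all selected red vertices into one instance. However, the step you yourself call the main obstacle is never carried out, and the repair you suggest cannot work.

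In your gadget, for every $i\ge 2$ the pair $p^i_{\ell,0},p^i_{\ell,1}$ is adjacent to $R^i$, to $Q^i$, and also to $Q^{i-1}$, because $q^{i-1}_j$ is joined to $p^{i}_{\ell,\cdot}$. You overlooked this last class when you counted two hits at $p^k$. Every selected vertex of these three classes is adjacent to exactly one vertex of the pair. So the pair receives exactly three hits in total, and $p^i_{\ell,0}$ and $p^i_{\ell,1}$ cannot both have an odd number of neighbours in the solution. Hence, for $k\ge 2$ and $t\ge 2$, your composed instance is always a no-instance. Switching the $Q$-vertices to the complement code only moves a hit from one vertex of the pair to the other and leaves the total at three. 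It therefore fixes only $p^1$, which is the one pair touched by exactly two classes.

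\textbf{How to repair it.} Every selector pair must be touched by exactly two classes, one using the code and the other the complement code. For example, drop the classes $Q^i$ altogether and keep $k^*=k$. For each $r\in[k-1]$ and $\ell\in[s]$, add blue vertices $p^r_{\ell,0},p^r_{\ell,1}$, where $p^r_{\ell,c}$ is adjacent to all vertices of $R^r_j$ with $\mathrm{bin}(j)_\ell=c$ and to all vertices of $R^k_j$ with $\mathrm{bin}(j)_\ell\neq c$. Then both vertices of the pair have odd degree into the solution if and only if the instances chosen in classes $r$ and $k$ agree in bit $\ell$. This forces $j_1=\dots=j_k$ and gives $|B^*|+k^*=b+2s(k-1)+k$.

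This is exactly the paper's gadget unrolled. When the paper merges two instances, it adds, for each $r\in[k-1]$, a vertex $u_r$ adjacent to $R_1^r\cup R_2^k$ and a vertex $v_r$ adjacent to $R_2^r\cup R_1^k$. It iterates this over the $\lceil\log t\rceil$ levels of a binary tree. With such a gadget in place, your completeness and soundness arguments and your parameter bound go through as written.
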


\begin{proof}
    We prove the claim by giving an OR-cross-composition of \prb{Col-RBOD} into itself parameterized by $|B|+k$.
    
    We first define a polynomial equivalence relation $R$ as follows. 
    Two instances $(G_1 = (R_1 \cup B_1, E_1), k_1)$ and $(G_2 = (R_2 \cup B_2, E_2), k_2)$ of \prb{Col-RBOD} are equivalent under $R$ if and only if $|B_1| = |B_2|$ and \revn{$k_1=k_2$}.
    \revn{
    This relation is clearly decidable in polynomial time.
    Moreover, among instances of size at most $n$, both $|B|$ and $k$ are bounded by $n$, and hence the number of equivalence classes is polynomial in $n$.
    }

    \rev{
    In what follows, we construct an \emph{instance selector} \cite[Chapter~18]{book:fomin2019kernelization} for \prb{Col-RBOD}, that is, a polynomial-time algorithm that combines two equivalent instances into a single one.
    }
    \rev{Formally,} given two equivalent instances $I_1=(G_1=(R_1\cup B_1,E_1),k)$ and $I_2=(G_2=(R_2\cup B_2,E_2),k)$, the algorithm $\rho$ outputs an instance $I^*=(G^*=(R^*\cup B^*,E^*),k^*)$ satisfying the following properties:
    \begin{enumerate}
        \item $I^*$ is a yes-instance if and only if at least one of $I_1$ and $I_2$ is a yes-instance;
        \item $\rho$ runs in time polynomial in $|I_1|+|I_2|$;
        \item $|B^*|+k^* \leq c_1 \cdot (|B_1|+k) + c_2$ for some constants $c_1, c_2$.
    \end{enumerate}
    The following construction serves as an instance selector for \prb{Col-RBOD}. 
    By repeatedly applying it, we can combine an arbitrary number of equivalent instances into a single instance while increasing the parameter by only a logarithmic factor.

    Let $p \coloneq |B_1|=|B_2|$. 
    For $i\in\{1,2\}$, let $R_i^1,R_i^2,\ldots,R_i^k$ be the prescribed partition of $R_i$.
    
    We construct $G^*$ as follows.
    Let $R^* \coloneq R_1 \cup R_2$, and let $R_1^1\cup R_2^1, R_1^2\cup R_2^2, \ldots, R_1^k\cup R_2^k$ be a partition of $R^*$. 
    Identify $B_1$ and $B_2$ by an arbitrary bijection and denote the resulting set by $B'$.
    Introduce two additional sets $U=\{u_1,u_2,\dots,u_{k-1}\}$ and $V=\{v_1,v_2,\dots,v_{k-1}\}$, and define $B^* \coloneq B' \cup U \cup V$.
    Finally, set $k^* \coloneq k$. 

    For each $i\in\{1,2\}$, add all edges between $R_i$ and $B'$ according to the edges of $E_i$.
    Moreover, for every $r\in[k-1]$, add the following edges:
    \begin{itemize}
        \item connect $u_r$ to every vertex of
        $R_1^r \cup R_2^k$;
        \item connect $v_r$ to every vertex of
        $R_2^r \cup R_1^k$.
    \end{itemize}
    This completes the construction of the graph $G^*$.

    Since $|B^*|=p+2(k-1)$, we obtain $|B^*|+k^* \leq p + 3k - 2$.
    \revn{Hence Condition~(3) holds.}
    \revn{The construction is clearly computable in polynomial time, and hence Condition~(2) follows.}

    \revn{We now prove the condition~(1).}
    Suppose that $(G^*,k)$ is a yes-instance, and let $D\subseteq R^*$ be a set such that $|D|=k$, $|D \cap (R_1^i \cup R_2^i)|=1$ for every $i \in [k]$,
    and every vertex in $B^*$ has an odd number of neighbors in $D$.
    \rev{Consider an integer $r\in \{1,2\}$ such that $|D\cap R_r^k|=1$: we may assume that $r=1$ by symmetry, and hence $D\cap R_2^k=\emptyset$.}

    We claim that $D\subseteq R_1$.
    Assume for contradiction that $D$ contains a vertex of $R_2$, \rev{specifically a vertex in} $R_2^r$ with $r\in[k-1]$.
    Since $|D\cap (R_1^r\cup R_2^r)|=1$, we obtain $D\cap R_1^r=\emptyset$.
    By construction, $N_{G^*}(u_r)=R_1^r \cup R_2^k$.
    Since $D\cap R_1^r=\emptyset$ and $D\cap R_2^k=\emptyset$, it follows that $|N_{G^*}(u_r)\cap D|=0$, contradicting the parity condition at $u_r$ with respect to $D$.
    Hence $D\subseteq R_1$.

    Let $D_1 \coloneq D\cap R_1$.
    Then $|D_1\cap R_1^r|=1$ for each $r\in[k]$.
	Furthermore, for every vertex $v\in B'$, its neighborhood inside $R_1$ is exactly its neighborhood in $G_1$.
	Since $D$ contains no vertex of $R_2$, the parity condition at $v$ in $G^*$ coincides with the parity condition in $G_1$.
    Therefore, $(G_1,k)$ is a yes-instance.

    Conversely, suppose that at least one of $I_1$ and $I_2$ is a yes-instance.
    Without loss of generality, assume that $(G_1,k)$ is a yes-instance with solution $D_1\subseteq R_1$.
    Let $D \coloneq D_1$.
    Then $|D\cap (R_1^r\cup R_2^r)|=1$ for all $r \in [k]$.
	\revn{The parity conditions for vertices in $B'$ are satisfied because $D_1$ is a solution of $G_1$.}
    
    For every $r\in[k-1]$, we have $N_{G^*}(u_r)=R_1^r \cup R_2^k$ and $N_{G^*}(v_r)=R_2^r \cup R_1^k$.
    Since $D\subseteq R_1$, 
    \[
    |N_{G^*}(u_r)\cap D|=|D\cap R_1^r|=1
    \quad \text{and} \quad
    |N_{G^*}(v_r)\cap D|=|D\cap R_1^k|=1.
    \]
	\revn{Thus every vertex of $U\cup V$ also satisfies the parity condition.}
	Hence $(G^*,k)$ is a yes-instance.

    \revn{
    Given $t$ equivalent instances, we apply $\rho$ repeatedly in a binary-tree fashion.
    After at most $\ell = \lceil \log t\rceil$ rounds, we obtain a single instance $(G^*=(R^*\cup B^*,E^*), k^*)$ that is a yes-instance if and only if at least one of the input instances is a yes-instance.
    Since each application of $\rho$ increases the parameter by at most a constant, we obtain $|B^*| \leq p + 2\ell(k-1)$ and hence $|B^*| + k^* \leq p + 2\ell(k-1) + k = O(p+k\log t)$.
    Hence this yields an OR-cross-composition of \prb{Col-RBOD} into itself parameterized by $|B|+k$.
    The \Cref{thm:cross_compose} therefore implies that \prb{Col-RBOD} does not admit a polynomial kernel unless
	$\coNP \subseteq \NP/\textnormal{poly}$.
    }
\end{proof}

To prove \Cref{thm:vc_no_poly_kernel}, we give polynomial-parameter transformations from Col-RBOD to \prbGen\ and \prbOD.
The reduction preserves equivalence of instances and increases the parameter only polynomially.
Together with \Cref{lem:RBOD_lowerbound}, this implies the claimed kernelization lower bound.

\medskip
\noindent
\textbf{Proof of \Cref{thm:vc_no_poly_kernel}.}
Let $(G=(R\cup B,E), k)$ be an instance of \prb{Col-RBOD}, where $R$ is partitioned into $R^1, R^2, \ldots, R^k$.

\paragraph{\prbGen\ parameterized by vertex cover number.}
We construct an instance $(G',\Vopen,\Vclose,f,k')$ of \prbGen\ as follows.
For each $i\in[k]$, introduce a new vertex $z_i$ adjacent to every vertex of $R^i$.
Let $V(G') \coloneq V(G) \cup \{z_1, z_2, \ldots, z_k\}$, $\Vopen \coloneq V(G')$, $\Vclose \coloneq \emptyset$, and $k' \coloneq k$.
Define the offset function $f$ by setting $f(v)=0$ for all $v\in R$ and $f(v)=1$ for all $v\in V(G')\setminus R$.

We claim that $(G,k)$ is a yes-instance of \prb{Col-RBOD} if and only if $(G',\Vopen,\Vclose,f,k')$ is a yes-instance of \prbGen.

Suppose that $(G,k)$ is a yes-instance of \prb{Col-RBOD}, and let $D \subseteq R$ be a set satisfying $|D \cap R^i| = 1$ for every $i\in[k]$ and the parity conditions for all vertices of $B$.
Set $D'\coloneq D$.
Since the parity constraints on vertices of $B$ are unchanged, every vertex of $B$ satisfies its parity condition in $G'$.
For each $i\in[k]$, we have $|N_{G'}(z_i)\cap D'| = |D'\cap R^i| = 1$.
Since $f(z_i)=1$, it follows that $|N_{G'}(z_i)\cap D'|+f(z_i)\equiv_2 0$.
Hence every vertex $z_i$ also satisfies its parity condition, and therefore $(G',\Vopen,\Vclose,f,k')$ is a yes-instance of \prbGen.

Conversely, let $D'\subseteq V(G')$ be a PD-set with $|D'|\leq k$.
For each $i\in[k]$, since $N_{G'}(z_i)=R^i$ and $f(z_i)=1$, the parity constraint at $z_i$ implies $|D'\cap R^i| = |N_{G'}(z_i)\cap D'| \equiv_2 1$.
In particular, $D'\cap R^i\neq\emptyset$ for every $i\in[k]$.
Since there are $k$ sets $R^i$ and $|D'|\le k$, it follows that $|D'\cap R^i|=1$ for every $i\in[k]$ and $D'\subseteq R$.
The remaining parity constraints are exactly those of the original instance $(G,k)$.
Therefore, $(G,k)$ is a yes-instance of \prb{Col-RBOD}.

Finally, observe that $G'$ is bipartite with bipartition $(R,\, B\cup \{z_1,z_2,\ldots,z_k\})$.
Hence $B\cup \{z_1,z_2,\ldots,z_k\}$ is a vertex cover of size $|B|+k$.
Therefore, this is a polynomial-parameter transformation from \prb{Col-RBOD} to \prbGen\ parameterized by vertex cover number.

\paragraph{\prbOD\ parameterized by cluster deletion number.}
We construct an instance $(G',k')$ of \prbOD\ as follows.
For each $i\in[k]$, we introduce two new vertices $z_i$ and $z'_i$ to $V(G')$.
Then make both $R^i\cup\{z_i\}$ and $R^i\cup\{z'_i\}$ cliques.
Set $k'\coloneq k$.

We claim that $(G,k)$ is a yes-instance of \prb{Col-RBOD} if and only if $(G',k')$ is a yes-instance of \prbOD.
The only-if direction follows from an argument analogous to the one in the previous section.

Conversely, let $D'\subseteq V(G')$ be an ODD-set with $|D'|\le k$.
Since both $z_i$ and $z'_i$ are adjacent exactly to the vertices of $R^i$, their parity constraints imply that $D' \cap (R^i \cup \{z_i, z'_i\}) \neq \emptyset$ for every $i \in [k]$.
In fact, from $|D'|\leq k$, we obtain $|D' \cap (R^i \cup \{z_i, z'_i\})| = 1$ for every $i \in [k]$.

Moreover, we observe that $|D' \cap R^i| = 1$; otherwise, both $z_i$ and $z'_i$ would be contained in $D'$, a contradiction.
Since the parity constraints on vertices of $B$ are unchanged, the set $D'\cap R$ is a solution to $(G,k)$.
Therefore, $(G,k)$ is a yes-instance of \prb{Col-RBOD}.

Deleting $B \cup \{z_i, z'_i \mid i \in [k]\}$ from $G'$ leaves the disjoint union of cliques $R^1,\dots,R^k$.
Hence $G'$ has a cluster deletion set of size $|B| + 2k$.
Therefore, we obtain a polynomial-parameter transformation from \prb{Col-RBOD} to \prbOD\ parameterized by cluster deletion number.

\paragraph{{\prbOD} parameterized by feedback vertex set number and tree-depth.}

\revn{
First, we may assume without loss of generality that $|R^i|>3k$ for every $i\in[k]$.
Indeed, if $|R^i|\le 3k$ for some $i\in[k]$, we may add isolated vertices to $R^i$ until its size exceeds $3k$.
This operation does not affect the existence of a solution with size at most $k$.
}

We construct an instance $(G'=(V',E'),k')$ of \prbOD\ as follows.
For every vertex $r\in R$, introduce a copy $\tilde{r}$ of $r$.
For each $i\in[k]$, let $\tilde{R}^i \coloneq \{\tilde{r}\mid r\in R^i\}$ and $\tilde{R} \coloneq \bigcup_{i\in [k]} \tilde{R}^i$.
Furthermore, for every $i\in[k]$, introduce vertices $x^i$, $y^i$, and $z_1^i, z_2^i,\ldots,z_{3k+1}^i$.
Define $X\coloneq \{x^i\mid i\in [k]\}$, $Y\coloneq \{y^i\mid i\in [k]\}$, $Z^i\coloneq \{z_j^i \mid j\in [3k+1]\}$, and $Z\coloneq \bigcup_{i\in [k]} Z^i$.
The vertex set of $G'$ is $V'\coloneq X\cup Y\cup Z\cup R\cup \tilde{R} \cup B$.

Starting from the edges of $G$, we add the following edges.
For every $i\in[k]$, every $r\in R^i$, and every $j\in[3k+1]$, add the edges $z_j^i r$, $x^i r$, $x^i \tilde{r}$, and the edge $x^iy^i$.
Finally, for every $r\in R$, add the edge $r\tilde{r}$.


Set $k'\coloneq 3k$.
This completes the construction.

The proof of the following claim is deferred.

\begin{claim}
    \label{clm:fvs_td_bound}
    The feedback vertex set number and the tree-depth of $G'$ are bounded by a polynomial in $|B|+k$.
\end{claim}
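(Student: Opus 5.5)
We bound both parameters by exhibiting a small deletion set whose removal leaves a forest of small depth. Take
\[
S \coloneqq B \cup X \cup Y,
\]
so $|S| = |B| + 2k$. We argue that $G' - S$ is a forest and that each of its components has constant tree-depth. Since deleting a vertex lowers tree-depth by at most one, this gives $\fvs(G') \le |B|+2k$ and $\td(G') \le |B|+2k+c$ for a constant $c$. Both are polynomial (indeed linear) in $|B|+k$.

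The key step is to read off the edges of $G' - S$ from the construction. The edges incident to $B$, $X$ and $Y$ disappear; these are the original edges of $G$ (all between $R$ and $B$), the edges $x^i r$, $x^i\tilde r$, and $x^iy^i$. What remains consists of:
\begin{itemize}
\item the edges $z^i_j r$ for $r\in R^i$ and $j\in[3k+1]$;
\item the edges $r\tilde r$.
\end{itemize}
Each $\tilde r$ is then a pendant vertex attached to $r$. However, the $z^i_j$ are adjacent to all of $R^i$, so $Z^i \cup R^i$ induces a complete bipartite graph $K_{3k+1,|R^i|}$, which has many cycles. Hence $S$ is not yet a feedback vertex set. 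This is the main obstacle: $Z$ is too large to delete, since $|Z| = k(3k+1)$ is polynomial in $k$ but larger than needed.

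We therefore enlarge the deletion set to
\[
S' \coloneqq S \cup Z,
\]
of size $|B| + 2k + k(3k+1) = O(|B| + k^2)$. This is still polynomial in $|B|+k$, which is all the claim requires. After deleting $S'$, the only remaining edges are the $r\tilde r$, so $G' - S'$ is a perfect matching together with isolated vertices. This is a forest, and each component has tree-depth at most $2$.

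It follows that $\fvs(G') \le |S'|$. For tree-depth, we build an elimination forest that places the vertices of $S'$ in a chain at the top, followed by depth-$2$ trees for the matching edges, giving $\td(G') \le |S'| + 2$. Every edge of $G'$ joins a vertex with one of its ancestors: edges between two vertices of $S'$ lie on the chain, edges from $S'$ to the rest go to an ancestor on the chain, and the edges $r\tilde r$ lie inside their own small tree. Both parameters are therefore $O(|B|+k^2)$, which proves the claim.
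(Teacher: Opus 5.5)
Your proof is correct and, once you replace the first attempt $S = B\cup X\cup Y$ by $S' = B\cup X\cup Y\cup Z$, it is exactly the paper's argument. Deleting these $3k^2+3k+|B|$ vertices leaves the matching $G'[R\cup\tilde R]$, which gives $\fvs(G')\le 3k^2+3k+|B|$, and stacking the deleted vertices as a chain above a depth-$2$ elimination forest gives $\td(G')\le 3k^2+3k+|B|+2$.
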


We now prove correctness.
Suppose first that $(G,k)$ is a yes-instance of \prb{Col-RBOD}.
Let $D = \{r^1, r^2, \dots, r^k\} \subseteq R$ be a solution, where $D\cap R^i=\{r^i\}$ for $i\in [k]$.
Define $D'\coloneq \{r^i,\tilde{r}^i,x^i\mid i\in [k]\}$.
Clearly, $|D'|=3k$.

Fix an integer $i\in [k]$.
We have $N_{G'}[x^i]\cap D'=\{r^i, \tilde{r}^i, x^i\}$ and $N_{G'}[y^i]\cap D'=\{x^i\}$, and for every $j\in[3k+1]$, $N_{G'}[z_j^i]\cap D'=\{r^i\}$.
Moreover, 
\[
N_{G'}[r^i]\cap D'=N_{G'}[\tilde{r}^i]\cap D'=\{r^i, \tilde{r}^i, x^i\},
\]
while for every $r\in R^i\setminus\{r^i\}$,
\[
N_{G'}[r]\cap D'=N_{G'}[\tilde{r}]\cap D'=\{x^i\}.
\]
Hence every vertex outside $B$ satisfies the parity condition.

Since the edges incident with vertices of $B$ are exactly those inherited from $G$, every vertex of $B$ satisfies the parity condition with respect to $D'$.
Moreover, since $D$ is an ODD-set of $G$, all vertices of $D$ satisfy the parity condition in $G'$ as well.
Therefore, $D'$ is an ODD-set of $G'$ of size $3k$.



Conversely, let $D'$ be an ODD-set of $G'$ with $|D'|\le 3k$.
We first show that $R^i\cap D'\neq\emptyset$ for every $i\in[k]$. 
Suppose otherwise, and fix $i\in[k]$ such that $R^i\cap D'=\emptyset$.
Since $N_{G'}[z_j^i]=R^i\cup\{z_j^i\}$, the parity condition forces $z_j^i\in D'$ for every $j\in[3k+1]$.
Hence $Z^i\subseteq D'$, implying $|D'|\ge 3k+1$, a contradiction.


Next, we show that $\tilde{r}^i\cap D'\neq\emptyset$ for every $i\in[k]$. 
Suppose otherwise, and fix $i\in[k]$ such that $\tilde{R}^i\cap D'=\emptyset$.
By the previous paragraph, there exists a vertex $r\in R^i\cap D'$.
Since $|R^i|>3k$ and $|D'|\le 3k$, there also exists a vertex $r'\in R^i\setminus D'$.
Observe that  $N_{G'}[\tilde{r}] = \{ r, \tilde{r}, x^i \}$ and $N_{G'}[\tilde{r}'] = \{ r', \tilde{r}', x^i \}$.
Since $r\in D'$ while $r',\tilde{r},\tilde{r}'\notin D'$, 
\revn{the sets $N_{G'}[\tilde{r}]\cap D'$ and $N_{G'}[\tilde{r}']\cap D'$ differ by exactly one vertex, regardless of whether $x^i\in D'$.}
\revn{Hence $|N_{G'}[\tilde{r}]\cap D'| \not\equiv_2 |N_{G'}[\tilde{r}']\cap D'|$, contradicting the assumption that $D'$ is an ODD-set.}


Furthermore, since $N_{G'}[y^i]=\{x^i,y^i\}$, the parity condition implies $\{x^i, y^i\} \cap D' \neq \emptyset$ for every $i \in [k]$. 
In summary, for each $i\in [k]$, it holds that $R^i\cap D' \neq\emptyset$, $\tilde{R}^i \cap D' \neq \emptyset$, and $\{x^i, y^i\} \cap D' \neq \emptyset$.
Since the three sets $R^i$, $\tilde{R}^i$, and $\{x^i, y^i\}$ are pairwise disjoint and $|D'| \le 3k$, it follows that $|R^i \cap D'| = 1$ for each $i \in [k]$.

Now, let $D \coloneq D' \cap R$.
Since the edges incident with vertices of $B$ are exactly those inherited from $G$, the set $D$ satisfies the parity condition for every vertex of $B$.
Moreover, we have $|R^i \cap D| = |R^i \cap D'| = 1$ for every $i \in [k]$.
Therefore, $D$ is an ODD-set of $G$ and $(G,k)$ is a yes-instance.

\begin{proof}[Proof of the claim]
    Consider the induced subgraph $G[R\cup \tilde{R}]$.
    Since this graph is a disjoint union of copies of $K_2$, it is a forest.
    Hence, its feedback vertex set number is $0$, and its tree-depth is $2$.
    The remaining vertices belong to $X \cup Y \cup Z \cup B$, whose size is $|X|+|Y|+|Z|+|B|=k+ k+ k(3k+1)+|B| = 3k^2 + 3k + |B|$.
    
    \revn{Deleting all vertices of $X\cup Y\cup Z\cup B$ leaves the forest $G'[R\cup\tilde{R}]$.
    Therefore, the feedback vertex number of $G'$ is at most $3k^2+3k+|B|$.}

    \revn{Furthermore, placing all vertices of $X\cup Y\cup Z\cup B$ above an elimination forest of $G'[R\cup\tilde{R}]$ shows that tree-depth of $G'$ is at most $3k^2+3k+|B|+2$.}
    
    Hence both the feedback vertex set number and the tree-depth of $G'$ are bounded by a polynomial in $|B|+k$.
\end{proof}












\end{document}